\def\frak#1\mathfrak{#1}
\def\bbbc{{\mathbb{C}}}
\def\bbbr{{\Bbb R}}
\def\pvint{-{\kern-1.30em}\int}
\def\Ref#1#2{(\ref{#2})}
\def\openone{\leavevmode\hbox{\small1\kern-0.355em\normalsize1}}
\def\biglb{\big[\hspace*{-.7mm}\big[}
\def\bigrb{\big]\hspace*{-.7mm}\big]}
\def\Biglb{\Big[\hspace*{-1.4mm}\Big[}
\def\Bigrb{\Big]\hspace*{-1.3mm}\Big]}
\def\Res{\mathop{\mbox{Res}\,}\limits}
\def\Bigglb{\Bigg[\hspace*{-1.4mm}\Bigg[}
\def\Biggrb{\Bigg]\hspace*{-1.3mm}\Bigg]}
\def\bPhi{{\boldsymbol \Phi}}
\def\bP{{\boldsymbol P}}
\def\bQ{{\boldsymbol Q}}
\def\bPsi{{\boldsymbol \Psi}}
\def\bTheta{{\boldsymbol \Theta}}
\def\Res{\mathop{\mbox{Res}\,}\limits}
\def\wedgecomma{\mathop{\wedge}\limits_{'}}
\def\ad{\mbox{ad}\,}
\def\tr{\mbox{tr}\,}
\def\Im{\mbox{Im}\,}
\def\Re{\mbox{Re}\,}
\def\pd#1,#2{\frac{\partial#1}{\partial#2}}
\def\d#1,#2{\frac{d#1}{d#2}}
\def\lpd#1,#2{\frac{\stackrel{\rightarrow}{\partial#1}}{\partial#2}}
\def\rpd#1,#2{\frac{\stackrel{\leftarrow}{\partial#1}}{\partial#2}}
\def\col#1,#2,#3,#4 {\begin{array}{c}#1\\ #2\\ #3\\#4 \end{array}}
\newtheorem{theorem}{Theorem}
\newtheorem{remark}{Remark}
\newtheorem{lemma}{Lemma}
\newtheorem{Prop}{Proposition}
\newtheorem{corollary}{Corollary}
\newtheorem{example}{Example}
\numberwithin{equation}{section}
\begin{document}
\title[Complete integrability of NL NLS]{Complete integrability of Nonlocal Nonlinear Schr\"odinger equation}
\author{V. S. Gerdjikov$^1$, A. Saxena$^2$ \\
$^1$ Institute of Nuclear Research and Nuclear Energy,\\
Bulgarian Academy of Sciences,\\
72 Tsarigradsko chausee, Sofia 1784, Bulgaria\\
e-mail: gerjikov@inrne.bas.bg\\[5pt]
$^2$Los Alamos National Laboratory\\
Theoretical Division and Center for Nonlinear Studies \\
Los Alamos, NM 87545, USA\\
e-mail: avadh@lanl.gov
}

\begin{abstract}
Based on the completeness relation for the squared solutions of the Lax operator $L$ we show that a subset
of nonlocal equations from the hierarchy of nonlocal nonlinear Schr\"odinger equations (NLS)  is a completely
integrable system.  The spectral properties of the Lax operator indicate that there are two types of soliton
solutions.  The relevant action-angle variables are parametrized by the scattering data of the Lax operator.
The notion of the symplectic basis, which directly maps the variations of the potential of $L$
to the variations of the action-angle variables has been generalized to the nonlocal case.
We also show that the inverse scattering method can be viewed as a generalized Fourier transform.  
Using the trace identities and the symplectic basis we construct the  hierarchy Hamiltonian structures 
for the  nonlocal NLS equations.

\end{abstract}

\maketitle
\tableofcontents

\section{Introduction}

Nonlocal, nonlinear equations arise in a variety of physical contexts ranging from hydrodynamics to optics
to condensed matter and high energy physics.  The integrability of such equations is an important issue not
only for the physical properties and soliton solutions but also in its own right, see Refs. \cite{AbBak,AblMus,AKevFra,Mats,%
MorKon,PerKon,AblMus2,Yan} and the numerous references therein. We note that the nonlocal NLS and its associated symmetry reductions are
contained in \cite{AblMus} and its inverse scattering transform is given in \cite{AblMus2}.

In this context, first we shall consider the generic AKNS system \cite{AKNS}
\begin{equation}\label{eq:AKNS0}\begin{split}
L\psi  &\equiv i {d\psi   \over dx }  + U(x,t,\lambda ) \psi (x,t,\lambda ) =0, \\
U(x,t,\lambda ) &=  q(x,t) - \lambda \sigma _3, \qquad q(x,t) = \left(\begin{array}{cc} 0 & q_+ \\
q_- & 0  \end{array}\right).
\end{split}\end{equation}
The potential $q(x,t)$ belongs to the class of smooth functions vanishing fast enough for $x\to\pm\infty$. By generic
here we mean that the complex-valued functions $q_+(x,t)$ and $q_-(x,t)$ are independent.

One of the paradigms related to the generic AKNS system allows Lax representation with an $M$-operator of the form:
\begin{equation}\label{eq:AKNS4}\begin{split}
M\psi  &\equiv i {d\psi   \over dt }  + V(x,t,\lambda ) \psi (x,t,\lambda ) =0, \\
V(x,t,\lambda ) &=  V_0(x,t) + \lambda q(x,t) - \lambda^2 \sigma _3, \\
V_0(x,t) &= \frac{1}{2} q_+q_- \sigma_3 + \frac{i}{4} [\sigma_3, q_x].
\end{split}\end{equation}
The two operators $L$ and $M$ commute identically with respect to $\lambda$ provided $q_\pm (x,t)$ satisfy the
nonlinear evolution equations (NLEE):
\begin{equation}\label{eq:gnls}\begin{split}
i \frac{\partial q_+}{ \partial t } + \frac{1}{2} \frac{\partial^2 q_+}{ \partial x^2 } + q_+^2 q_-(x,t) &=0, \\
-i \frac{\partial q_-}{ \partial t } + \frac{1}{2} \frac{\partial^2 q_-}{ \partial x^2 } + q_+ q_-^2(x,t) &=0.
\end{split}\end{equation}

This system, which we will call a generalized nonlinear Schr\"odinger (NLS) equatio system, can be simplified by applying
one of the next reductions:
\begin{description}
  \item[A)]  $q_+(x,t) = \epsilon_1 q_-^*(x,t)$,  $\epsilon_1^2=1$
  \item[B)] $ q_+(x,t) = \epsilon_3 q_-(x,t)$,  $\epsilon_3^2=1$
  \item[C)] $ q_+(x,t) = \epsilon_2 q_-^*(-x,t)$,  $\epsilon_2^2=1$
\end{description}

The class A) NLEE  contains the famous NLS equation \cite{ZaSha} ($q_+=u(x,t) $)
\begin{equation}\label{eq:nls}\begin{split}
i \frac{\partial u}{ \partial t } + \frac{1}{2} \frac{\partial^2 u}{ \partial x^2 } + |u|^2 u(x,t) &=0,
\end{split}\end{equation}
whose integrability was discovered by Zakharov and Shabat \cite{ZaSha,ZMNP}.
The class B) reduction allowed AKNS to integrate  the sine-Gordon and the mKdV equations \cite{AKNS,FaTa}. Here
we just note that the $M$-operators for the sine-Gordon and mKdV equations have different $\lambda$-dependence.

The last  reduction C) has been used recently to analyze the class containing the nonlocal NLS eq.
\begin{equation}\label{eq:nls2}\begin{split}
i \frac{\partial u}{ \partial t } + \frac{1}{2} \frac{\partial^2 u}{ \partial x^2 } + u^2 (x,t) u^*(-x,t) &=0,
\end{split}\end{equation}
which recently became physically important \cite{AblMus,AblMus2}.

The paper is organized as follows.  Section 2 contains some well known facts about the spectral theory of $L$ such as:
a) the construction of the  fundamental analytic solutions (FAS);  b) the mappings between the potential $q(x,t)$ (resp. the variation of the potential $\delta q(x,t)$)
and the scattering data (resp. the variation of the scattering data) based on the Wronskian relations.
In Section 3 we prove the completeness relation for the `squared solutions' of $L$ and derive the expansions of $q(x,t)$ and
$\sigma_3 \delta q(x,t)$ over them. We also introduce the recursion operators for which the `squared solutions' are
complete sets of eigenfunctions. In Section IV we use the results of Section III to derive the fundamental properties of the
generic NLEE.  These include description of the class of NLEE related to $L$, their integrals of motion, trace identities and the
 action-angle variables for the generic NLEE related to $L$ (\ref{eq:AKNS0}).
In the next Section V we construct the hierarchy of Hamiltonian structures.   We demonstrate
that the symplectic basis of squared solutions is a very convenient tool in this analysis. It allows one to prove effectively the
compatibility of the hierarchy of Poisson brackets (subsection V.1) and the compatibility of the hierarchy of symplectic forms $\Omega_{(m)}^\mathbb{C}$.
The symplectic basis allows us to give a simple proof of the fact that all $\Omega_{(m)}^\mathbb{C}$ acquire canonical form in terms of the
action-angle variables. In Section VI we analyze the effects of the local (subsection VI.1) and non-local involutions (subsection VI.2) on the
Hamiltonian hierarchies and on the action-angle variables. In both cases we derive the action-angle variables in terms of the scattering data
of the corresponding reduced Lax operator.
In the case of local  involutions we reproduce the well known results \cite{AKNS,Vg-EKh1,Vg-EKh2,KaNew,GeYaV}. In the case of non-local involutions
the results are new and provide us with the action-angle variables for the nonlocal NLS. In the last Section VII we discuss the results and draw conclusions.

\section{Preliminaries }
\subsection{Direct and Inverse Scattering Problem for the Generic AKNS system }

We start by briefly outlining the main results from the direct and inverse scattering problems for the generic AKNS system
on the class of potentials $q(x,t)$ that are smooth functions vanishing fast enough for $x\to\pm\infty$. The basic tools in this theory
are the Jost solutions, which are introduced by:
\begin{equation}\label{eq:AKNS1}\begin{split}
\lim_{x\to -\infty} e^{-i \lambda \sigma_3 x}\phi(x,t,\lambda) & = \openone, \\
\lim_{x\to \infty} e^{-i \lambda \sigma_3 x}\psi(x,t,\lambda)  &= \openone,
\end{split}\end{equation}
and the scattering matrix
\begin{equation}\label{eq:AKNS2}\begin{split}
T(\lambda, t) \equiv \hat{\psi} (x,t,\lambda) \phi(x,t,\lambda) = \left(\begin{array}{cc} a^+ & -b^- \\
b^+ & a^- \end{array}\right) .
\end{split}\end{equation}

Time dependence of $T(\lambda,t)$ is given by
\begin{equation}\label{eq:AKNS5}\begin{split}
i \frac{\partial T}{ \partial t } - \lambda^2 [\sigma_3, T(\lambda,t)]=0 .
\end{split}\end{equation}

The analyticity properties of the Jost solutions and the FAS of $L$ are given as follows:
\begin{equation}\label{eq:Fas0}\begin{aligned}
\psi(x,t,\lambda) &= |\psi^-,  \psi^+|, & \quad \phi(x,t,\lambda) &= |\phi^+,  \phi^-|, \\
\chi^+(x,t,\lambda) &= |\phi^+,  \psi^+|, & \quad \chi^-(x,t,\lambda) &= |\psi^-,  \phi^-|, \\
\det \chi^\pm (x,t,\lambda) &= a^\pm(\lambda).
\end{aligned}\end{equation}
Here the superscript $+$ (resp. $-$) means that the corresponding  column or function
allows an analytic extension for $\lambda \in \mathbb{C}_+$ (resp. for $\lambda \in \mathbb{C}_-$).

In what follows we will also need  the asymptotics of $\chi^\pm (x,\lambda)$ for $x\to\pm\infty$. From (\ref{eq:AKNS2})
and .r(\ref{eq:Fas0}) we find
\begin{equation}\label{eq:limx0}\begin{aligned}
\lim_{x\to -\infty} e^{i\lambda \sigma_3 x} \chi^+(x,\lambda) & = S^+(\lambda), \\
\lim_{x\to -\infty} e^{i\lambda \sigma_3 x} \chi^-(x,\lambda) & = S^-(\lambda), \\
\lim_{x\to \infty} e^{i\lambda \sigma_3 x} \chi^+(x,\lambda) & = T^-(\lambda), \\
\lim_{x\to \infty} e^{i\lambda \sigma_3 x} \chi^-(x,\lambda) & = T^+(\lambda),
\end{aligned}\end{equation}
where
\begin{equation}\label{eq:STpm}\begin{aligned}
S^+(x,\lambda) &=  \left(\begin{array}{cc} 1 & b^- \\ 0 & a^+  \end{array}\right), &\;
S^-(x,\lambda) &=  \left(\begin{array}{cc} a^- & 0 \\ -b^+ & 1  \end{array}\right), \\
T^+(x,\lambda) &=  \left(\begin{array}{cc} 1 & -b^- \\ 0 & a^-  \end{array}\right), &\;
T^-(x,\lambda) &=  \left(\begin{array}{cc} a^+ & 0 \\ b^+ & 1  \end{array}\right) .
\end{aligned}\end{equation}
We will also need  the asymptotics of the FAS for $\lambda \to\infty$:
\begin{equation}\label{eq:chilim}\begin{split}
\lim_{\lambda\to\infty} \chi^\pm(x,\lambda) e^{i\lambda\sigma_3 x} = \openone.
\end{split}\end{equation}
As an immediate consequence of  (\ref{eq:chilim}) and the last line of eq. (\ref{eq:Fas0}) there follows:
\begin{equation}\label{eq:apmlim}\begin{split}
 \lim_{\lambda\to\infty}a^\pm(\lambda) =1.
\end{split}\end{equation}

\subsection{The Wronskian relations}\label{sec:l5-1}

The analysis of the mapping $\mathcal{ F} \colon \mathcal{ M} \to
\mathcal{ T} $ between the class of the allowed potentials
$\mathcal{ M} $ and the scattering data of $L $ starts with the so-called Wronskian relations. As we shall see,
they would allow us to
\begin{enumerate}

\item  formulate the idea that the ISM is a generalized Fourier transform (GFT);

\item  determine explicitly the proper generalizations of the usual exponents;

\item introduce the skew--scalar product  on $\mathcal{M} $ which endows it with a symplectic structure.

\end{enumerate}

All these ideas  will be worked out for the generic Zakharov-Shabat system (\ref{eq:AKNS0}).
Along with it, we will need also
\begin{equation}\label{eq:I.3}\begin{split}
i {d\hat{\psi} \over dx }- \hat{\psi} (x,t,\lambda )U(x,t,\lambda ) =0 ,
\end{split}\end{equation}
\begin{multline}\label{eq:I.4}
i {d \delta \psi   \over dx }  + \delta U(x,t,\lambda ) \psi
(x,t,\lambda ) \\ + U(x,t,\lambda ) \delta \psi (x,t,\lambda ) =0 ,
\end{multline}
\begin{multline}\label{eq:I.5}
 i {d\dot{\psi} \over dx }  + \dot{U}(x,t,\lambda) \psi (x,t,\lambda )\\
  + U(x,t,\lambda ) \dot{\psi} (x,t,\lambda ) =0 ,
\end{multline}
which one can associate with (\ref{eq:AKNS0}).
By $\hat{\psi} $ above we denote $\psi ^{-1} $;  $\delta \psi  $ is the variation of the Jost solution
which corresponds to a given variation $\delta q(x,t) $ of the potential;  by dot we denote the derivative
with respect to the spectral parameter; for example $\dot{U}(x,t,\lambda ) = -\sigma _3 $.

\subsection{The mapping from $q $ to $\mathcal{ T} $}\label{ssec:wr1}

The first Wronskian identity reads:
\begin{multline}\label{eq:wr.1}
\left. \left( \hat{\chi }\sigma _3 \chi (x,\lambda ) - \sigma _3\right) \right|_{-\infty }^{\infty } \\
=  - i\int_{-\infty }^{\infty } dx\, \hat{\chi }[q(x), \sigma_3 ] \chi (x,\lambda ),
\end{multline}
where $\chi (x,\lambda ) $ can be any fundamental
solution of $L $. For convenience we choose them to be the FAS introduced earlier.

Skipping the details we obtain the following relations between  the reflection coefficients and the potential:
\begin{equation}\label{eq:wr.12}\begin{aligned}
\rho^\pm(\lambda ) &\equiv  \frac{b^\pm}{a^\pm} = {i  \over (a^\pm (\lambda ))^2 } \biglb q(x), \bPhi^\pm(x,\lambda ) \bigrb , \\
\tau ^\pm(\lambda ) &\equiv  \frac{b^\mp}{a^\pm}  = {i  \over (a^\pm (\lambda ))^2 } \biglb q(x), \bPsi^\pm(x,\lambda ) \bigrb ,
\end{aligned}\end{equation}
where $\mathcal{ E}_+^+(x,\lambda )$ are the `squared solutions'
\begin{eqnarray}\label{eq:wr.6}
\mathcal{ E}_+^+(x,\lambda ) = \chi ^+(x,\lambda ) \sigma _+ \hat{\chi }^+(x,\lambda ) .
\end{eqnarray}
By $\biglb X , Y \bigrb$ we denote the skew-scalar product.
\begin{multline}\label{eq:wr.7}
\biglb X , Y \bigrb  \equiv {1  \over  2} \int_{-\infty }^{\infty }
dx\, \tr (X(x), [\sigma _3 , Y(x)]) \\  = - \biglb Y , X \bigrb .
\end{multline}
The ``squared'' solutions $\bPhi^\pm(x,\lambda ) $ and   $
\bPsi^\pm(x,\lambda ) $ are defined by:
\begin{multline}\label{eq:wr.14}
\bPhi^\pm(x,\lambda ) = a^\pm (\mathcal{ E}_\pm^\pm(x,\lambda
))^{\rm f}\\ = \left( \begin{array}{cc} 0 & \pm (\phi _1^\pm (x,\lambda ))^2 \\
\mp (\phi _2^\pm (x,\lambda ))^2 & 0\end{array} \right) ,
\end{multline}
\begin{multline}\label{eq:wr.15}
\bPsi^\pm(x,\lambda ) = a^\pm (\mathcal{E}_\mp^\pm(x,\lambda ))^{\rm f}\\
 = \left( \begin{array}{cc} 0 & \mp (\psi _1^\pm (x,\lambda ))^2 \\
\pm (\psi _2^\pm(x,\lambda ))^2 & 0\end{array} \right) .
\end{multline}
Here and below by $X^{\rm f}$ we mean the off-diagonal part of the matrix $X$.

These ``squared'' solutions effectively coincide with the ones
that appeared originally in \cite{AKNS,Kaup,KaNew,Vg-EKh1,Vg-EKh2}; the difference is that here we have used
the gauge covariant formulation proposed in \cite{GeYa}, see also \cite{GeYaV}.

\subsection{The mapping from $\delta q $ to $\delta \mathcal{ T}$}\label{ssec:wr2}

The second type of Wronskian relations, relates the variation of the potential $\delta q(x) $
to the corresponding variations of the scattering data. To
this purpose we start with the identity:

\begin{eqnarray}\label{eq:wr.18}
\left. \hat{\chi }\delta  \chi (x,\lambda ) \right|_{-\infty
}^{\infty } = i\int_{-\infty }^{\infty } dx\, \hat{\chi }\delta
q(x)\chi (x,\lambda ) ,
\end{eqnarray}
which gives:
\begin{eqnarray}\label{eq:wr.21}
\delta \rho ^\pm(\lambda ) &=&  {\mp i  \over 2(a^\pm(\lambda ))^2}
\biglb [\sigma _3 ,\delta q(x)], \bPhi^\pm (x,\lambda ) \bigrb , \\
\label{eq:wr.22} \delta \tau ^\pm(\lambda ) &=&  {\pm i  \over
2(a^\pm(\lambda ))^2 } \biglb [\sigma _3 ,\delta q(x)], \bPsi^\pm
(x,\lambda ) \bigrb ,
\end{eqnarray}
and
\begin{equation}\label{eq:dAlam}
\delta \mathcal{ A}(\lambda ) =- {i  \over 4a^\pm (\lambda ) }
\biglb [\sigma _3 ,\delta q(x)], \bTheta^\pm (x,\lambda ) \bigrb .
\end{equation}
Here $\bPsi^\pm(x,\lambda ) $ and $\bPhi^\pm(x,\lambda ) $ are the
same ``squared'' solutions as in (\ref{eq:wr.14}), (\ref{eq:wr.15})
\begin{multline}\label{eq:Theta}
\bTheta^\pm(x,\lambda ) = a^+(\lambda ) (\chi ^\pm (x,\lambda )
\sigma _3 \hat{\chi }^\pm(x,\lambda ))^{\rm f} \\
= \left( \begin{array}{cc} 0 & -2(\phi _1^\pm \psi_1^\pm)(x,\lambda ) \\
2(\phi _2^\pm \psi _2^\pm)(x,\lambda ) & 0 \end{array} \right),
\end{multline}
and $\mathcal{ A}(\lambda ) =\pm \ln a^\pm(\lambda) $ for $\lambda\in \mathbb{C}_\pm$.

These relations are basic in the analysis of the NLEE related to
the Zakharov-Shabat system and their Hamiltonian structures.
 We shall use them later assuming
that the variation of $q(x) $ is due to its time evolution. In
this case $q(x,t) $ depends on $t $ in such a way, that it
satisfies certain NLEE. Then we consider variations of the type:
\begin{eqnarray}\label{eq:wr.23}
\delta  q(x,t) = \pd{q},{t} \delta t + \mathcal{ O} ((\delta
t)^2).
\end{eqnarray}
Keeping only the first order terms with respect to $\delta t $ we find:
\begin{eqnarray}\label{eq:wr.24}
\rho_t ^\pm(\lambda ) &=& {\mp i  \over 2(a^\pm(\lambda ))^2 }
\biglb [\sigma _3 ,q_t(x)], \bPhi^\pm (x,\lambda ) \bigrb , \\
\label{eq:wr.25} \tau_t ^\pm(\lambda ) &=& {\pm i  \over
2(a^\pm(\lambda ))^2 } \biglb [\sigma _3 ,q_t(x)], \bPsi^\pm (x,\lambda ) \bigrb.
\end{eqnarray}

There is one more type of Wronskian relations, namely:
\begin{multline}\label{eq:wr.18'}
\left. \left( \hat{\chi }\dot{ \chi} (x,\lambda ) +ix\sigma_3 \right) \right|_{-\infty }^{\infty }\\ =
-i\int_{-\infty }^{\infty } dx\, \left(\hat{\chi }  \sigma_3\chi (x,\lambda ). -\sigma_3\right).
\end{multline}
Treating it analogously to the above cases we obtain:
\begin{multline}\label{eq:DotA}
\frac{\partial \mathcal{A}}{ \partial \lambda }\\ = -i \int_{-\infty}^{\infty} dx\;
\left( \frac{1}{2}\tr \left(  \hat{\chi }^\pm  \sigma_3\chi^\pm (x,\lambda )\sigma_3\right) -1 \right),
\end{multline}
where
\begin{equation}\label{eq:calA}\begin{split}
\mathcal{A}(\lambda) = \begin{cases} \ln a^+(\lambda)  & \lambda \in \mathbb{C}_+, \\
\frac{1}{2} \ln \left(a^+(\lambda) /a^-(\lambda)\right) & \lambda \in \mathbb{R},\\
- \ln a^-(\lambda)  & \lambda \in \mathbb{C}_-. \end{cases}
\end{split}\end{equation}
As we shall see below, this Wronskian relation is useful in analyzing the conservation laws of the
NLEE.

\section{Generalized Fourier Transforms}

Following  AKNS \cite{AKNS,Kaup} here we shall prove first that their ideas of interpreting
the ISM as a GFT hold true for the generic Zakharov-Shabat system (\ref{eq:AKNS0}) in which
$q_+$ and $q_-$ are unrelated complex-valued functions.

In what follows we assume that the potential $q(x,t)$ of the Lax operator satisfies the conditions:

\begin{description}
  \item[C1]  The potential $q(x,t)$ is a $2\times 2$ off-diagonal matrix (see (\ref{eq:AKNS0})) whose
  matrix elements are complex-valued Schwartz-type functions of $x$ and $t$;

  \item[C2] The potential $q(x,t)$ is such that the corresponding transition coefficients $a^+(\lambda)$ and
  $a^-(\lambda)$ have finite number of zeroes in their regions of analyticity located at $\lambda_k^+ \in \mathbb{C}_+$
  and  $\lambda_k^- \in \mathbb{C}_-$, $k=1,\dots, N$.
\end{description}

More specifically, from condition {\bf C1} it follows \cite{AKNS} that the matrix elements $a^\pm(\lambda)$ and
$b^\pm(t,\lambda)$ are Schwartz-type functions on the real axis of the complex $\lambda$-plane $\mathbb{C}$.

From condition {\bf C2} it follows that  $a^\pm(\lambda)$ in the vicinity of their zeroes behave like:
\begin{equation}\label{eq:apm0}\begin{aligned}
a^+(\lambda) &= (\lambda -\lambda_j^+) \dot{a}^+_j + \frac{1}{2} (\lambda -\lambda_j^+)^2 \ddot{a}^+_j +
\mathcal{O}((\lambda -\lambda_j^+)^3),  \\
a^-(\lambda) &= (\lambda -\lambda_j^-) \dot{a}^-_j + \frac{1}{2} (\lambda -\lambda_j^-)^2 \ddot{a}^-_j +
\mathcal{O}((\lambda -\lambda_j^-)^3),
\end{aligned}\end{equation}
where $ j= 1. \dots, N_+$ for the first of the above equations and $ j= 1. \dots, N_-$ for the second.

\begin{remark}\label{rem:C2}
The condition {\bf C2} requires in addition that $N_+=N_-$ which ensures that
$\lim_{\lambda\to\infty}a^\pm(\lambda) =1$, see \cite{Kaup,GeYaV}.
\end{remark}

Effectively, condition {\bf C2} means that the discrete spectrum of $L$ consists of a finite number of simple discrete eigenvalues.
The FAS $\chi^\pm(x,\lambda)$ become degenerate at the points $\lambda_k^\pm$; more specifically
\begin{equation}\label{eq:Fikpm}\begin{aligned}
\phi_k^+ (x) & = b_k^+ \psi_k^+(x), &\quad \phi_k^- (x) & = -b_k^- \psi_k^-(x),
\end{aligned}\end{equation}
where $\phi_k^\pm (x) =\phi^\pm (x,\lambda_k^\pm)$, $ \psi_k^\pm(x) =  \psi^+(x,\lambda_k^\pm)$ and $b_k^\pm$ are some
(time-dependent) constants.

\begin{remark}\label{rem:1}
In the special case, when the potential $q(x,t)$ is on a finite support one can prove that the Jost solutions, as well as the
scattering matrix $T(\lambda,t)$ are meromorphic functions of $\lambda$. Then one can extend the functions $b^+(\lambda)$
in $\mathbb{C}_+$ (resp. $b^-(\lambda)$ in $\mathbb{C}_-$) with the result:
\begin{equation}\label{eq:bkpm}\begin{aligned}
b_k^\pm (\lambda) =  b_k^\pm  + \mathcal{O}\left((\lambda -\lambda_k^\pm)\right) ,
\end{aligned}\end{equation}
i.e., the constants $b_k^\pm=b^\pm(\lambda_k^\pm)$ can be understood as the values of  $b^\pm(\lambda)$ at the points $\lambda_k^\pm$.
\end{remark}

\subsection{Completeness of the ``squared'' solutions}\label{sec:l5-2}

In this Subsection we prove that the `squared solutions' satisfy a completeness relation, thus generalizing the results of
\cite{Vg-EKh1,Vg-EKh2,Kaup}.

\begin{theorem}\label{thm:1}
Let the potential $q(x)$ of $L$ satisfy the conditions {\bf C1} and  {\bf C2}.
Then the squared solutions $\bPsi^\pm(x,\lambda )$ and $\bPhi^-(x,\lambda )$ satisfy the following
completeness relation
\begin{multline}\label{eq:CR}
\delta (x-y) \Pi_0  =
 -{1 \over \pi } \int_{-\infty }^{\infty } d\lambda \, \left( Z^+(x,y,\lambda) - Z^-(x,y,\lambda)  \right) \\
+2i \sum_{k=1}^{N_+}  X_k^+(x,y) +  2i \sum_{k=1}^{N_-}X_k^-(x,y),
\end{multline}
where
\begin{equation}\label{eq:3.19b}\begin{split}
 \Pi_0 &= \sigma _+ \otimes \sigma _- - \sigma _- \otimes \sigma _+ ,\\
Z^\pm (x,y,\lambda) &=  {\bPsi^\pm(x,\lambda ) \otimes \bPhi^\pm (y,\lambda ) \over
(a^\pm (\lambda ))^2 },
\end{split}\end{equation}
 and $ X_k^\pm(x,y) $ are defined by:
\begin{multline}\label{eq:Xpm}
X_k^\pm (x,y)   ={1  \over (\dot{a}_k^\pm)^2 } \left( \bPsi_k^\pm (x) \otimes \dot{\bPhi}^\pm_k (y)  \right. \\
\left. + \dot{\bPsi}_k^\pm (x) \otimes \bPhi^\pm_k (y) -  {\ddot{a}_k^\pm  \over \dot{a}^\pm_k }
\bPsi_k^\pm (x) \otimes \bPhi ^\pm_k (y) \right).
\end{multline}
\end{theorem}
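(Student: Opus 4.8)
The plan is to prove the completeness relation (\ref{eq:CR}) by the Green-function (contour-integration) method of \cite{Kaup,Vg-EKh1,Vg-EKh2}, adapted to the gauge-covariant ``squared'' solutions (\ref{eq:wr.14})--(\ref{eq:wr.15}). First I would introduce a matrix kernel $G^{\pm}(x,y,\lambda)$, bilinear in the squared solutions and carrying a $\theta$-function structure in $x-y$, that is meromorphic in $\lambda\in\mathbb{C}_{\pm}$ with isolated double poles only at the discrete eigenvalues $\lambda_k^{\pm}$ and that decays as $|\lambda|\to\infty$ in the respective half-plane; up to a constant,
\[
G^{\pm}(x,y,\lambda)=\frac{1}{(a^{\pm}(\lambda))^{2}}\Bigl(\theta(x-y)\,\bPsi^{\pm}(x,\lambda)\otimes\bPhi^{\pm}(y,\lambda)+\theta(y-x)\,\bPhi^{\pm}(x,\lambda)\otimes\bPsi^{\pm}(y,\lambda)\Bigr).
\]
The $\theta$-functions are arranged so that, by the asymptotics (\ref{eq:chilim}), in each region the surviving term behaves like $e^{\pm 2i\lambda(x-y)}$ with the sign of the exponent that makes it decay for $\lambda\in\mathbb{C}_{\pm}$. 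One then applies Cauchy's theorem to $G^{+}$ on the boundary of $\mathbb{C}_{+}\cap\{|\lambda|<R\}$ and to $G^{-}$ on the boundary of $\mathbb{C}_{-}\cap\{|\lambda|<R\}$, adds the two identities, and lets $R\to\infty$; the three resulting contributions---residues at $\lambda_k^{\pm}$, integral along the real axis, integral over the two infinite semicircles---are precisely the three terms of (\ref{eq:CR}).

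The residues come first. Since $a^{\pm}$ has a simple zero at $\lambda_k^{\pm}$, the factor $(a^{\pm})^{-2}$ has there a double pole whose two Laurent coefficients are expressed through $\dot a_k^{\pm}$ and $\ddot a_k^{\pm}$ by (\ref{eq:apm0}); by the degeneracy relations (\ref{eq:Fikpm}) the bilinear numerator is regular at $\lambda_k^{\pm}$, so the residue of $G^{\pm}$ is a combination of the first two Taylor coefficients of the numerator with these two Laurent coefficients. A short computation then reproduces precisely the combination $X_k^{\pm}(x,y)$ of (\ref{eq:Xpm}): the terms $\dot\bPsi_k^{\pm}\otimes\bPhi_k^{\pm}+\bPsi_k^{\pm}\otimes\dot\bPhi_k^{\pm}$ come from differentiating the numerator, the term $-(\ddot a_k^{\pm}/\dot a_k^{\pm})\,\bPsi_k^{\pm}\otimes\bPhi_k^{\pm}$ from the subleading coefficient of $(a^{\pm})^{-2}$, and the prefactor $1/(\dot a_k^{\pm})^{2}$ together with the overall $2i$ in (\ref{eq:CR}) from the $2\pi i$ of the residue theorem and the orientation of the contours.

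On the real axis $G^{\pm}$ differs from $Z^{\pm}$ only through the $\theta$-functions, so collecting the two halves of the real-line integral yields the term $-\tfrac1\pi\int_{-\infty}^{\infty}d\lambda\,(Z^{+}-Z^{-})$ of (\ref{eq:CR}). The decisive step is the two large semicircles: by (\ref{eq:chilim}) and (\ref{eq:apmlim}), for $|\lambda|\to\infty$ one has $\chi^{\pm}(x,\lambda)\simeq e^{-i\lambda\sigma_3 x}$, hence $\bPhi^{\pm},\bPsi^{\pm}$ reduce to the plane-wave (``free'') squared solutions $\propto e^{\pm 2i\lambda x}\sigma_{\mp}$ and $G^{\pm}$ tends to the Green function of the corresponding free problem; its contribution over the arcs, evaluated distributionally, is exactly $\delta(x-y)\Pi_{0}$. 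This is the Fourier completeness of the plane-wave squared solutions: the matrix $\Pi_{0}$ of (\ref{eq:3.19b}) is the resolution of the identity on the space of off-diagonal matrix functions (which is where all the squared solutions live, by (\ref{eq:wr.14})--(\ref{eq:wr.15})), and the $\theta$-functions guarantee that in each region only the exponential closing in the correct half-plane contributes. Combining the three pieces and rearranging gives (\ref{eq:CR}).

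I expect this last step to be the main obstacle: one must control the large-$|\lambda|$ behaviour of $G^{\pm}$ uniformly enough to (i) discard the corrections to the free kernel on the arcs, (ii) interchange $R\to\infty$ with the integration along $\mathbb{R}$---equivalently, make sense of $\int_{-\infty}^{\infty}(Z^{+}-Z^{-})\,d\lambda$ as a distribution in $x-y$---and (iii) fix the normalisation of the $\delta$-term. Here conditions {\bf C1} and {\bf C2} do the work: {\bf C1} supplies the Schwartz decay needed for the arc estimates and the $\lambda$-integrals, while {\bf C2} (together with Remark \ref{rem:C2}) ensures only finitely many simple discrete eigenvalues, so that the residue sums in (\ref{eq:CR}) are finite and $(a^{\pm})^{-2}$ has precisely the double-pole structure used above.
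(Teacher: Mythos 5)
Your overall strategy is the paper's own: build a sectionally analytic, bilinear Green-function kernel $G^{\pm}(x,y,\lambda)$, integrate it over the contours $\gamma_{\pm}$, and identify the arcs with $\delta(x-y)\Pi_0$, the real axis with the continuous-spectrum term, and the residues with $X_k^{\pm}$. However, your kernel is not the right one, and this is a genuine gap rather than a cosmetic difference. The paper's kernel (eq.~(\ref{eq:3.1})) in the region $y>x$ is not $\bPhi^{\pm}(x,\lambda)\otimes\bPsi^{\pm}(y,\lambda)/(a^{\pm})^2$ but
\[
-G_2^{\pm}(x,y,\lambda)=-\frac{1}{(a^{\pm}(\lambda))^2}\left(\bPhi^{\pm}(x,\lambda)\otimes\bPsi^{\pm}(y,\lambda)+\tfrac{1}{2}\,\bTheta^{\pm}(x,\lambda)\otimes\bTheta^{\pm}(y,\lambda)\right),
\]
i.e.\ it carries an extra $\tfrac12\bTheta\otimes\bTheta$ term (and the opposite overall sign). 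That term is invisible in your arc estimate, since $\bTheta^{\pm}\to 0$ as $|\lambda|\to\infty$, but it is indispensable at the double poles and on the real axis.

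Concretely, at $\lambda_k^{+}$ the degeneracy (\ref{eq:Fikpm}) gives $\bPhi_k^{+}=-(b_k^{+})^2\bPsi_k^{+}$ and $\bTheta_k^{+}=2b_k^{+}\bPsi_k^{+}$, so the numerators of $Z^{+}$ and of $-G_2^{+}$ agree \emph{to first order} in $\lambda-\lambda_k^{+}$; since $(a^{+})^{-2}$ has a double pole there, it is precisely this first-order agreement that makes the residue independent of $\mathrm{sign}(x-y)$ and equal to the smooth expression $X_k^{+}(x,y)$ of (\ref{eq:Xpm}). With your kernel the zeroth-order numerators in the two regions coincide ($\bPsi_k\otimes\bPhi_k=\bPhi_k\otimes\bPsi_k=-(b_k^{+})^2\bPsi_k\otimes\bPsi_k$), but the first-order ones differ by $\bPsi_k^{+}(x)\otimes A(y)-A(x)\otimes\bPsi_k^{+}(y)$, where $A=\frac{d}{d\lambda}\bigl(\bPhi^{+}+(b_k^{+})^2\bPsi^{+}\bigr)\big|_{\lambda_k^{+}}$ is generically nonzero (the proportionality $\bPhi^{+}=-(b_k^{+})^2\bPsi^{+}$ holds only at the point $\lambda_k^{+}$, not in a neighbourhood). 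Hence your residue retains a $\theta(x-y)$ discontinuity and does not reduce to (\ref{eq:Xpm}). The same omission undermines your real-axis step: the assertion that ``$G^{\pm}$ differs from $Z^{\pm}$ only through the $\theta$-functions'' is true only for $x>y$; for $y>x$ one must invoke the bilinear identity relating $\bPsi\otimes\bPhi$, $\bPhi\otimes\bPsi$ and $\bTheta\otimes\bTheta$ (the tensor Casimir of the FAS) to convert the $y>x$ half of the integral into $-\frac{1}{\pi}\int(Z^{+}-Z^{-})\,d\lambda$. To repair the proof you should replace your kernel by the paper's $G^{\pm}=Z^{\pm}\theta(x-y)-G_2^{\pm}\theta(y-x)$ and redo the residue and real-axis computations with the $\bTheta\otimes\bTheta$ contribution included.
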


\begin{proof}[The idea of the proof] Following \cite{Vg-EKh1,Vg-EKh2,GeYaV} we
consider the piece-wise analytic function:
 \begin{equation}\label{eq:G}
    G(x,y,\lambda) = \left\{ \begin{array}{ll}
G^+(x,y,\lambda), \quad & \mbox{for}\; \lambda \in \bbbc_+, \\
G^-(x,y,\lambda), \quad & \mbox{for}\; \lambda \in \bbbc_-, \\
    \end{array} \right.
\end{equation}
and $G(x,y,\lambda)=1/2 (G^+(x,y,\lambda)+G^-(x,y,\lambda))$ for $\lambda \in \bbbr$, where
\begin{multline}\label{eq:3.1}
G^\pm(x,y,\lambda ) = Z^\pm(x,y,\lambda ) \theta (x-y) - G_2^\pm(x,y,\lambda ) \theta (y-x),\\
G_2^\pm(x,y,\lambda ) = {1  \over (a^\pm(\lambda ))^2 } \left( \bPhi^\pm (x,\lambda ) \otimes \bPsi^\pm (y,\lambda) \right. \\
 \left. + {1  \over  2} \bTheta^\pm (x,\lambda ) \otimes \bTheta^\pm (y,\lambda ) \right).
\end{multline}

\begin{figure}[tb]
\centerline{\includegraphics*{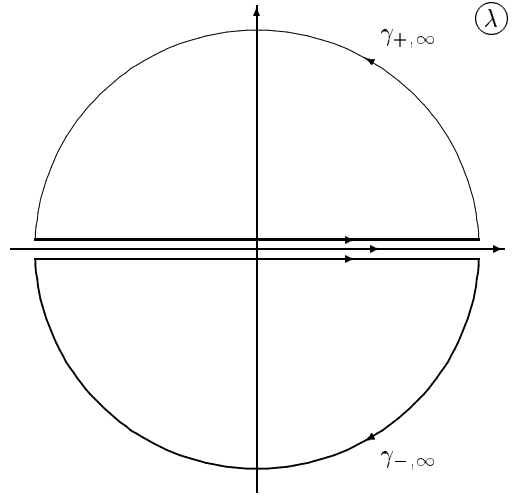}}
\caption{The contours $\gamma _\pm =\mathbb{R}\cup\gamma_{\pm\infty }$.}
\label{fig:1}

\end{figure}

Next we consider the integral
\begin{multline}\label{eq:3.3}
\mathcal{ J}_G (x,y) \\ = {1  \over  2\pi i} \left( \oint_{\gamma  _+}^{}
d\lambda \, G^+(x,y,\lambda ) - \oint_{\gamma  _-}^{} d\lambda \, G^-(x,y,\lambda ) \right) \\
= \sum_{k=1}^{N} \left( \Res_{\lambda =\lambda _k^+}
G^+(x,y,\lambda ) + \Res_{\lambda =\lambda _k^-} G^-(x,y,\lambda ) \right),
\end{multline}
where the contours $C_\pm = \mathbb{R}\cup \gamma_{\pm,\infty} $ are shown in Figure~\ref{fig:1}.

The integration along the contours gives rise to two terms: i) the integral along the real axis in eq.
(\ref{eq:CR}) which is the contribution from the continuous spectrum of $L$ to the completeness relation (CR);
ii) the integrals along the infinite semicircles can be evaluated explicitly giving rise to the term $\delta(x-y)\Pi_0$ in
eq. (\ref{eq:CR}). Evaluating the integral $\mathcal{ J}_G (x,y)$ by Cauchy residue theorem gives rise to the
contribution from the discrete spectrum of $L$ to the CR.
Indeed, the poles of $G^\pm $ coincide with $\lambda _k^\pm $.
Since $a^\pm (\lambda )$ have first order zeroes at $\lambda _k^\pm$,
then $G^\pm $ would have second order poles at these points.
\end{proof}

\subsection{The symplectic basis}\label{ssec:sb}

In this Subsection we derive an alternative form of the completeness relation (\ref{eq:CR}) and introduce
the so-called symplectic basis of `squared solutions', see \cite{Vg-EKh1,Vg-EKh2,GeYaV}. To this end, we apply to both sides of
(\ref{eq:CR}) the operator $\openone - {\bf p}$, where ${\bf p}$ acts on any tensor product $X\otimes Y$ as
follows: ${\bf p}(X\otimes Y) = Y\otimes X$. It is easy to check that $(\openone - {\bf p})\Pi_0 = 2\Pi_0$.

The same operation applied to the right hand side of (\ref{eq:CR}) can be simplified and cast into the form
\begin{multline}\label{eq:3.21a}
\delta (x-y) \Pi_0 \\ = \int_{-\infty }^{\infty } d\lambda \, \left( \mathcal{P }(x,\lambda ) \otimes \mathcal{Q }(y,\lambda ) -
\mathcal{Q }(x,\lambda ) \otimes \mathcal{P}(y,\lambda ) \right)  \\
 +  \sum_{k=1}^{N} \left( Z_k^+(x,y) +  Z_k^-(x,y) \right), \\
 Z_k^\pm (x,y) = \left( \mathcal{P }_k^\pm (x) \otimes \mathcal{Q }^\pm_k (y) -
 \mathcal{Q }_k^\pm (x) \otimes \mathcal{P }^\pm_k (y) \right) ,
\end{multline}
where we have introduced the so--called symplectic basis, which will be extensively used in
the analysis of the Hamiltonian structures  of the NLEE. The elements of this basis are
linear combinations of the ``squared'' solutions as follows:
\begin{equation}\label{eq:3.20P}\begin{split}
\mathcal{P}(x,\lambda ) &= {1  \over \pi } \left( \tau^+(\lambda ) \bPhi ^+(x,\lambda ) -  \tau^-(\lambda ) \bPhi^-(x,\lambda ) \right) \\
&= - {1  \over \pi } \left( \rho ^+(\lambda ) \bPsi^+(x,\lambda ) - \rho ^-(\lambda ) \bPsi^-(x,\lambda ) \right) ,\\
 \mathcal{P}_k^\pm (x) &= 2i C_k^\pm \bPsi_k^\pm (x) = - 2i M_k^\pm  \bPhi_k^\pm (x),
\end{split}\end{equation}
\begin{equation}\label{eq:3.20Q}\begin{split}
\mathcal{Q}(x,\lambda ) &= { \tau^+(\lambda ) \bPhi^+(x,\lambda ) +\rho ^+(\lambda ) \bPsi^+(x,\lambda ) \over 2 b^+(\lambda ) b^-(\lambda ) }  \\
 &= {\rho ^-(\lambda ) \bPsi^-(x,\lambda ) + \tau^-(\lambda ) \bPhi^-(x,\lambda ) \over 2 b^+(\lambda ) b^-(\lambda )}  ,\\
 \mathcal{Q}_k^\pm (x) &= {1  \over 2 } \left(C_k^\pm \dot{\bPsi }_k^\pm (x) + M_k^\pm \dot{\bPhi }_k^\pm (x)\right) ,
 \end{split}\end{equation}
where we must recall that $\rho^\pm(\lambda)$ and $\tau^\pm(\lambda)$ were introduced in eqs. (\ref{eq:wr.12})  and
\begin{equation}\label{eq:MC-k}
C_k^\pm = {b_k^\pm \over \dot{a}^\pm_k }, \qquad M_k^\pm = {1\over b_k^\pm \dot{a}^\pm_k }.
\end{equation}

\subsection{Expansions over the `squared' solutions}\label{sec:l5-3}

Using the completeness relations one can expand any generic
element $X(x) $ of the phase space $ \mathcal{ M} $ over each of
the three complete sets of `squared solutions'. In this section
we explain how this can be done.  We remind that $X(x) $ is a
generic element of $\mathcal{ M} $ if it is an off-diagonal
matrix--valued function, which falls off fast enough for
$|x|\to\infty  $. Obviously  $X(x)$ can be written down in terms
of its matrix elements $X_\pm(x) $ as:
\begin{eqnarray}\label{eq:4.4a}
X(x) = X_+(x) \sigma _+ + X_-(x) \sigma _-.
\end{eqnarray}
From (\ref{eq:3.19b}) we get:
\begin{multline}\label{eq:4.4b}
{1  \over 2 } \tr_2 \left(\Pi_0 [\sigma _3, X(x)] \otimes \openone \right)  \\
= -{1  \over 2 } \tr_1 \left(\openone  \otimes [\sigma _3, X(x)] \right) \Pi_0 = X(x) ,
\end{multline}
where $\tr_1 $ (and $\tr_2 $) mean that we are taking the trace of
the elements in the first (or the second) position of the tensor
product.

Now, we multiply (\ref{eq:CR}) on the right by ${1 \over 2
}[\sigma _3, X(x)] \otimes \openone  $, take $\tr_1 $ and integrate
over $dx $. This leads to the expansion of $X(x) $ over the system $\bPhi^\pm $:
\begin{multline}\label{eq:4.5a}
X(x) =  {1  \over \pi } \int_{-\infty }^{\infty } d\lambda \,
\left( \psi _X^+(\lambda ) \bPhi^+(x,\lambda ) - \psi _X^-(\lambda ) \bPhi^-(x,\lambda ) \right) \\
+ 2i \sum_{k=1}^{N} \left( \psi _{X,k}^\pm
\dot{\bPhi}_k^\pm + \dot{\psi}_{X,k}^\pm \bPhi_k^\pm \right),
\end{multline}
where
\begin{equation}\label{eq:4.5b}\begin{split}
\psi _X^\pm (\lambda ) &= {\biglb \bPsi^\pm (x,\lambda ), X(x) \bigrb \over (a^\pm(\lambda ))^2 }, \\
\psi_{X,k}^\pm &= {\biglb \bPsi^\pm_k (x) , X(x) \bigrb \over (\dot{a}^\pm_k)^2 }, \\
\dot{\psi}_{X,k}^\pm &= { 1 \over (\dot{a}^\pm_k)^2 } \Biglb \dot{\bPsi }^\pm_k (x)  - {
\ddot{a}^\pm_k \over \dot{a}^\pm_k } \bPsi ^\pm_k (x) , X(x) \Bigrb  .
\end{split}\end{equation}

Analogously, we can multiply (\ref{eq:CR}) on the left by ${1
\over 2 }\openone  \otimes [\sigma _3, X(x)] $, take $\tr_2 $ and
integrate over $dx $. This leads to the expansion of $X(x) $ over the system $\bPsi^\pm $:
\begin{multline}\label{eq:4.6a}
X(x) \\  =   - {1  \over \pi } \int_{-\infty }^{\infty } d\lambda \, \left( \phi _X^+(\lambda ) \bPsi ^+(x,\lambda ) -
\phi _X^-(\lambda ) \bPsi^-(x,\lambda ) \right) \\
+ 2i \sideset{}{^\pm}\sum_{k=1}^{N} \left( \phi _{X,k}^\pm
\dot{\bPsi }_k^\pm +\dot{\phi}_{X,k}^\pm \bPsi_k^\pm \right),
\end{multline}
\begin{equation}\label{eq:4.6b}\begin{split}
\phi _X^\pm(\lambda ) &= {\biglb \bPhi^\pm (x,\lambda ), X(x) \bigrb \over (a^\pm(\lambda ))^2 }, \\
\phi_{X,k}^\pm &= {\biglb \bPhi^\pm_k (x) , X(x) \bigrb  \over (\dot{a}^\pm_k)^2 }, \\
\dot{\phi}_{X,k}^\pm &= { 1 \over (\dot{a}^\pm_k)^2 } \Biglb \dot{\bPhi }^\pm_k (x)  - {
\ddot{a}^\pm_k \over \dot{a}^\pm_k } \bPhi ^\pm_k (x) , X(x) \Bigrb  .
\end{split}\end{equation}
The same procedure, applied to the completeness relation (\ref{eq:3.21a})
for the symplectic basis  leads to:
\begin{multline}\label{eq:4.7a}
X(x) =   \int_{-\infty }^{\infty } d\lambda \, \left( \kappa _X(\lambda ) \mathcal{P }(x,\lambda ) -
\eta _X(\lambda ) \mathcal{Q }(x,\lambda ) \right) \\
+  \sideset{}{^\pm}\sum_{k=1}^{N} \left(\kappa _{X,k}^\pm\mathcal{P}_k^\pm - \eta_{X,k}^\pm \mathcal{Q }_k^\pm \right),
\end{multline}
\begin{equation}\label{eq:4.7b}\begin{aligned}
\kappa _X(\lambda ) &= \biglb \mathcal{Q } (x,\lambda ), X(x) \bigrb , &\quad   \kappa _{X,k}^\pm &= \biglb \mathcal{Q }^\pm_k (x) ,X(x) \bigrb ,\\
\eta _X(\lambda ) &= \biglb \mathcal{P } (x,\lambda ), X(x) \bigrb , &\quad \eta_{X,k}^\pm &= \biglb \mathcal{P}^\pm_k (x) , X(x) \bigrb ,
\end{aligned}\end{equation}
where $\sideset{}{^\pm}\sum_{k=1}^{N} W^\pm = \sum_{k=1}^{N}(W^+ + W^-)$.

The completeness relations derived above allow us to establish a
one-to-one correspondence between the element $X(x) \in \mathcal{
M} $ and its expansion coefficients. Indeed, from (\ref{eq:CR})
and (\ref{eq:3.21a}) we derived the expansions (\ref{eq:4.5a}),
(\ref{eq:4.6a}) and (\ref{eq:4.7a}) with the inversion formulae
(\ref{eq:4.5b}), (\ref{eq:4.6b}) and (\ref{eq:4.7b}) respectively.
Using them we prove the following:
\begin{Prop}\label{pro:V.1}
The function $X(x)\equiv 0 $ if and only if one of the following sets of relations holds:
\begin{align} \label{eq:pr1}
\psi _X^+(\lambda ) &= \psi _X^-(\lambda ) \equiv  0,  &\;
 \psi _{X,k}^\pm &=  \dot{\psi }_{X,k}^{\pm} = 0, \\
\phi _X^+(\lambda ) &= \phi _X^-(\lambda ) \equiv  0, &\;
\label{eq:pr2} \phi _{X,k}^\pm &=  \dot{\phi }_{X,k}^{\pm} = 0, \\
 \kappa _X(\lambda ) &= \eta _X(\lambda ) \equiv  0, &\;
\kappa _{X,k}^\pm &=  \eta _{X,k}^{\pm} = 0, \label{eq:pr3}
\end{align}
where  $\lambda \in \mathbb{R}$ and $ k=1,\dots, N$.
\end{Prop}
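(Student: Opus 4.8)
The proposition is nothing but the assertion that each of the three systems of ``squared'' solutions is a complete basis, i.e.\ that the linear map sending a generic $X(x)\in\mathcal{M}$ to its set of expansion coefficients is injective. It follows directly from the completeness relations (\ref{eq:CR}) and (\ref{eq:3.21a}) together with the expansions (\ref{eq:4.5a}), (\ref{eq:4.6a}), (\ref{eq:4.7a}) and their inversion formulae (\ref{eq:4.5b}), (\ref{eq:4.6b}), (\ref{eq:4.7b}) already derived above. The plan is to prove both implications for each of the three sets (\ref{eq:pr1})--(\ref{eq:pr3}) separately; taken together they give the equivalence of $X\equiv 0$ with each of the three sets, hence of the three sets with one another.

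For the ``only if'' direction, assume $X(x)\equiv 0$. Each coefficient in (\ref{eq:4.5b}), (\ref{eq:4.6b}) and (\ref{eq:4.7b}) is a skew--scalar product $\biglb Y,X(x)\bigrb$, where $Y$ is a fixed ``squared'' solution, a $\lambda$--derivative of one, or a fixed linear combination thereof (with, for (\ref{eq:4.5b}) and (\ref{eq:4.6b}), an additional division by one of the nonvanishing factors $(a^\pm(\lambda))^2$ or $(\dot{a}^\pm_k)^2$). Since the skew--scalar product (\ref{eq:wr.7}) is linear in its second argument, $\biglb Y,0\bigrb=0$, so all the coefficients $\psi_X^\pm,\psi_{X,k}^\pm,\dot{\psi}_{X,k}^\pm$, $\phi_X^\pm,\phi_{X,k}^\pm,\dot{\phi}_{X,k}^\pm$, $\kappa_X,\eta_X,\kappa_{X,k}^\pm,\eta_{X,k}^\pm$ vanish identically. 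This gives $X\equiv 0\Rightarrow$ (\ref{eq:pr1}), (\ref{eq:pr2}), (\ref{eq:pr3}).

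For the converse, suppose (\ref{eq:pr1}) holds. Inserting $\psi_X^\pm(\lambda)\equiv 0$ and $\psi_{X,k}^\pm=\dot{\psi}_{X,k}^\pm=0$ into the expansion (\ref{eq:4.5a}) annihilates its entire right--hand side, so $X(x)\equiv 0$. The identical argument, with (\ref{eq:pr2}) substituted into (\ref{eq:4.6a}) and with (\ref{eq:pr3}) substituted into (\ref{eq:4.7a}), yields $X(x)\equiv 0$ in those two cases as well. Combining with the previous paragraph, each of (\ref{eq:pr1})--(\ref{eq:pr3}) is equivalent to $X\equiv 0$, and hence the three are equivalent to one another.

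The logical skeleton above is short precisely because all the analytic content has already been absorbed into Theorem~\ref{thm:1} and the derivation of the expansions; the only step that genuinely needs care is the validity of (\ref{eq:4.5a})--(\ref{eq:4.7a}) themselves, i.e.\ that multiplying (\ref{eq:CR}) on the right by $\frac{1}{2}[\sigma_3,X(x)]\otimes\openone$ (or on the left by $\frac{1}{2}\openone\otimes[\sigma_3,X(x)]$), taking $\tr_1$ (resp.\ $\tr_2$) and integrating in $x$ indeed reproduces $X(x)$ for every $X\in\mathcal{M}$. This is where the hypotheses enter: {\bf C1} supplies the decay needed for the $x$-- and $\lambda$--integrals to converge and for the interchange of integrations and partial traces to be legitimate, while {\bf C2}, through (\ref{eq:apm0}), makes the zeros of $a^\pm$ simple so that $\dot{a}^\pm_k\neq 0$ and the discrete terms (\ref{eq:Xpm}) and the denominators in the inversion formulae are well defined; one also uses that $a^\pm$ (and, for the symplectic basis, $b^\pm$) have no real zeros. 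I expect this verification, rather than the equivalence argument itself, to be the only real obstacle, and it is already implicit in the results of Section~III.
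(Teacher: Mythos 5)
Your proof is correct and follows exactly the paper's own argument: the forward implication by inserting $X\equiv 0$ into the inversion formulae (\ref{eq:4.5b}), (\ref{eq:4.6b}), (\ref{eq:4.7b}), and the converse by substituting the vanishing coefficients into the expansions (\ref{eq:4.5a}), (\ref{eq:4.6a}), (\ref{eq:4.7a}). Your added remarks on where conditions \textbf{C1} and \textbf{C2} enter are accurate but go slightly beyond what the paper spells out; the substance is the same.
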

\begin{proof}

Let us show that from $X(x)\equiv 0 $ there follows
(\ref{eq:pr1}). To this end we insert $X(x) \equiv 0 $ into the
right hand sides of the inversion formulae (\ref{eq:4.5b}) and
immediately get (\ref{eq:pr1}). The fact that from (\ref{eq:pr1})
there follows $X(x)\equiv 0 $ is readily obtained by inserting
(\ref{eq:pr1}) into the right hand side of (\ref{eq:4.5a}).

The equivalence of $X(x)\equiv 0 $ to (\ref{eq:pr2}) and (\ref{eq:pr3}) is proved analogously using the inversion formulae
(\ref{eq:4.5b}), (\ref{eq:4.6b}) and the expansions (\ref{eq:4.6a}) and (\ref{eq:4.7a}). The proposition is proved.
\end{proof}

Here we calculate the expansion coefficients for $X(x) \equiv q(x)
$. As the reader may have guessed already, their evaluation will be
based on the Wronskian relations (\ref{eq:wr.21}),
(\ref{eq:wr.22}) which we derived above. From them we have:
\begin{equation}\label{eq:4.11a}\begin{aligned}
\psi _{q}^\pm (\lambda ) &= {1  \over (a^\pm(\lambda ))^2 } \biglb \bPsi^\pm (x,\lambda ) , q(x) \bigrb = i \tau^\pm (\lambda ), \\
 \psi _{q,k}^\pm  &= {1  \over (\dot{a}^\pm_k)^2} \biglb \bPsi^\pm _k(x) , q(x) \bigrb = 0, \\
 \dot{\psi} _{q,k}^\pm  &= {1  \over (\dot{a}^\pm_k)^2 }  \Biglb \dot{\bPsi}^\pm _k(x)
 - {\ddot{a}_k^\pm  \over \dot{a}^\pm_k }  \bPsi^\pm _k(x) , q(x) \Bigrb  =  i M_k^\pm .
 \end{aligned}\end{equation}

Similarly, we evaluate all skew-scalar products between the `squared solutions' and
 $q(x,t)$, $\sigma_3 \delta q(x)$ and  $\sigma_3 \frac{\partial q(x)}{ \partial t }$, see the Tables \ref{tab:1}, \ref{tab:2}.
This means that we know the expansion coefficients of $q(x,t)$, $\sigma_3 \delta q(x)$ and  $\sigma_3 \frac{\partial q(x)}{ \partial t }$
over each of the complete sets of `squared solutions'.

\begin{table*}
  \centering
  \begin{tabular}{|c||c|c|c||c|c|c|}
   \hline \hline
$X(x,t)$  & $\phi_X^\pm  (\lambda) $ & $\phi_{X,k}^\pm $ &  $\dot{\phi}_{X,k}^\pm $ &  $\psi_{X}^\pm  (\lambda) $ &  $\psi_{X,k}^\pm $ &  $\dot{\psi}_{X,k}^\pm $ \\[5pt]
  \hline
    $q(x,t)$ & $ \pm \frac{i}{\pi} \tau^\pm (\lambda)$ & 0 & $2M_k^\pm$ & $\mp \frac{i}{\pi} \rho^\pm  (\lambda)$ & 0 & $-2C_k^\pm$ \\[5pt]
   $\sigma_3 \delta q(x)$ & $ \frac{i}{\pi}\delta \tau^\pm (\lambda)$ & $\pm 2 i\delta\lambda_k^\pm M_k^\pm $ &
              $\pm 2i \delta M_p^\pm$ & $ \frac{i}{\pi} \delta \rho^\pm  (\lambda)$ & $\pm 2i \delta \lambda_k^\pm C_k^\pm $ & $\pm 2i\delta C_k^\pm$ \\[5pt]
   $\sigma_3 \frac{\partial q}{ \partial t }$ & $ \frac{i}{\pi} \frac{\partial \tau^\pm (\lambda)}{ \partial t } $ & $\pm 2i \frac{\partial \lambda_k^\pm}{ \partial t } M_k^\pm $ &
              $\pm 2i \frac{\partial M_k^\pm}{ \partial t } $ & $ \frac{i}{\pi} \frac{\partial \rho^\pm  (\lambda)}{ \partial t} $ & $\pm 2 i \frac{\partial \lambda_k^\pm}{ \partial t}
              C_k^\pm $ & $\pm 2i \frac{\partial C_k^\pm }{ \partial t} $ \\[5pt]
    \hline
  \end{tabular}
  \caption{The expansion coefficients of $q(x,t)$, $\sigma_3 \delta q(x)$ and  $\sigma_3 \frac{\partial q(x)}{ \partial t }$
  over the sets of squared solutions $\bPsi(x,t,\lambda)$ and  $\bPhi(x,t,\lambda)$. }\label{tab:1}
\end{table*}

\begin{table}
  \centering
  \begin{tabular}{|c||c|c|c|c|}
    \hline \hline
$X(x,t)$ & $\eta_X (\lambda)$ & $\eta_{X,k}$ & $\kappa_X (\lambda)$ & $\kappa_{X,k}$  \\[5pt] \hline
    $q(x,t)$ & 0 & 0  & $i$ & $i$ \\[5pt]
    $\sigma_3 \delta q(x)$ & $i\delta \eta(\lambda) $ & $\pm \delta\lambda_k^\pm $ & $-i\delta \kappa (\lambda) $  & $\mp i\delta\ln b_k^\pm $ \\[5pt]
    $\sigma_3 \frac{\partial  q(x)}{ \partial t}$    & $i \frac{\partial \eta(\lambda)}{ \partial t} $ & $\pm  \frac{\partial \lambda_k^\pm}{ \partial t }
    $ & $-i \frac{\partial \kappa (\lambda)}{ \partial t }  $  & $\mp i \frac{\partial \ln b_k^\pm }{ \partial t }$ \\[5pt]
    \hline
  \end{tabular}
  \caption{The expansion coefficients of $q(x,t)$, $\sigma_3 \delta q(x)$ and  $\sigma_3 \frac{\partial q(x)}{ \partial t }$
  over the symplectic basis. }\label{tab:2}
\end{table}

As a result we get the following expansions:
\begin{multline}\label{eq:4.11d}
q(x)  = {i  \over \pi } \int_{-\infty }^{\infty } d\lambda \,
\left( \tau^+(\lambda ) \bPhi^+(x, \lambda ) - \tau^-(\lambda ) \bPhi^-(x, \lambda ) \right) \\ +
2\sum_{k=1}^{N} \left( M_k^+ \bPhi^+_k (x) + M_k^- \bPhi^-_k (x) \right),
\end{multline}
\begin{multline}\label{eq:4.11e}
q(x) = - {i  \over \pi } \int_{-\infty }^{\infty } d\lambda \, \left( \rho^+(\lambda ) \bPsi^+(x, \lambda ) -
\rho^-(\lambda ) \bPsi^-(x, \lambda ) \right)  \\
-2 \sum_{k=1}^{N} \left( C_k^+ \bPsi ^+_k (x) + C_k^- \bPsi ^-_k (x) \right),
\end{multline}
\begin{equation}\label{eq:4.11f}\begin{split}
q(x) = i \int_{-\infty }^{\infty } d\lambda \, \mathcal{P} (x,\lambda) + i \sum_{k=1}^{N} \left( \mathcal{P}_k^+(x) + \mathcal{P}_k^-(x) \right).
\end{split}\end{equation}

Note, that only half of the elements in the symplectic
basis contribute to the r.h.side of (\ref{eq:3.21a}). We shall see, that this makes the above basis
quite special.

The expansions  for $\sigma_3 \delta q(x) $ take the form:
\begin{multline}\label{eq:4.8d}
\sigma_3\delta q(x)  \\ = {i  \over \pi } \int_{-\infty }^{\infty } d\lambda \, \left( \delta \tau^+(\lambda ) \bPhi^+(x, \lambda ) +
\delta \tau^-(\lambda ) \bPhi^-(x, \lambda ) \right) \\
+ 2\sideset{}{^\pm}\sum_{k=1}^{N} \left(\pm  M_k^\pm \delta \lambda_k^\pm \dot{\bPhi }^\pm_k (x) \pm \delta M_k^\pm \bPhi ^\pm_k (x)  \right) ,
\end{multline}
\begin{multline}\label{eq:4.9d}
\sigma_3\delta q(x) \\ = {i  \over \pi } \int_{-\infty }^{\infty } d\lambda \, \left( \delta \rho^+(\lambda ) \bPsi^+(x, \lambda ) +
\delta \rho^-(\lambda ) \bPsi^-(x, \lambda ) \right) \\
+ 2\sideset{}{^\pm}\sum_{k=1}^{N} \left( \pm C_k^\pm \delta  \lambda_k^\pm \dot{\bPsi}^\pm_k (x)
\pm \delta C_k^\pm \bPsi^\pm_k (x) \right) ,
\end{multline}
\begin{multline}\label{eq:4.10e}
\sigma _3 \delta q(x) = i \int_{-\infty }^{\infty } d\lambda \, \left(  \delta \eta (\lambda ) \mathcal{Q} (x,\lambda )
- \delta \kappa (\lambda ) \mathcal{P} (x,\lambda ) \right) \\
+  i \sideset{}{^\pm}\sum_{k=1}^{N} \left( \delta \eta_k^\pm \mathcal{Q}_k^\pm (x) -
\delta \kappa_k^\pm \mathcal{P}_k^\pm (x)  \right),
\end{multline}
where
\begin{equation}\label{eq:4.10f}\begin{aligned}
\eta (\lambda ) &= - {1  \over \pi } \ln \left(  a ^+(\lambda) a^-(\lambda ) \right), &\;  \eta_k^\pm &= \mp 2i \lambda _k^\pm ,\\
 \kappa (\lambda ) &= {1 \over 2} \ln {b^+(\lambda) \over b^-(\lambda )} , &\;
  \kappa _k^\pm& = \pm \ln b_k^\pm .
\end{aligned}\end{equation}

The expansions over the `squared' solutions of the time
derivative $\sigma _3 q_t $  are obtained by considering
a special type of variation $\delta q(x,t) $, namely:
\begin{eqnarray}\label{eq:7.1}
\sigma _3 \delta q(x,t) = \sigma _3 q_t \delta t + \mathcal{ O}
((\delta t)^2) .
\end{eqnarray}
Keeping only the terms of order $\delta t $, from \Ref{ch:GFT}{eq:4.8d}, \Ref{ch:GFT}{eq:4.9d} and
\Ref{ch:GFT}{eq:4.10e} we find:
\begin{multline}\label{eq:7.2a}
\sigma_3 \frac{\partial q}{ \partial t }\\  = {i  \over \pi } \int_{-\infty }^{\infty }
d\lambda \, \left( \rho _t^+(t,\lambda ) {\bf \Psi }^+(x,t,\lambda ) + \rho _t^-(t,\lambda ) {\bf \Psi }^-(x,t, \lambda ) \right)  \\
+ 2\sideset{}{^\pm}\sum_{k=1}^{N} \left( \pm C_k^\pm (t) \lambda_{k,t}^\pm \dot{{\bf \Psi }}^\pm_k (x,t)
\pm C_{k,t}^\pm {\bf \Psi }^\pm_k (x,t) \right) ,
\end{multline}
\begin{multline}\label{eq:7.2b}
 \sigma_3 \frac{\partial q}{ \partial t }\\ = {i  \over \pi } \int_{-\infty }^{\infty } d\lambda \, \left( \tau_t^+(t,\lambda )
{\bf \Phi }^+(x,t, \lambda ) + \tau_t^-(t,\lambda ) {\bf \Phi }^-(x,t, \lambda ) \right) \\
+ 2\sideset{}{^\pm}\sum_{k=1}^{N} \left(\pm  M_k^\pm(t) \lambda_{k,t}^\pm \dot{{\bf \Phi }}^\pm _k (x,t)
\pm M_{k,t}^\pm {\bf \Phi }^\pm_k (x,t)  \right) ,
\end{multline}
and
\begin{multline}\label{eq:dqt}
\sigma _3  \frac{\partial q}{ \partial t }\\ = i \int_{-\infty }^{\infty } d\lambda \, \left( \eta_t (\lambda ) \mathcal{Q} (x,t,\lambda )
- \kappa_t (\lambda ) \mathcal{P} (x,t,\lambda )  \right) \\
+  i\sideset{}{^\pm} \sum_{k=1}^{N} \left( \eta_{k,t}^\pm \mathcal{Q}_k^\pm (x,t) -\kappa_{k,t}^\pm \mathcal{P}_k^\pm(x,t)  \right).
\end{multline}

\subsection{The recursion operators}

The recursion operators $\Lambda_\pm$ were introduced by AKNS \cite{AKNS} as the operators that resolve the recursion
relations that follow the compatibility condition $[L,M]=0$ of the Lax operators. Here we will use an
alternative definition which is equivalent. Namely, we will introduce $\Lambda_\pm$ as the operators for which the `squared solutions'
${\bPsi }^\pm (x,\lambda)$ and ${\bPhi }^- (x,\lambda)$ are eigenfunctions.

We start their derivation by noting that the `full' squared solutions (see eq. (\ref{eq:wr.6})) $\mathcal{E}_a^\pm (x,\lambda)
= a^\pm (\lambda)\chi^\pm \sigma_a \hat{\chi}^\pm (x,\lambda)$, where $a$ - the index of the Pauli matrices - takes values $\pm$ and $3$, satisfy the equation:
\begin{equation}\label{eq:eqE}\begin{split}
i \frac{\partial \mathcal{E}_a^\pm}{ \partial x } + \left[ q(x,t) - \lambda\sigma_3, \mathcal{E}_a^\pm(x,\lambda)\right] =0.
\end{split}\end{equation}
In view of the explicit form of the Wronskian relations we have to split the `squared solutions' into diagonal and off-diagonal parts
$ \mathcal{E}_a^{\pm} (x,\lambda) = \mathcal{E}_a^{\rm d,\pm} (x,\lambda) + \mathcal{E}_a^{\rm f,\pm} (x,\lambda) $:
\begin{equation}\label{eq:E-spl}\begin{split}
\mathcal{E}_\pm^\pm (x,\lambda) &= e^\pm_\pm (x,\lambda) \sigma_3 + \bPhi^\pm (x,\lambda), \\
\mathcal{E}_\mp^\pm (x,\lambda) &= e^\mp_\pm (x,\lambda) \sigma_3 + \bPsi^\pm (x,\lambda), \\
\mathcal{E}_3^\pm (x,\lambda) &= e^3_\pm (x,\lambda) \sigma_3 + \bTheta^\pm (x,\lambda),
\end{split}\end{equation}
where $e^\pm _a(x,\lambda) = \frac{1}{2} \tr ( \mathcal{E}_a^\pm (x,\lambda), \sigma_3)$. Inserting these splittings into eq. (\ref{eq:eqE})
we find:
\begin{equation}\label{eq:eqE2}\begin{aligned}
i \frac{\partial \mathcal{E}_a^{\rm d,\pm}}{ \partial x  } +  \left[ q(x),  \mathcal{E}_a^{\rm f, \pm} (x,\lambda) \right] &=0, \\
i \frac{\partial  \mathcal{E}_a^{\rm f,\pm} }{ \partial x } + \left[ q(x),  \mathcal{E}_a^{\rm d,\pm} (x,\lambda) \right] &=
\lambda \left[ \sigma_3,  \mathcal{E}_a^{\rm f,\pm} (x,\lambda) \right] .
\end{aligned}\end{equation}
Note that the first of the above equations does not involve $\lambda$ explicitly, so we can integrate it directly with the result:
\begin{multline}\label{eq:Ed}
\mathcal{E}_a^{\rm d,\pm}(x,\lambda) = i \sigma_3 \int_{\varepsilon \infty}^{x} dy\; \tr  \left( \sigma_3 q(y),
\mathcal{E}_a^{\rm f, \pm} (y,\lambda) \right) \\ + \lim_{x\to \varepsilon x} \mathcal{E}_a^{\rm d,\pm}(x,\lambda) ,
\end{multline}
where $\varepsilon=\pm $.
The second of the equations (\ref{eq:eqE2}) can be converted into an eigenvalue-type problem by acting on both sides by
$\frac{1}{4}\ad_{\sigma_3}$. Indeed, noting, that  $\frac{1}{4}[\sigma_3 , [\sigma_3, Z]]\equiv Z$ for any off-diagonal matrix $Z$ we find:
\begin{equation}\label{eq:Ef}\begin{split}
 \frac{i}{4} \left[ \sigma_3, \frac{\partial \mathcal{E}_a^{\rm f,\pm}}{ \partial x } \right] - q(x) e_a^{\pm} =
 \lambda \mathcal{E}_a^{\rm f,\pm} .
\end{split}\end{equation}
It remains to insert eq. (\ref{eq:Ed}) into (\ref{eq:Ef}) to obtain
\begin{equation}\label{eq:LaE}\begin{split}
(\Lambda_\varepsilon -\lambda) \mathcal{E}_a^{\rm f,\pm} = \frac{i}{2}q(x) \lim_{x\to \varepsilon \infty} \tr( \sigma_3 \mathcal{E}_a^{\rm d,\pm}(x,\lambda)),
\end{split}\end{equation}
where the recursion operator $\Lambda_\varepsilon$ is the integro-differential operator acting on any off-diagonal matrix-valued function $X=X^{\rm f}$ as:
\begin{equation}\label{eq:Laeps}\begin{split}
\Lambda_\varepsilon X \equiv  \frac{i}{4} \left[ \sigma_3, \frac{\partial X}{\partial x} \right] - i q(x)
\int_{\varepsilon\infty}^{x} \tr \left( \sigma_3 q(y) X(y) \right).
\end{split}\end{equation}
Obviously, in the cases when the limits in the right hand side of eq. (\ref{eq:LaE}) vanish, we obtain an eigenvalue equation for the
recursion operator. Skipping the details we formulate the results:
\begin{equation}\label{eq:Lala}\begin{aligned}
( \Lambda_+ - \lambda) \bPsi^\pm (x,\lambda) &=0, &\quad ( \Lambda_- - \lambda) \bPhi^\pm (x,\lambda) &=0, \\
( \Lambda_+ - \lambda_k^\pm) \bPsi_k^\pm (x) &=0, &\quad ( \Lambda_- - \lambda_k^\pm) \bPhi^\pm_k (x) &=0, \\
\end{aligned}\end{equation}
\begin{equation}\label{eq:Lala1}\begin{aligned}
( \Lambda_+ - \lambda_k^\pm) \dot{\bPsi}_k^\pm (x) &= \bPsi_k^\pm (x) , \\ ( \Lambda_- - \lambda_k^\pm) \dot{\bPhi}^\pm_k (x) &=\bPhi_k^\pm (x) ,\\
( \Lambda_\pm - \lambda) \bTheta^\pm (x,\lambda) &=i q(x) a^\pm(\lambda), \\ ( \Lambda_\pm - \lambda_k^\pm) \bTheta^\pm_k (x) &=0,
\end{aligned}\end{equation}
where we remind that $\dot{Z}_k^\pm= \frac{\partial Z}{ \partial \lambda }$ is evaluated at the corresponding point $\lambda_k^\pm$.

Note that the elements of the symplectic basis are eigenfunctions of $\Lambda = \frac{1}{2}(\Lambda_+ + \Lambda_-)$, namely:
\begin{equation}\label{eq:LaLa2}\begin{aligned}
(\Lambda - \lambda) \mathcal{P}(x,\lambda) =0, &\qquad (\Lambda - \lambda) \mathcal{Q}(x,\lambda) =0, \\
(\Lambda - \lambda_k^\pm ) \mathcal{P}_k^\pm (x) =0, &\qquad (\Lambda - \lambda_k^\pm) \mathcal{Q}_k^\pm (x) =0.
\end{aligned}\end{equation}

We will also need the `conjugation' properties of the recursion operators with respect to the skew-scalar product:
\begin{equation}\label{eq:adLa}\begin{split}
\biglb X, \Lambda_+ Y \bigrb = \biglb \Lambda_- X,  Y \bigrb, \qquad
\biglb X, \Lambda Y \bigrb = \biglb \Lambda X,  Y \bigrb,
\end{split}\end{equation}
which are easily verified with integration by parts.

\subsection{Minimal sets of scattering data}

A generic potential $q(x,t) \in \mathcal{M}^\mathbb{C}$ is determined by two complex-valued functions
$q_+(x,t)$ and $q_-(x,t)$. Solving the direct scattering problem for $L$ we find that $q(x,t)$ uniquely determines
the scattering matrix $T(\lambda,t)$ which has four matrix elements $a^\pm(\lambda)$ and  $b^\pm(\lambda,t)$ related by the
condition:
\begin{equation}\label{eq:detT}\begin{split}
\det T(\lambda,t) \equiv a^+a^-(\lambda) + b^+b^-(\lambda,t) =1.
\end{split}\end{equation}

Next we have to take into account the analyticity properties of $a^\pm(\lambda)$. We do this by considering the integral
\begin{equation}\label{eq:Ja}\begin{split}
\mathcal{J}_a(\lambda) = \frac{1}{2\pi i} \left( \int_{\gamma_+} \frac{d\mu \; \ln \tilde{a}^+(\mu)}{\mu -\lambda} -
\int_{\gamma_-} \frac{d\mu \; \ln \tilde{a}^-(\mu)}{\mu -\lambda} \right),
\end{split}\end{equation}
where
\begin{equation}\label{eq:ati}\begin{split}
\tilde{a}^+ (\lambda) &= a^+(\lambda) \prod_{k=1}^{N} \frac{\lambda -\lambda_k^-}{\lambda -\lambda_k^+}, \\
\tilde{a}^- (\lambda) &= a^-(\lambda) \prod_{k=1}^{N} \frac{\lambda -\lambda_k^+}{\lambda -\lambda_k^-},
\end{split}\end{equation}
and the contours $\gamma_\pm$ are shown in Figure \ref{fig:1}. Note that condition {\bf C2} ensures that $\tilde{a}^+(\lambda)$
(resp. $\tilde{a}^-(\lambda)$) is an analytic function for $\lambda \in \mathbb{C}_+$ (resp. for $\lambda \in \mathbb{C}_-$)
that has no zeroes in $\mathbb{C}_+$ (resp. in $\mathbb{C}_-$). Therefore, the function $\ln \tilde{a}^+(\lambda)$ (resp.
$\ln \tilde{a}^-(\lambda)$) is analytic for $\lambda \in \mathbb{C}_+$ (resp. for $\lambda \in \mathbb{C}_-$).
Let us now assume that $\lambda \in \mathbb{C}_+$ and apply the Cauchy residue theorem. Thus we get:
\begin{equation}\label{eq:Ja1}\begin{split}
\mathcal{J}_a(\lambda) = \ln \tilde{a}^+(\lambda).
\end{split}\end{equation}
Evaluating $\mathcal{J}_a(\lambda)$ by integrating along the contours splits into two terms. The first consists of integrating along
the real axis $\lambda\in \mathbb{R}$. The second term contains the integrals along the infinite semi-circles $\gamma_{\pm,\infty}$.
However the second term vanishes because both $a^\pm(\lambda)$ and  $\tilde{ a}^\pm(\lambda)$ tend to 1 for $\lambda \to \infty$,
see eq. (\ref{eq:apmlim}). Equating both results for $\mathcal{J}_a$ we obtain:
\begin{multline}\label{eq:Ja2}
\ln a^+(\lambda) = \frac{1}{2\pi i} \int_{-\infty}^{\infty} \frac{d\mu}{\mu -\lambda} \ln \left( a^+(\mu) a^-(\mu) \right) \\ +
\sum_{k=1}^{N} \ln  \frac{\lambda -\lambda_k^+}{\lambda -\lambda_k^-}.
\end{multline}

Similarly we can consider $\lambda\in \mathbb{R}$ and $\lambda\in \mathbb{C}_-$. Then by Cauchy theorem we find that $\mathcal{J}_a(\lambda)
=\mathcal{A}(\lambda)$, where $\mathcal{A}(\lambda)$ is introduced in eq. (\ref{eq:calA}). Taking into account that from $\det T(\lambda,t)=1$ there
follows $1/(a^+a^-(\lambda)) = 1+ \rho^+\rho^-(\lambda,t)$; and we obtain:
\begin{multline}\label{eq:calA2}
\mathcal{A}(\lambda) = -\frac{1}{2\pi i} \int_{-\infty}^{\infty} \frac{d\mu}{\mu -\lambda} \ln \left( 1 +\rho^+(\lambda) \rho^-(\lambda) \right)
\\ + \sum_{k=1}^{N} \ln  \frac{\lambda -\lambda_k^+}{\lambda -\lambda_k^-}.
 \end{multline}

Now we are ready to construct the minimal sets of scattering data for $L$ which determine uniquely both the potential $q(x,t)$ and
the scattering matrix.

\begin{lemma}\label{lem:1}
Let the potential $q(x,t) \in \mathcal{M}^\mathbb{C}$ and is such that the conditions {\bf C1} and {\bf C2} are
satisfied and the corresponding scattering matrix  $T(\lambda,t)$ is defined by eqs.  (\ref{eq:AKNS1}) and (\ref{eq:AKNS2}).
Then each of the sets
\begin{equation}\label{eq:MinSet}\begin{aligned}
 \mathcal{ T}_1 &\equiv \left\{ \rho ^+(\lambda,t ), \; \rho ^-(\lambda,t ),\quad \lambda \in\bbbr, \quad \lambda _k^\pm, \; C_k^\pm(t), \right\}, \\
 \mathcal{ T}_2 &\equiv \left\{ \tau ^+(\lambda ,t), \; \tau ^-(\lambda,t ),\quad \lambda \in\bbbr, \quad \lambda _k^\pm, \; M_k^\pm(t),  \right\}, \\
\mathcal{ T}_0 &\equiv \left\{ \eta(\lambda ), \; \kappa (\lambda,t ),\quad \lambda \in\bbbr, \quad \eta_k^\pm, \; \kappa_k^\pm(t),\;  \right\},
\end{aligned}\end{equation}
where $k=1,\ldots , N$ and
\begin{equation}\label{eq:ro-tau}\begin{aligned}
\rho^\pm (\lambda,t) &= \frac{b^\pm (\lambda,t)}{a^\pm (\lambda)}, &\quad C_k^\pm (t) &= \frac{b_k^\pm (t)}{\dot{a}^\pm_k},\\
\tau^\pm (\lambda,t) &= \frac{b^\mp(\lambda,t)}{a^\pm(\lambda)},  &\quad M_k^\pm (t) &= \frac{1}{b_k^\pm (t) \dot{a}^\pm_k},\\
\kappa(\lambda,t) &= \frac{1}{2} \ln \frac{b^+ (\lambda,t)}{b^- (\lambda,t)} ,  &\quad \kappa_k^\pm (t) &= \pm \ln b_k^\pm (t) ,\\
\eta(\lambda) &= \frac{1}{\pi} \ln \left( 1 +\rho^+ \rho^-(\lambda) \right)   , &\; \eta_k^\pm &=  \mp 2i \lambda_k^\pm  , &\quad
\end{aligned}\end{equation}
determines uniquely both the potential and the scattering matrix $T(\lambda,t)$ of $L$.
\end{lemma}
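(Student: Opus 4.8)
The plan is to prove Lemma \ref{lem:1} by establishing two implications for each of the three sets: (i) the set is uniquely determined by $q(x,t)$, and (ii) the set uniquely determines both $q(x,t)$ and $T(\lambda,t)$. Direction (i) is essentially immediate: solving the direct scattering problem produces $T(\lambda,t)$, hence $a^\pm(\lambda)$, $b^\pm(\lambda,t)$, the zeroes $\lambda_k^\pm$, and the derivatives $\dot a_k^\pm$, from which all of $\rho^\pm,\tau^\pm,\kappa,\eta$ and the norming constants $C_k^\pm, M_k^\pm, \kappa_k^\pm, \eta_k^\pm$ are computed by the explicit formulae \eqref{eq:ro-tau}. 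So the substance is direction (ii).

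First I would treat $\mathcal{T}_1$. Given $\rho^\pm(\lambda)$ on $\mathbb{R}$ together with $\{\lambda_k^\pm\}$, I recover $1+\rho^+\rho^-(\lambda)$ on the real axis, and then reconstruct $a^\pm(\lambda)$ via the dispersion relation \eqref{eq:Ja2} (equivalently \eqref{eq:calA2}), using condition {\bf C2} (and Remark \ref{rem:C2}, $N_+=N_-$) to guarantee the normalization $a^\pm\to 1$ and the convergence of the Cauchy integral. Once $a^\pm(\lambda)$ is known, $\dot a_k^\pm$ follows by differentiation at $\lambda_k^\pm$, then $b_k^\pm = C_k^\pm \dot a_k^\pm$ and $b^\pm(\lambda,t)=\rho^\pm(\lambda)a^\pm(\lambda)$ on $\mathbb{R}$; this reconstructs the full scattering matrix $T(\lambda,t)$. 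Finally, $q(x,t)$ is recovered from the expansion \eqref{eq:4.11e} (or equivalently by solving the associated Riemann–Hilbert/GLM inverse problem), since all the coefficients $\rho^\pm(\lambda)$, $C_k^\pm$, $\lambda_k^\pm$ entering \eqref{eq:4.11e} are now available, and the completeness Theorem \ref{thm:1} guarantees the map from coefficients to $q$ is well defined and injective. The argument for $\mathcal{T}_2$ is entirely parallel, using $\tau^\pm$ in place of $\rho^\pm$, the relation $\tau^\pm = b^\mp/a^\pm$, the identity $M_k^\pm = 1/(b_k^\pm\dot a_k^\pm)$, and the expansion \eqref{eq:4.11d}.

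For $\mathcal{T}_0$, given $\eta(\lambda)=\tfrac1\pi\ln(1+\rho^+\rho^-(\lambda))$ and the discrete data $\eta_k^\pm = \mp 2i\lambda_k^\pm$, I first read off $\lambda_k^\pm$ from $\eta_k^\pm$, and then reconstruct $\mathcal{A}(\lambda)$ — hence $a^\pm(\lambda)$ and $\dot a_k^\pm$ — from \eqref{eq:calA2}. Knowledge of $\kappa(\lambda)=\tfrac12\ln(b^+/b^-)$ on $\mathbb{R}$ combined with $b^+b^- = 1 - a^+a^- = a^+a^-\rho^+\rho^- = a^+a^-(e^{\pi\eta}-1)$ determines $b^+(\lambda,t)$ and $b^-(\lambda,t)$ separately (up to an overall sign fixed by analyticity/continuity conventions), so $T(\lambda,t)$ is recovered; similarly $\kappa_k^\pm = \pm\ln b_k^\pm$ gives $b_k^\pm$, hence $C_k^\pm$ and $M_k^\pm$. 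Then either expansion \eqref{eq:4.11d}/\eqref{eq:4.11e}, or directly \eqref{eq:4.11f} together with \eqref{eq:3.20P}–\eqref{eq:3.20Q}, reconstructs $q(x,t)$.

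The main obstacle I anticipate is making the reconstruction of $a^\pm(\lambda)$ from the continuous data rigorous: one must verify that the dispersion integrals in \eqref{eq:Ja2}, \eqref{eq:calA2} converge and reproduce genuinely analytic, nonvanishing $\tilde a^\pm$ with the correct asymptotics, which is exactly where condition {\bf C2} and the constraint $N_+=N_-$ are indispensable; in the generic (non-self-adjoint) AKNS setting one cannot invoke a Schwarz-reflection symmetry between $a^+$ and $a^-$, so the argument genuinely relies on $\det T = 1$ to close the system. A secondary point is the well-definedness and injectivity of the map from expansion coefficients back to $q(x)$, but this is supplied by Theorem \ref{thm:1} and Proposition \ref{pro:V.1}, so I would simply cite them. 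Uniqueness of $q$ given $T$ is the classical statement that the inverse scattering problem has at most one solution, which again follows from completeness of the squared solutions.
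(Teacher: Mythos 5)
Your proposal is correct and follows essentially the same route as the paper: recover $a^\pm(\lambda)$ from each set via the dispersion relation (\ref{eq:calA2}), then $b^\pm(\lambda,t)$ and the discrete data algebraically (for $\mathcal{T}_0$ via exactly the relation (\ref{eq:bpm})), and finally $q(x,t)$ from the expansions (\ref{eq:4.11d})--(\ref{eq:4.11e}) together with Proposition \ref{pro:V.1}. The extra care you take over convergence of the Cauchy integrals and the square-root branch for $b^\pm$ goes slightly beyond what the paper records, but does not change the argument.
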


\begin{proof}
1. First we prove that from each $\mathcal{T}_k$ one can recover the scattering matrix $T(\lambda,t)$. Indeed, $\mathcal{T}_1$
determines uniquely the right hand side of eq. (\ref{eq:calA2}), i.e. one can recover both $a^+(\lambda)$ and $a^-(\lambda)$  as
piece-wise analytic functions of $\lambda$. Then we can easily find $b^+(\lambda)$ and $b^-(\lambda)$  since
$b^\pm(\lambda,t) = \rho^\pm(\lambda,t) a^\pm(\lambda)$. Similar arguments work out for $\mathcal{T}_2$ and  $\mathcal{T}_0$.
For  $\mathcal{T}_2$ it is enough to check that $\rho^+\rho^-(\lambda,t) = \tau^+\tau^-(\lambda,t) $ and that
$b^\pm(\lambda,t) = \tau^\mp(\lambda,t) a^\mp(\lambda)$. For  $\mathcal{T}_0$  we need the relations:
\begin{equation}\label{eq:bpm}\begin{split}
b^\pm(\lambda,t) = e^{\pm \kappa(\lambda,t)} \sqrt{1 - e^{-\pi \eta(\lambda)} }.
\end{split}\end{equation}

2. The fact that each of the sets   $\mathcal{T}_k$ (\ref{eq:MinSet}) determines uniquely the potential $q(x,t)$ is an immediate
consequence of the expansions (\ref{eq:4.11d})--(\ref{eq:4.11e}) and Proposition \ref{pro:V.1}. Indeed, the elements of   $\mathcal{T}_1$
and   $\mathcal{T}_2$ are in fact the expansion coefficients of $q(x,t)$.

\end{proof}

\section{The Fundamental Properties of the Generic NLEE}

In the next Section we shall see how this set of variables is
related to the action--angle variables of the corresponding NLEE.

The expansion (\ref{eq:4.10e}) allows us to introduce one more
minimal set of scattering data:
\begin{eqnarray}\label{eq:ms-T}
\mathcal{ T} \equiv \left\{ \eta(\lambda ), \kappa (\lambda ),
\quad \lambda \in \bbbr, \quad \eta_k^\pm, \quad \kappa _k^\pm, \right\},
\end{eqnarray}
$ k=1,\dots, N$, which, like $\mathcal{ T}_1 $ and $\mathcal{ T}_2 $ in
(\ref{eq:MinSet}), allows  to recover uniquely both the
scattering matrix $T(\lambda ) $ and the corresponding potential.
Indeed, to determine $T(\lambda ) $ from (\ref{eq:ms-T}) we make
use of the dispersion relations (\ref{eq:4.11f}), (\ref{eq:calA2}), which allow us to find
$a^\pm(\lambda ) $ in their whole domains of analyticity, knowing
$\eta(\lambda ) $ and $\lambda _k^\pm $. Then, knowing
$a^\pm(\lambda ) $ and $b^+(\lambda )/b^-(\lambda ) =\exp (2\kappa
(\lambda )) $ it is easy to determine $b^\pm(\lambda ) $ as
functions on the real $\lambda $-axis. The coefficients $C_k^\pm =
b_k^\pm /\dot{a}_k^\pm $ and $M_k^\pm = 1/(b_k^\pm \dot{a}_k^\pm)
$ are obtained through $b_k^\pm = \exp(\pm\kappa _k^\pm) $
(\ref{eq:ro-tau}). In order to determine $\dot{a}_k^+ = \left. da^+/d\lambda
\right|_{\lambda=\lambda_k^+} $  we make use of eq. (\ref{eq:apm0}) and the
dispersion relation (\ref{eq:Ja2}). Skipping the details we get:
\begin{multline}\label{eq:adotp}
\dot{a}_k^+ = \frac{1}{\lambda_k^+ -\lambda_k^-} \prod_{j\neq k}^{} \frac{\lambda_k^+ -\lambda_j^+}{\lambda_k^+ -\lambda_j^-} \\
\times \exp \left( \frac{1}{2\pi i} \int_{-\infty}^{\infty} \frac{ d\mu}{\mu - \lambda_k^+ } \ln \left( a^+a^-(\mu) \right) \right).
\end{multline}
Similarly, for $\dot{a}_k^- = \left. da^-/d\lambda \right|_{\lambda=\lambda_k^-} $ we get
\begin{multline}\label{eq:adotm}
\dot{a}_k^- = \frac{1}{\lambda_k^- -\lambda_k^+} \prod_{j\neq k}^{} \frac{\lambda_k^- -\lambda_j^-}{\lambda_k^- -\lambda_j^+} \\
\times \exp \left( -\frac{1}{2\pi i} \int_{-\infty}^{\infty} \frac{ d\mu}{\mu - \lambda_k^- } \ln \left( a^+a^-(\mu) \right) \right).
\end{multline}

The results of the previous Section allow us to solve the NLEE related to the Lax operator
$L$ (\ref{eq:AKNS0}) just like one solves PDE's with constant coefficients. The following theorem
demonstrates that.

\subsection{Description of the class of NLEEs}\label{sec:l6-2}

\begin{theorem}\label{thm:2}
Let the potential $q(x,t)$ satisfy conditions {\bf C1} and {\bf C2} and let the function $f(\lambda ) $ be meromorphic for
$\lambda\in \bbbc $ and has no singularities on the spectrum of $L$. Then the NLEEs:
\begin{eqnarray}
\label{eq:7.11c} i\sigma _3 q_t + 2 f(\Lambda ) q(x,t) = 0, \\
\label{eq:7.11a}i\sigma _3 q_t + 2 f(\Lambda _+) q(x,t) = 0, \\
\label{eq:7.11b} i\sigma _3 q_t + 2 f(\Lambda _-) q(x,t) = 0,
\end{eqnarray}
are pairwise equivalent to the following linear evolution
equations for the scattering data:
\begin{equation}\label{eq:7.12c}\begin{aligned}
i \frac{\partial \eta}{ \partial t } &= 0, &\quad  i  \frac{\partial \kappa }{ \partial t } &= 2 f(\lambda ) , \\
 i\frac{\partial  \eta^\pm_{k}}{ \partial t } &= 0, &\quad i \frac{\partial \kappa^\pm_{k}}{ \partial t } &= 2 f_k^\pm ,
\end{aligned}\end{equation}
\begin{equation}\label{eq:7.12a}\begin{aligned}
i \frac{\partial \rho ^\pm}{ \partial t }  \mp 2 f(\lambda ) \rho ^\pm (\lambda ,t) &= 0, &\quad
\frac{\partial \lambda _{k}^{\pm}}{ \partial t } C_k^\pm (t) &= 0, \\
i\frac{\partial C_{k}^\pm}{ \partial t } \mp 2 f_k^\pm C_k^\pm (t) &= 0,
\end{aligned}\end{equation}
\begin{equation}\label{eq:7.12b}\begin{aligned}
i\frac{\partial \tau ^\pm}{ \partial t }  \pm 2 f(\lambda ) \tau ^\pm (\lambda ,t) &= 0, &\quad
 \frac{\partial \lambda _{k}^{\pm}}{ \partial t }  M_k^\pm (t) &= 0, \\
i \frac{\partial M_{k}^\pm}{ \partial t } \pm 2 f_k^\pm M_k^\pm (t) &= 0,
\end{aligned}\end{equation}
where $f_k^\pm = f(\lambda_k^\pm)$.
\end{theorem}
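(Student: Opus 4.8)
The plan is, for each of the three NLEEs separately, to expand both $\sigma_3 q_t$ and $f(\Lambda_\bullet)q$ over the complete set of ``squared'' solutions naturally attached to that recursion operator, to read off the evolution equations for the scattering data by matching the expansion coefficients (Proposition~\ref{pro:V.1} supplying the uniqueness needed to do so), and finally to derive the pairwise equivalence of the three NLEEs from the equivalence of the three resulting linear systems, with the dispersion relations doing the work in that last step.

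Take \eqref{eq:7.11a}. By \eqref{eq:Lala} every element of the system $\{\bPsi^\pm(x,\lambda),\bPsi_k^\pm(x)\}$ is a genuine eigenfunction of $\Lambda_+$, with eigenvalue $\lambda$ or $\lambda_k^\pm$ respectively; since $f$ has no singularities on the spectrum of $L$, the numbers $f(\lambda)$ for $\lambda\in\mathbb{R}$ and $f_k^\pm=f(\lambda_k^\pm)$ are all finite, so applying $f(\Lambda_+)$ to the expansion \eqref{eq:4.11e} of $q(x)$ simply multiplies the continuous integrand by $f(\lambda)$ and the $k$-th discrete term by $f_k^\pm$ (immediate for polynomial $f$ by iterating $\Lambda_+$; for rational $f$ one invokes that the resolvent of $\Lambda_+$ off the spectrum is a bounded integral operator). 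On the other hand $\sigma_3 q_t$ is given by \eqref{eq:7.2a} over the \emph{same} system, now also carrying $\dot{\bPsi}_k^\pm(x)$ because the eigenvalues may move in time. Substituting both into \eqref{eq:7.11a} and invoking Proposition~\ref{pro:V.1}, I match coefficients term by term: the $\bPsi^\pm(x,\lambda)$ terms give $i\rho_t^\pm\mp 2f(\lambda)\rho^\pm=0$; the $\dot{\bPsi}_k^\pm(x)$ terms, which do not occur in $f(\Lambda_+)q$, force $\lambda_{k,t}^\pm C_k^\pm=0$; and the $\bPsi_k^\pm(x)$ terms give $iC_{k,t}^\pm\mp 2f_k^\pm C_k^\pm=0$ -- that is, \eqref{eq:7.12a}. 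Reading the same computation backwards: if the data satisfy \eqref{eq:7.12a}, then every expansion coefficient of $i\sigma_3 q_t+2f(\Lambda_+)q$ vanishes, so by Proposition~\ref{pro:V.1} this combination is identically zero, and the equivalence \eqref{eq:7.11a}$\Leftrightarrow$\eqref{eq:7.12a} is established.

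The case \eqref{eq:7.11b} is handled identically with $\bPsi$ everywhere replaced by $\bPhi$, using \eqref{eq:4.11d}, \eqref{eq:7.2b} and $\Lambda_-\bPhi^\pm(x,\lambda)=\lambda\bPhi^\pm$, $\Lambda_-\bPhi_k^\pm=\lambda_k^\pm\bPhi_k^\pm$ from \eqref{eq:Lala}; it gives \eqref{eq:7.12b}. The case \eqref{eq:7.11c} is the neatest: by \eqref{eq:LaLa2} the entire symplectic basis is made of genuine eigenfunctions of $\Lambda$ with no Jordan blocks, the expansion \eqref{eq:4.11f} of $q$ involves only the $\mathcal{P}$'s with \emph{constant} coefficients, and $\sigma_3 q_t$ is \eqref{eq:dqt}; substituting into \eqref{eq:7.11c} and matching the $\mathcal{Q}$, $\mathcal{P}$, $\mathcal{Q}_k^\pm$, $\mathcal{P}_k^\pm$ coefficients gives at once $\eta_t=0$, $i\kappa_t=2f(\lambda)$, $\eta_{k,t}^\pm=0$, $i\kappa_{k,t}^\pm=2f_k^\pm$, i.e.\ \eqref{eq:7.12c}.

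It remains to deduce the pairwise equivalence, for which I would show that \eqref{eq:7.12a}, \eqref{eq:7.12b} and \eqref{eq:7.12c} are equivalent linear systems. In each case the discrete eigenvalues are frozen, $\lambda_{k,t}^\pm=0$ (using $C_k^\pm\neq0$, $M_k^\pm\neq0$ in the first two cases and $\eta_k^\pm=\mp2i\lambda_k^\pm$ in the third), and the product $\rho^+\rho^-=\tau^+\tau^-=(a^+a^-)^{-1}-1$ is time-independent, whence $\eta$ is constant and, via the dispersion relation \eqref{eq:calA2}, $a^\pm(\lambda)$ are constant in time. Once $\lambda_k^\pm$ and $a^\pm$ are fixed, the relations \eqref{eq:ro-tau} together with $b^\pm=\rho^\pm a^\pm=\tau^\mp a^\mp$ are time-independent bijections among the triples $\mathcal{T}_1$, $\mathcal{T}_2$, $\mathcal{T}_0$, which turn the $\rho^\pm$-equations into the $\tau^\pm$- and $\kappa$-equations and back, and by Lemma~\ref{lem:1} each flow determines the same potential $q(x,t)$. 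I expect this final step to be the main obstacle: at the level of the PDEs the operators $\Lambda_+,\Lambda_-,\Lambda$ act differently on a generic potential and the three NLEEs are genuinely distinct, so the equivalence is not a pointwise identity of equations but has to be extracted through the minimal-scattering-data picture and the dispersion relations.
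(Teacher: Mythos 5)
Your proposal is correct, and its core step is the same as the paper's: expand $\sigma_3 q_t$ and $q(x,t)$ over a complete set of squared solutions, use the eigenfunction property of the recursion operator, and invoke Proposition~\ref{pro:V.1} to match coefficients. Where you genuinely diverge is in the final ``pairwise equivalence'' step. The paper disposes of it in one line by asserting that the left-hand sides of (\ref{eq:7.11a})--(\ref{eq:7.11c}) \emph{coincide identically} -- the three expansions (\ref{eq:4.11d}), (\ref{eq:4.11e}), (\ref{eq:4.11f}) represent the same element $q$, applying $f$ multiplies every coefficient by $f(\lambda)$ (resp.\ $f_k^\pm$), and the relations (\ref{eq:ro-tau}) identify the results. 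You instead prove the equivalence of the three \emph{linear systems} (\ref{eq:7.12a})--(\ref{eq:7.12c}) directly (frozen eigenvalues, time-independence of $a^\pm$ via the dispersion relation, the algebraic bijections $b^\pm=\rho^\pm a^\pm=\tau^\mp a^\mp$) and then pull the equivalence back to the potentials by Lemma~\ref{lem:1}. Both routes are valid; yours is longer but more explicit about where the uniqueness of the inverse map is used, while the paper's is shorter but leans on an unproved-in-detail identity of the three left-hand sides. One small correction to your closing remark: the three NLEEs are \emph{not} genuinely distinct as PDEs. Although $\Lambda_+$, $\Lambda_-$ and $\Lambda$ differ on a generic $X\in\mathcal{M}$, when applied to the potential $q$ itself the integrand $\tr\bigl(\sigma_3 q(y)\,\Lambda_\pm^{p}q(y)\bigr)$ is a total $y$-derivative of a quantity vanishing at both infinities, so $\Lambda_+^p q=\Lambda_-^p q=\Lambda^p q$ for all $p\ge 0$ -- this is exactly what (\ref{eq:7.15})--(\ref{eq:7.17}) illustrate -- and for polynomial $f$ the three left-hand sides are the same local expression in $q$ and its $x$-derivatives. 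So the ``main obstacle'' you anticipate is in fact absent; your scattering-data detour is a perfectly sound, if heavier, substitute for it.
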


\begin{proof}
Inserting the expansions  over the symplectic
basis (see equations (\ref{eq:dqt}) and (\ref{eq:4.11f})) into the left hand side of (\ref{eq:7.11c}) we get:
\begin{multline}\label{eq:7.13c}
 i \int_{-\infty }^{\infty } d\lambda \, \left\{ i \eta_t \mathcal{Q}
(x,t,\lambda )  - \left( i \kappa _t - 2 f(\lambda )  \right) \mathcal{P} (x,t,\lambda ) \right\} + \\
 + i \sum_{k=1}^{N} \left\{ i \eta^+_{k,t} \mathcal{Q}_k^+ (x,t) - \left( i \kappa ^+_{k,t} - 2
f_k^+ \right) \mathcal{P}_k^+ (x,t) \right.  \\
 \left.+ i \eta^-_{k,t} \mathcal{Q}_k^- (x,t) - \left( i \kappa
^-_{k,t} - 2 f_k^- \right) \mathcal{P}_k^- (x,t) \right\} =0 .
\end{multline}
Using again proposition~\ref{pro:V.1} we establish the equivalence of (\ref{eq:7.11c}) and (\ref{eq:7.12b}).

To complete the proof of the theorem it is necessary to invoke
proposition~\ref{pro:V.1} and equation (\ref{eq:ro-tau}), from which
it follows that the l.h. sides of the NLEEs (\ref{eq:7.11a}) -- (\ref{eq:7.11c}) coincide.  The theorem is proved.

\end{proof}

\begin{remark}\label{rem:l6-1}
Note the special  role of the symplectic basis and the related scattering
data \Ref{ch:GFT}{eq:ms-T}. From (\ref{eq:7.12c}) we see that half of these data namely,
$\eta(\lambda ) $, $\eta_k^\pm $ are time-independent, while
 the other half $\kappa (\lambda ) $, $\kappa _k^\pm $, depend linearly
on time.  This implies that $\mathcal{ T} $ in fact provides us
with global action--angle variables for the NLEEs (\ref{eq:7.11c}) -- (\ref{eq:7.11b}).
We shall return to this question in Section~\ref{sec:l6-3} below.

\end{remark}

\subsection{Examples  of NLEEs with polynomial dispersion laws}\label{sec:l6-3}

Here we list several examples of physically important NLEEs which
fall into the above scheme. Theorem \ref{thm:2} shows, that
each NLEEs is specified by the corresponding function $f(\lambda )
$. In physics this function is known as the dispersion law of the
NLEEs; clearly $f(\lambda ) $ fixes up uniquely both the explicit
form of the NLEEs and the evolution of the scattering
data. Below we list a few examples of NLEE with  `polynomial in
$\lambda $' dispersion laws.

In order to find the explicit form of the NLEEs we shall need to
calculate $\Lambda_\pm^p q(x,t) $ for $p=1,2,3$. The calculation shows that:
\begin{align}\label{eq:7.15}
 \Lambda_\pm q(x,t) &= {i  \over 4 } [\sigma _3, q_x], \\
\label{eq:7.16}
 (\Lambda_\pm)^2 q(x,t) &= -{1 \over 4 } \left(  q_{xx} + 2 q_+ q_- q(x,t) \right), \\
\label{eq:7.17}
 (\Lambda_\pm)^3 q(x,t) &= -{i \over 16 } \left[\sigma _3,  q_{xxx} + 6 q_+
q_- q_x \right] .
\end{align}
These expressions illustrate two important facts. The first one
was actually introduced by (\ref{eq:ro-tau}); the second one, which
will be analyzed below, is that $\Lambda ^k q(x,t) $ are local in
$q(x,t) $ for positive $k $, i.e. $\Lambda ^k q(x,t) $ depend only
on $q $ and its $x $-derivatives.

The generic NLEEs will be systems of equations for the two
complex-valued functions $q_+(x,t) $ and $q_-(x,t) $, which
parametrize the potential $q(x,t) $. Next, we shall show how these
systems of NLEEs can be simplified by additional algebraic
restrictions on $q(x,t) $. Let us consider some examples.

\begin{example}[The GNLS equation.]\label{exa:nls}
This generalization of the NLS equation is obtained by choosing
$f(\lambda ) = -c_2\lambda ^2 $. Then (\ref{eq:7.11a}) and
(\ref{eq:7.16}) lead to the following system:
\begin{equation}\label{eq:7.18a}\begin{aligned}
 i \frac{\partial q_+}{ \partial t}  + \frac{c_2}{2} \frac{\partial^2 q_+}{ \partial x^2 } + 2 c_2 (q_+)^2 q_-(x,t)) &= 0, \\
- i \frac{\partial q_-}{ \partial t} +\frac{c_2}{2} \frac{\partial^2 q_-}{ \partial x^2 } + 2c_2 (q_-)^2 q_+(x,t)) &= 0 .
\end{aligned}\end{equation}
\end{example}

\begin{example}[The GmKdV equation.]\label{exa:mkdv}
The dispersion law for the generalized mKdV equation is given by
$f(\lambda ) = -4c_3 \lambda ^3 $. Then (\ref{eq:7.11a}) and  (\ref{eq:7.17}) lead to:
\begin{equation}\label{eq:7.19a}\begin{aligned}
 \frac{\partial q_+}{ \partial t} + c_3 \frac{\partial^3 q_+}{ \partial x^3 } + 6c_3 q_+ \frac{\partial q_+}{ \partial x } q_-(x,t)) &= 0, \\
 \frac{\partial q_-}{ \partial t} +  c_3 \frac{\partial^3 q_-}{ \partial x^3 } + 6c_3 q_- \frac{\partial q_-}{ \partial x } q_+(x,t)) &= 0 .
\end{aligned}\end{equation}
\end{example}

\begin{example}[Mixed GNLS--GmKdV equation.]\label{exa:nls-mkdv}
The last example here is a generalization of the NLEEs with a
dispersion law  $f(\lambda ) = -c_2\lambda ^2 - 4c_3 \lambda ^3 $,
where $c_2$ and $c_3  $ are some real constants.  The corresponding system of NLEEs is:
\begin{equation}\label{eq:7.20b}\begin{aligned}
 i \frac{\partial q_+}{ \partial t} &+ \frac{c_2}{2} \frac{\partial^2 q_+}{ \partial x^2 } + c_2 q_+^2 q_-(x,t) \\
& + i c_3 \left( \frac{\partial^3 q_+}{ \partial x^3 } + 6 q_+ \frac{\partial q_+}{ \partial x } q_-(x,t)\right)  = 0, \\
- i \frac{\partial q_-}{ \partial t} &+ \frac{c_2}{2}\frac{\partial^2 q_-}{ \partial x^2 } + c_2 q_-^2 q_+(x,t) \\
&- i c_3 \left( \frac{\partial^3 q_-}{ \partial x^3 } + 6 q_- \frac{\partial q_-}{ \partial x } q_+(x,t)\right)  = 0 .
\end{aligned}\end{equation}
\end{example}

It is well known, that each of these equations can be further simplified by imposing algebraic constraints
on $q_\pm$. For example, putting $c_2=1$ and $q_+(x,t) = q_-^*(x,t) = u(x,t) $ eq. (\ref{eq:7.18a}) goes
into the famous NLS eq. \cite{ZaSha}. Similarly, putting $q_+(x,t)=q_-(x,t)=v(x,t)$ eq. (\ref{eq:7.20b}) becomes
the modified KdV for $v(x,t)$. We will return to this point later.

\subsection{The Integrals of motion and trace identities}

Here we start with the generating functional of the integrals of motion $\mathcal{A}(\lambda)$ (\ref{eq:DotA}).
Using the contour integration method, see \cite{AKNS,Vg-EKh1,Vg-EKh2,GeYaV} we find that it can be expressed in terms
of the minimal set of scattering data as follows (see Remark \ref{rem:C2} and eq. (\ref{eq:apmlim})):
\begin{multline}\label{eq:calA'}
\mathcal{A}(\lambda) = \frac{1}{2\pi i} \int_{-\infty}^{\infty} \frac{ d\mu}{\mu -\lambda} \ln (a^+(\mu) a^-(\mu)) \\
+ \sum_{k=1}^{N} \ln \left( \frac{ \lambda - \lambda_k^+}{\lambda - \lambda_k^-} \right).
\end{multline}

As integrals of motion we will consider the expansion coefficients of $\mathcal{A}$ over the inverse powers of $\lambda$:
\begin{equation}\label{eq:A-Cp}\begin{split}
 \mathcal{A} (\lambda) = i \sum_{p=1}^{\infty} \frac{C_p}{\lambda^p}.
\end{split}\end{equation}
From (\ref{eq:calA'}) and (\ref{eq:A-Cp}) it follows that
\begin{multline}\label{eq:Cp-sd}
 C_p  = \frac{1}{2\pi}  \int_{-\infty}^{\infty}  d\mu \;\mu^{p-1} \ln (a^+(\mu) a^-(\mu)) \\ - \frac{1}{p}
 \sum_{k=1}^{N} \left( (\lambda_k^+)^p - (\lambda_k^-)^p \right)
 = -\frac{1}{2}  \int_{-\infty}^{\infty}  d\mu \;\mu^{p-1} \eta (\mu) \\ - \frac{i^p}{2^p p}
 \sum_{k=1}^{N} \left( (\eta_k^+)^p - (-\eta_k^-)^p \right) .
\end{multline}

Besides, from the Wronskian relation (\ref{eq:wr.18'}) we derive their dependence on the potential $q(x,t)$
and its $x$-derivatives. Indeed, using eqs. (\ref{eq:DotA}) and the last line of (\ref{eq:Lala1}) we obtain:
\begin{equation}\label{eq:dA1}\begin{split}
\frac{\partial \mathcal{A}}{ \partial \lambda } &=  - i \int_{-\infty}^{\infty} dx\; \left( \frac{1}{a^\pm (\lambda)}
\tr \left( \bTheta^\pm (x,\lambda) \sigma_3 \right) -1 \right) \\
& = \int_{-\infty}^{\infty} dx\; \left( \tr \left(  (\Lambda_\pm -\lambda)^{-1} q(x)\sigma_3 \right) +i \right).
\end{split}\end{equation}
It remains to compare the expansions of the right hand sides of (\ref{eq:dA1}) and (\ref{eq:calA'}) over the inverse
powers of $\lambda$. As a result we obtain the following compact expression for $C_p$ in terms of $q$ and its derivatives
\begin{equation}\label{eq:Cp-q}\begin{split}
C_p = \frac{1}{ip} \int_{-\infty}^{\infty} dx\;  \tr \left(  \Lambda_\pm^p q(x)\sigma_3 \right).
\end{split}\end{equation}

The well known trace identities \cite{AKNS,FaTa,Vg-EKh2} follow immediately by equating the right hand sides of eqs.
(\ref{eq:Cp-sd}) and (\ref{eq:Cp-q}).
Let us list the first few nontrivial conserved quantities in terms of $q(x)$
\begin{multline}\label{eq:C1}
C_1 = -\frac{1}{2} \int_{-\infty}^{\infty} dx\; q_+ q_-(x,t) \\
= -\frac{1}{2} \int_{-\infty}^{\infty}  d\mu \eta (\mu)  +  \frac{1}{2}\sum_{k=1}^{N} \left( \eta_k^+ + \eta_k^-\right),
\end{multline}
\begin{multline}\label{eq:C2}
C_2 = -\frac{i}{8} \int_{-\infty}^{\infty} dx\; \left(  q_+  \frac{\partial q_-}{ \partial x }- q_- \frac{\partial q_+}{ \partial x }\right) \\
= -\frac{1}{2 } \int_{-\infty}^{\infty}  d\mu \mu \eta(\mu)   + \frac{i}{8}\sum_{k=1}^{N} \left( (\eta_k^+)^2 - (\eta_k^-)^2\right),
\end{multline}
\begin{multline}\label{eq:C3}
C_3 = -\frac{1}{8} \int_{-\infty}^{\infty} dx\; \left(  - \frac{\partial q_+}{ \partial x }\frac{\partial q_-}{ \partial x }
+ (q_-  q_+)^2\right)\\
= -\frac{1}{2 } \int_{-\infty}^{\infty}  d\mu \mu^2 \eta (\mu)  - \frac{1}{24}\sum_{k=1}^{N} \left( (\eta_k^+)^3 + (\eta_k^-)^3\right),
\end{multline}
and
\begin{multline}\label{eq:C4}
 C_4 = \frac{i}{32} \int_{-\infty}^{\infty} dx\; \left(   \frac{\partial q_+}{ \partial x } \frac{\partial^2 q_-}{ \partial x^2 }-
  \frac{\partial^2 q_+}{ \partial x^2 } \frac{\partial q_-}{ \partial x } \right. \\
  \left.  - 3 q_+q_- \left( q_+\frac{\partial q_-}{ \partial x }-  q_-\frac{\partial q_+}{ \partial x }    \right) \right)\\
= -\frac{1}{2} \int_{-\infty}^{\infty}  d\mu \mu^3 \eta (\mu)   - \frac{i}{64}\sum_{k=1}^{N} \left( (\eta_k^+)^4 - (\eta_k^-)^4\right) .
\end{multline}
One can prove that the densities of all $C_p$ are local in $q(x)$, i.e. depend only on
$q(x)$ and its $x$-derivatives.

\subsection{The generic NLEEs as  Complex Hamiltonian System}\label{ssec:l6-8}

Here we start  with the notions of a complexified phase space and Hamiltonian.
We can introduce  canonical Poisson brackets between any two functionals on $\mathcal{M}^\bbbc$ by:
\begin{multline}\label{eq:19.15}
\{ F, G\}_{(0)}^\bbbc \\ = i \int_{-\infty }^{\infty } dx \left( {\delta F  \over \delta q_-(x) } {\delta G  \over \delta q_+(x) }
- {\delta F\over \delta q_+(x) } {\delta G  \over \delta q_-(x) }\right) ,
\end{multline}
where both $F $ and $G $ are complex-valued functionals on
$\mathcal{M}^\bbbc $ depending analytically on $q_\pm(x) $.
Then the corresponding canonical symplectic form can be written as:
\begin{eqnarray}\label{eq:v7-0me}
\Omega _{(0)}^\bbbc = {1 \over i} \int_{-\infty }^{\infty } dx\, \delta q_-(x) \wedge \delta q_+(x).
\end{eqnarray}

Next we need to specify the Hamiltonian $H^\mathbb{C}$ as a functional over $\mathcal{M}^\bbbc$. Skipping the
details (see \cite{GKMV1,GKMV2,GeYaV}) we note that  $H^\mathbb{C}$ must depend analytically on
the fields $q_\pm (x,t)$ and their $x$-derivatives.

The generic Hamiltonian equations of motion generated by $H^\bbbc
$ and the Poisson brackets (\ref{eq:19.15}) are the following:
\begin{equation}\label{eq:19.16}\begin{split}
{\partial q_+  \over \partial t } &= \{ H^\bbbc, q_+(x,t)\}_{(0)}^\bbbc =i{\delta H^\bbbc \over \delta q_-(x) }, \\
 -{\partial q_-  \over \partial t } &= -\{H^\bbbc, q_-(x)\}_{(0)}^\bbbc = i{\delta H^\bbbc \over \delta q_+(x) } .
\end{split}\end{equation}
They are equivalent to a standard Hamiltonian system (with twice more
real dynamical variables and degrees of freedom), provided
$H^\bbbc $ is analytic with respect to $q_+ $ and $q_- $. The
analyticity of $H^\bbbc $ means that its real and imaginary parts
$H_0^\bbbc $ and $H_1^\bbbc $ satisfy the analog of Cauchy-Riemann equations:
\begin{equation}\label{eq:19.17}
{\delta H_0^\bbbc  \over \delta q^0_\pm(x) } = {\delta H_1^\bbbc
\over \delta q^1_\pm(x) }, \qquad {\delta H_1^\bbbc  \over \delta
q^0_\pm(x) } = -{\delta H_0^\bbbc  \over \delta q^1_\pm(x) },
\end{equation}
where $q^{a }_\pm $, $a =0,1 $ are the real and imaginary parts of $q_\pm (x) $:
\begin{equation}\label{eq:19.18}
q_\pm (x,t) = q^0_\pm (x,t) + i q^1_\pm (x,t).
\end{equation}
Then eqs.  (\ref{eq:19.16}) go into:
\begin{equation}\label{eq:19.19}\begin{aligned}
{\partial q^0_+ \over \partial t} &= {\delta H_{(0)}^\bbbc  \over \delta q^1_-(x) },  &\quad
 {\partial q^1_+ \over \partial t} &= {\delta H_{(0)}^\bbbc  \over \delta q^0_-(x) }, \\
{\partial q^0_-  \over \partial t} &= -{\delta H_{(0)}^\bbbc  \over \delta q^1_+(x) }, &\quad
 {\partial q^1_-  \over \partial t} &= -{\delta H_{(0)}^\bbbc  \over \delta q^0_+(x) },
\end{aligned}\end{equation}
which can be viewed as the equation of motion of an infinite-dimensional Hamiltonian system  with real-valued Hamiltonian
$H_{(0)}^\bbbc $.  The elements of the phase space $\mathcal{M}^\bbbc $ can be viewed also as the 4-tuples of real
functions $\{ q^0_+, q^1_+, q^0_-, q^1_- \} $ vanishing fast enough for $x\to\pm\infty $. We designate the space of such $4
$-tuples by $\mathcal{M}_\bbbr $.
The  Poisson brackets on $\mathcal{M}_\bbbr$  correspond to the canonical symplectic form  on $\mathcal{M}_\bbbr  $
defined by the real part of $\Omega _{(0)}^\bbbc $:
\begin{equation}\label{eq:Re-ome0}
\Re \Omega _{(0)}^\bbbc = \int_{-\infty }^{\infty } dx\, \left(
\delta q^0_+ \wedge \delta q^1_- + \delta q^1_+ \wedge \delta
q^0_- \right).
\end{equation}

In what follows we shall use the formal Hamiltonian formulation of
the generic NLEEs with complex-valued Hamiltonian $H^\bbbc $ and complex valued
dynamical fields $q^\pm(x) $ as in eq. (\ref{eq:19.16}). Those who
prefer the standard Hamiltonian formulations  using real-valued Hamiltonians
 and dynamical fields can always rewrite the generic
NLEEs in their equivalent form (\ref{eq:19.19}).

\begin{remark}\label{rem:1a}
Similarly we can use an equivalent Hamiltonian formulation with $H=H_{(1)}^\mathbb{C}$ and
\begin{equation}\label{eq:Im-ome0}
\mbox{\rm Im\;} \Omega _{(0)}^\bbbc = \int_{-\infty }^{\infty } dx\, \left( \delta q^1_- \wedge \delta q^1_+ - \delta q^0_- \wedge \delta
q^0_+ \right).
\end{equation}
In what follows we will use the previous formulation with the Poisson brackets:
\begin{subequations}\label{eq:cPB}
\begin{equation}\label{eq:cPBa}\begin{aligned}
\{ q^a_+(x,t),  q^b_+(y,t)\}_{(0)} &=0, &\quad \{ q^a_-(x,t),  q^b_-(y,t)\}_{(0)} &=0, \\
\{ q^0_-(x,t),  q^0_+(y,t)\}_{(0)} &=0, &\quad \{ q^1_-(x,t),  q^1_-(y,t)\}_{(0)} &=0,  \\
\end{aligned}\end{equation}
where $ a,b =1,2,$ and
\begin{equation}\label{eq:cPBb}\begin{aligned}
\{ q^0_-(x,t),  q^1_+(y,t)\}_{(0)} &=\frac{1}{2}\delta(x-y) , \\
 \{ q^1_-(x,t),  q^0_+(y,t)\}_{(0)} &=\frac{1}{2}\delta(x-y).
\end{aligned}\end{equation}
\end{subequations}
\end{remark}

In order to adapt the Hamiltonian formulation better to the gauge covariant approach to the `squared solutions'
used above, below we shall view the phase space $\mathcal{M}^\bbbc $ as the
space of $2\times 2 $ off-diagonal matrices $q(x) = \left(\begin{array}{cc} 0 & q_+\\ q_- & 0 \end{array}\right) $. The
variational derivatives (or the `gradients') of the functional $H^\bbbc $ then will be written as:
\begin{equation}\label{eq:19.22}
\nabla_q H^\bbbc \equiv {\delta H^\bbbc  \over \delta q^T(x) }
= \left( \begin{array}{cc} 0 & {\delta H^\bbbc \over \delta q_-(x)} \\
{\delta H^\bbbc \over \delta q_+(x)} & 0  \end{array} \right).
\end{equation}

It remains to recall the definition of the skew-scalar product $\biglb \cdot
\;, \; \cdot \bigrb $ and after simple calculation one is able to
write down the canonical Poisson brackets (\ref{eq:19.15}) as follows:
\begin{multline}\label{eq:19.23}
\{ F, G\}_{(0)}^\bbbc = {i \over 2} \int_{-\infty }^{\infty } \tr \left( \nabla_q F, \left[ \sigma _3, \nabla_q G \right]
\right) dx \\  =  i  \biglb \nabla_q F, \nabla _q G \bigrb .
\end{multline}

The corresponding canonical symplectic form $\Omega_0^\bbbc $  becomes:
\begin{multline}\label{eq:19.29}
\Omega _0^\bbbc = i \int_{-\infty }^{\infty } dx\, \left( \delta q_+(x) \wedge\delta q_-(x)\right) \\
= {i  \over 2 } \biglb \sigma _3\delta q \wedgecomma \sigma _3 \delta q \bigrb .
\end{multline}
By the symbol $\wedgecomma  $ above we mean that we first perform
the matrix multiplication keeping the order of the factors, and
then replace the standard multiplication by an exterior
product $\wedge $.
With all these notations we can write down (\ref{eq:19.16}) in the form:
\begin{equation}\label{eq:19.21}
i \sigma_3 {\partial q\over\partial t} +\nabla_q H^\bbbc = 0.
\end{equation}

The system (\ref{eq:7.18a})   generalizing the
NLSE can be written down as a complex Hamiltonian system (\ref{eq:19.21}) with
$H^\bbbc $ chosen to be:
\begin{eqnarray}\label{eq:8.11}
H^\bbbc = \int_{-\infty }^{\infty } dx\, \left[ q_x^- q_x^+ - (q_+
q_-)^2 \right]  = -8 C_3.
\end{eqnarray}

Quite analogously one may check that all the other examples of
NLEEs also allow complex Hamiltonian structures with the
symplectic structure introduced on $\mathcal{ M}_\bbbc $ by (\ref{eq:19.22}).

Each of the generic NLEE (\ref{eq:7.11c}) with dispersion law
$f(\lambda ) $ can be written down in the form (\ref{eq:19.21}).
It is only natural to expect that the corresponding Hamiltonian
$H$ should be expressed in terms of the integrals of motion
$ C_p $. Indeed, eq. (\ref{eq:dAlam}) can be written down in the form:
\begin{equation}\label{eq:19.24}
\nabla_q C_p = - {1  \over 2 }\Lambda ^{p-1} q(x).
\end{equation}
Then if we choose
\begin{equation}\label{eq:19.25}
H^\bbbc = \sum_{k>0}^{} 4 f_k C_{k+1},
\end{equation}
we get:
\begin{equation}\label{eq:19.26}
\nabla _q H^\bbbc =  -2 f(\Lambda ) q(x).
\end{equation}
Thus Eq.~(\ref{eq:19.21}) coincides with the NLEEs
(\ref{eq:7.11c}) with the dispersion law $f(\lambda) = \sum_{k>0} f_k\lambda^k$.

\subsection{Action-angle variables of the generic NLEE}

The most straightforward way to derive the action--angle variables of the NLEE
(\ref{eq:19.21}) is to insert into the r.h. side of (\ref{eq:19.29})
the expansion (\ref{eq:4.10e}) for $\sigma _3 \delta q(x) $. This gives:
\begin{multline}\label{eq:9.4}
\Omega ^\bbbc_{(0)}\\ = i\int_{-\infty}^{\infty} dx\; \delta q_+(x,t) \wedge \delta q_-(x,t)
\equiv \frac{i}{2}\biglb \sigma_3 \delta q \wedgecomma \sigma_3 \delta q \bigrb \\
= {-1  \over 2 } \Bigglb \sigma _3\delta q(x) \wedgecomma \left(  \int_{-\infty }^{\infty }\!\!\! d\lambda \, \left(
\delta \eta(\lambda) {\bQ}(x,\lambda )  - \delta \kappa (t,\lambda) {\bP}(x,\lambda )\right) \right.  \\
\left. + \sideset{}{^\pm} \sum_{k=1}^{N} (\delta \eta_k^\pm {\bQ}_k^\pm (x) - \delta \kappa ^\pm_k {\bP}^\pm _k(x)) \right)\Biggrb  \\
= i \int_{-\infty }^{\infty } d\lambda \, \delta \eta(\lambda) \wedge \delta \kappa (t,\lambda) + i
\sideset{}{^\pm} \sum_{k=1}^{} \delta \eta _k^\pm \wedge \delta \kappa _k^\pm ,
\end{multline}
where $\kappa(\lambda)$, $\eta(\lambda)$, $\kappa_k^\pm$ and $\eta_k^\pm$ are given in eq. (\ref{eq:ro-tau}).

In the above calculation we made use of the inversion formulae for
the symplectic basis (\ref{eq:4.7a}) and
identified the skew-symmetric scalar products of $\sigma _3\delta
q(x) $ with the elements of the symplectic basis with the
variations of $\delta \eta (\lambda) $, $\delta \kappa (t,\lambda) $, etc.

From (\ref{eq:9.4}) we see also, that the 2-form $\Omega _0^\bbbc$ expressed by the variables
$\eta(\lambda)$, $\kappa(\lambda)$, $\eta_k^\pm $ and $\kappa
_k^\pm $ has a canonical form. Recall now the trace identities
(\ref{eq:Cp-sd}). From them it follows, that the Hamiltonian
$H^\bbbc $ of the NLEE depends only on the variables $\eta(\lambda
) $, $\eta_k^\pm $:

\begin{equation}\label{eq:H0}\begin{aligned}
H^\bbbc &=  - 2 \int_{-\infty }^{\infty } d\mu \, f(\mu) \eta (\mu) + 4i \sum_{k=1}^{N} \left(F_k^+ - F_k^-  \right), \\
 f(\mu) &= \sum_{k>0}^{} f_p \mu^{p-1},
\end{aligned}\end{equation}
where
\begin{equation}\label{eq:H0a}\begin{aligned}
F_k^+ &= F(i\eta _k^+/2), \qquad F_k^- = F(-i\eta _k^-/2), \\ F(\lambda ) &= \int_{}^{\lambda } d\lambda ' f(\lambda ').
\end{aligned}\end{equation}
Obviously  $H^\mathbb{C}$ depends only on the action variables $\eta(\lambda)$ and $\eta_k^\pm$.
Let us now recall one of the results of Theorem \ref{thm:2}, stating that if $q(x,t) $ satisfies the NLEE (\ref{eq:19.21})
then the variables $\eta (\lambda) $, $\kappa (\lambda,t)$, $\eta_k^\pm $ and $\kappa _k^\pm (t)$ satisfy:
\begin{equation}\label{eq:AA-var}\begin{aligned}
{d\eta  \over dt }&=0, &\quad {d\eta_k^\pm  \over dt } &=0, \\
i{d\kappa \over dt }&= 2f(\lambda ),  &\quad i{d\kappa _k^\pm  \over dt } &=2f(\lambda_k^\pm ).
\end{aligned}\end{equation}
Thus
\begin{description}
  \item[AA1] the variables $\eta (\lambda) $, $\kappa (\lambda,t)$, $\eta_k^\pm $ and $\kappa _k^\pm (t)$
  form a canonical basis in $\mathcal{M}^\bbbc $;
  \item[AA2] Hamiltonian $H^\mathbb{C}$ depends only on `half' of them: $\eta(\lambda)$ and $\eta_k^\pm$, $k=1,\dots, N$;
  \item[AA3] The variables $\eta(\lambda)$ and $\eta_k^\pm$, $k=1,\dots, N$ are time independent, while  $\kappa (\lambda,t)$
   and $\kappa _k^\pm (t)$  depend linearly on $t$.
\end{description}
Thus we conclude that these variables are generalized action-angle variables of the generic NLEE (\ref{eq:19.21}).
Another reason to use the term generalized is in the fact, that these variables are complex-valued.

Of course, they  can be written as:
\begin{equation}\label{eq:AA-03}\begin{aligned}
\eta (\lambda) & = \eta_0(\lambda) + i \eta_1(\lambda) , &\quad \kappa(\lambda,t) &=\kappa_0(\lambda,t)
+i \kappa_1(\lambda,t),  \\
\eta_k^\pm & = \eta_k^{0,\pm}+ i\eta_k^{1,\pm} , &\quad \kappa_k^\pm (t) &=\kappa_k^{0,\pm} + i \kappa_k^{1,\pm} ,
\end{aligned}\end{equation}
where
\begin{subequations}\label{eq:AA-04}
\begin{equation}\label{eq:AA-04a}\begin{aligned}
\eta_0(\lambda,t) & = - \frac{1}{\pi} \ln | a^+a^-(\lambda)|, \\
 \eta_1(\lambda,t) & = - \frac{1}{\pi} \left( \arg ( a^+(\lambda) ) + \arg (a^-(\lambda)) \right),  \\
\kappa_0(\lambda,t) &= \frac{1}{2} \ln \left| \frac{b^+(\lambda,t)}{b^-(\lambda,t)}\right| , \\
\kappa_1(\lambda,t) &=  \frac{1}{2} \left( \arg ( b^+(\lambda,t) ) - \arg (b^-(\lambda,t)) \right),
\end{aligned}\end{equation}
\begin{equation}\label{eq:AA.04b}\begin{aligned}
 \eta_k^{0,\pm} &= \mp 2 \lambda_k^{1,\pm}, &\quad  \eta_k^{1,\pm} &= \mp 2 \lambda_k^{0,\pm}, \\
\kappa_k^{0,\pm} &= \pm \ln |b_k^\pm|, &\quad \kappa_k^{1,\pm} &= \pm \arg (b_k^\pm).
\end{aligned}\end{equation}
\end{subequations}

Note that the derivation of this result is based on the
completeness relation  of the symplectic basis. This
ensures:  (i)~the uniqueness and the invertibility of the mapping
from $\{q^\pm(x)\} $ to $\mathcal{T}$; (ii)~the nondegeneracy of
the 2-form $\Omega^\bbbc_{(0)}  $ on $\mathcal{M}^\bbbc $.

One can view $q_\pm(x) $ as local coordinates on
$\mathcal{M}^\bbbc $; any functional $F $ or $G $ on
$\mathcal{M}^\bbbc $ can be expressed in terms of $q^\pm(x) $. The
variations $\delta F $ and $\delta G $ of the functionals $F$ and
$G$ are the analogs of $1 $-forms over $\mathcal{M}^\bbbc $. They
can be expressed in terms of the `gradients' by:
\begin{equation}\label{eq:grad-1}
\delta F = \biglb \nabla_q F, \sigma _3\delta q \bigrb , \qquad
\delta G = \biglb \nabla_q G, \sigma _3\delta q\bigrb .
\end{equation}
The `gradients' $\nabla_q F $ and $\nabla_q G $ are elements of
the space $T_q\mathcal{M}^\bbbc $ tangential to $\mathcal{M}^\bbbc$.

At the same time the mapping to $\mathcal{T} $ is one-to-one,
therefore it is possible to express $F $  and $G $ in terms of the
scattering data. To this end we consider
the expansions of $\nabla_q F $ and $\nabla_q G $ over the symplectic basis:
\begin{multline}\label{eq:grad-F}
\nabla_q F = i\int_{-\infty }^{\infty} \left( \eta_{F}(\lambda) \bQ(x,\lambda ) -\kappa _{F}(\lambda )\bP(x,\lambda ) \right) d\lambda \\
+ i \sideset{}{^\pm}\sum_{k=1}^{N} \left( \eta_{F,k}^\pm \bQ_k^\pm(x) -\kappa _{F,k}^\pm \bP_k^\pm(x) \right),
\end{multline}
\begin{equation}\label{eq:grad-Fc}\begin{aligned}
\eta_{F}(\lambda )&= i \biglb \bP(x,\lambda ), \nabla_q F \bigrb, &\quad  \eta_{F,k}^\pm &= i \biglb \bP_k^\pm (x), \nabla_q F \bigrb,\\
\kappa _{F}(\lambda ) &= i \biglb \bQ(x,\lambda ), \nabla_q F \bigrb,   &\quad \kappa _{F,k}^\pm &= i \biglb \bQ_k^\pm (x), \nabla_q F \bigrb.
\end{aligned}\end{equation}
Similar expansion for $\nabla_q G $ is obtained from (\ref{eq:grad-F}) by changing $F $ to $G $. Such expansions will
hold true provided $F $ and $ G $ are restricted in such a way that the expansion coefficients $\eta_{F}(\lambda ) $ and $\kappa
_{F}(\lambda ) $ are smooth and fall off fast enough for $\lambda \to\pm\infty  $. In what follows we shall assume that the
functionals $F $ and $G $ satisfy the following:

\medskip
{\bf Condition C3.} The functionals $F $ and $G $ are restricted
by the following implicit condition: the expansion coefficients
$\eta_{F}(\lambda ) $ and $\kappa _{F}(\lambda ) $ and
$\eta_{G}(\lambda ) $ and $\kappa _{G}(\lambda ) $ are
Schwartz-type functions of $\lambda  $ for real $\lambda  $.

\medskip
Using the bi-quadratic relations satisfied by the elements of the symplectic basis \cite{GeYaV} we can express the
Poisson brackets between $F $ and $G $ in terms of their expansion coefficients as follows:
\begin{multline}\label{eq:F-G-PB}
\{F,G\}_{(0)}^\bbbc  \\ \equiv  -i \biglb \nabla_q F, \nabla_q G \bigrb
= \int_{-\infty }^{\infty }d\lambda \, \left(\eta_{F}\kappa _{G} -\kappa _{F}\eta_{G}\right)(\lambda ) \\ +
\sideset{}{^\pm}\sum_{k=1}^{N} \left(\eta_{F,k}^\pm\kappa _{G,k}^\pm -\kappa_{F,k}^\pm \eta_{G,k}^\pm\right).
\end{multline}
In particular, if we choose $F=H^\bbbc $ then from eq. (\ref{eq:H0}) we find that $\eta_{H^\bbbc}(\lambda )=0 $,
$\eta_{H^\bbbc,k}^\pm=0 $ and
\begin{equation}\label{eq:nabH0}\begin{split}
\kappa _{H^\bbbc,k}(\lambda )= -2f(\lambda ), \qquad \kappa _{H^\bbbc,k}^\pm=-2f(\lambda _k^\pm),
\end{split}\end{equation}
which gives:
\begin{multline}\label{eq:H-G-PB}
\{H,G\}_{(0)}^\bbbc = 2\int_{-\infty }^{\infty } d\lambda \, f(\lambda ) \eta_G(\lambda ) \\
 + 2 \sum_{k=1}^{N} ( f(\lambda_k^+)\eta_{G,k}^+ + f(\lambda _k^-)\eta_{G,k}^-).
\end{multline}
Eq. (\ref{eq:H-G-PB}) allows us to describe implicitly the set
of functionals $G $ that are in involution with all integrals of
motion of the generic NLEE. Indeed, the right hand side of (\ref{eq:H-G-PB}) will vanish
identically for all choices of the dispersion law $f(\lambda ) $
only if the expansion coefficients of $\nabla_q G $ satisfy:
\begin{equation}\label{eq:H-G=0}
\eta_g(\lambda )=0, \qquad \lambda \in \bbbr; \qquad \eta_{G,k}^{\pm} =0, \qquad \forall k=1,\dots, N.
\end{equation}

We end this subsection by noting the special role of the subspace $\mathcal{L}^\bbbc\subset \mathcal{M}^\bbbc $ spanned by
$\bP(x,\lambda ) $ and $\bP_k^\pm(x) $, $k=1,\dots,N  $, i.e. by `half' of the elements of the symplectic basis. Obviously all Hamiltonian vector
fields  with Hamiltonians of the form (\ref{eq:19.25}) induce dynamics which is tangent to $\mathcal{L}^\bbbc $. This is a maximal subspace of
$\mathcal{M}^\bbbc $ on which the symplectic form $\Omega_{(0)}^\bbbc $ is degenerate.
Therefore $\mathcal{L}^\bbbc $ is the Lagrange submanifold of $\mathcal{M}^\bbbc $. Dynamics defined by the Hamiltonian vector fields
that are not tangential violate the complete integrability.

\section{Hamiltonian hierarchies and action-angle variables}

\subsection{The Hierarchy of Poisson brackets}\label{ssec:9.2}

The complete integrability of the generic NLEE makes them rather
special. They have an infinite number of integrals of motion $C_n$ which are in
involution and moreover satisfy the relation:
\begin{equation}\label{eq:Len2}
\nabla_q C_{n+m} = \Lambda ^m \nabla_q C_{n},
\end{equation}
which generalizes the Lenard relation (\ref{eq:19.24}). The important fact here is that the
recursion operator $\Lambda  $ is universal one and does not
depend on either $n $ or $m $. This has far reaching consequences
which we outline below.

The first one consists in the possibility to introduce a hierarchy
of Poisson brackets:
\begin{eqnarray}\label{eq:hPB}
\{ F, G \}_{(m)}^\bbbc &=& {1 \over i}\biglb \nabla_q F, \Lambda
^m \nabla_q G \bigrb .
\end{eqnarray}
Below we shall show that these Poisson brackets satisfy all the
necessary properties.

First, using the fact that $\Lambda  $ is `self-adjoint' with respect to the skew-symmetric scalar product (\ref{eq:adLa}) we
easily check that the Poisson bracket defined by (\ref{eq:hPB}) is skew-symmetric. Indeed:
\begin{equation}\label{eq:hPB-1}\begin{split}
 \{ F, G \}_{(m)}^\bbbc &= {1 \over i} \biglb \nabla_q F, \Lambda ^m \nabla_q G \bigrb = -  {1 \over i}\biglb \Lambda ^m  \nabla_q G, \nabla_q F \bigrb \\
&= -{1 \over i} \biglb \nabla_q G, \Lambda ^m \nabla_q F \bigrb =-\{ G, F\}_{(m)}^\bbbc .
\end{split}\end{equation}

We have also the Leibnitz rule:
\begin{multline}\label{eq:hPB-Leib}
\{ FG, H \}_{(m)}^\bbbc =  {1 \over i}\biglb \nabla_q (F G), \Lambda ^m \nabla_q H \bigrb \\
= {1 \over i} F \biglb  \nabla_q G, \Lambda ^m \nabla_q H \bigrb
+{1 \over i}\biglb  \nabla_q F, \Lambda ^m \nabla_q H \bigrb G \\
= F\{ G,H\}_{(m)}^\bbbc + \{ F,H\}_{(m)}^\bbbc G,
\end{multline}
since  $\nabla_q (FG)= F\nabla_q G +(\nabla_q F) G $.

Using the expansion (\ref{eq:grad-F}) of $\nabla_q F $, an analogous one for $\nabla_q G $ and the fact that the elements
$\bP(x,\lambda ) $ and $\bQ(x,\lambda ) $ are eigenfunctions of $\Lambda  $ (see eq. (\ref{eq:LaLa2})) we find:
\begin{multline}\label{eq:F-G-PBm}
\{ F,G\}_{(m)}^{\bbbc} = -i \biglb \nabla_q F, \Lambda ^m \nabla_q G \bigrb \\
= \int_{-\infty }^{\infty } d\lambda \, \lambda ^m \left(\eta_F \kappa_G - \kappa _F \eta_G \right) (\lambda )
\\ + \sideset{}{^\pm}\sum_{k=1}^{N} (\lambda _k^\pm)^m
\left(\eta_{F,k}^\pm \kappa _{G,k}^\pm  -\kappa _{F,k}^\pm \eta_{G,k}^\pm \right).
\end{multline}

The Jacobi identity  is far from trivial to check in these terms. This will be done using the corresponding
symplectic form and for that reason we postpone the proof until later.

The existence of a hierarchy of Poisson brackets entails that
there must exist also hierarchy of vector fields, symplectic forms, etc.

\subsection{The Hierarchy of symplectic forms}
Let us define:
\begin{eqnarray}\label{eq:9.7}
\Omega _{(m)}^\bbbc = {i  \over 2 } \biglb \sigma _3\delta
q(x)\wedgecomma \Lambda ^m \sigma _3 \delta q(x) \bigrb .
\end{eqnarray}
These 2-forms are not canonical. The proof of the fact that
$\delta \Omega_{(m)}^{\bbbc}=0 $ is performed by recalculating
them in terms of the `action-angle'
variables. For this we follow the same idea as in the calculation of $\Omega^\bbbc_{(0)}
$, see eq. (\ref{eq:9.4}). We insert the expansion for $\sigma _3
\delta q(x)$ over the symplectic basis and then act on this expansion by $\Lambda ^m $.  This is easy to
do because of (\ref{eq:LaLa2}) and the result is:
\begin{multline}\label{eq:9.8}
\Lambda ^m \sigma _3\delta q(x) \\ = i \int_{-\infty }^{\infty } d\lambda \, \lambda ^m \left( \delta \eta(\lambda ) {\bQ}
(x,\lambda ) - \delta \kappa (t,\lambda) {\bP}(x,\lambda ) \right) \\
+ i\sideset{}{^\pm} \sum_{k=1}^{N} \left( (\lambda _k^\pm)^m \left( \delta \eta_k^\pm {\bQ}_k^\pm (x) - \delta \kappa _k^\pm {\bP}_k^\pm (x) \right) \right) ,
\end{multline}
where $\kappa(\lambda)$, $\eta(\lambda)$, $\kappa_k^\pm$ and $\eta_k^\pm$ are given in eq. (\ref{eq:ro-tau}).

Calculating the skew-symmetric scalar products of $\sigma _3 \delta q(x)$ with the r.h. side  of (\ref{eq:9.8}), we again obtain
the variations of the $\eta $ and $\kappa  $--variables. Finally we get:
\begin{multline}\label{eq:9.9}
 \Omega^\bbbc_{(m)}  =   i \int_{-\infty }^{\infty } d\lambda \, \lambda ^m \delta \kappa (t,\lambda) \wedge \delta \eta(\lambda ) \\
  + i \sum_{k=1}^{N} \left( (\lambda _k^+)^m \delta \kappa _k^+ \wedge
\delta \eta _k^+ + (\lambda _k^-)^m \delta \kappa _k^- \wedge \delta \eta _k^- \right) \\
= i \int_{-\infty }^{\infty } d\lambda \, \lambda ^m \delta \kappa (t,\lambda) \wedge \delta \eta(\lambda )\\
  + c_{0,m} \sum_{k=1}^{N} \left(  \delta \kappa _k^+ \wedge \delta (\eta _k^+)^{m+1} + (-1)^m
   \delta \kappa _k^- \wedge \delta (\eta _k^-)^{m+1}  \right),
\end{multline}
where $c_{0,m}=i^{m+1}2^{-m} (m+1)^{-1}$.

\begin{remark}\label{rem:7.4}
The right hand sides of (\ref{eq:9.9}) are well defined for all
$m\geq 0 $ for potentials $q(x) $ satisfying condition {\bf C1}.
This condition ensures that $\kappa (t,\lambda ) $  and
$\eta(\lambda ) $ are Schwartz-type functions of $\lambda  $.
\end{remark}

\begin{remark}\label{rem:7.5}
For negative values of $m $ the existence of the integrals in (\ref{eq:9.9}) is ensured only, provided we put additional
restrictions on $q(x) $ which would ensure that $\lim_{\lambda \to 0} \lambda ^m \delta \kappa (t,\lambda)
\wedge \delta \eta(\lambda) $ exists for all $m<0 $.
\end{remark}

Now it is easy to prove
\begin{Prop}\label{prop:7.4} Let the potential $q(x) $ satisfy condition {\bf C1}.  Then each of the symplectic forms
$\Omega^\bbbc_{(m)} $ for $m\geq 0 $ is closed, i.e.
\begin{equation}\label{eq:Ome_m-0}
\delta \Omega^\bbbc_{(m)} =0, \qquad m=0,1,2,\dots.
\end{equation}
If in addition $q(x) $ satisfies the condition in remark \ref{rem:7.5} then each of the symplectic forms
$\Omega^\bbbc_{(m)} $ is closed also for $m<0 $.

\end{Prop}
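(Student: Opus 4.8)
The plan is to avoid the direct route through the definition (\ref{eq:9.7}), where $\Omega^\bbbc_{(m)}$ is built from the integro-differential recursion operator $\Lambda$ acting on $\sigma_3\delta q$: checking $\delta\Omega^\bbbc_{(m)}=0$ in the $q$-variables would force one to control the variation of $\Lambda$ itself and is precisely the computation the symplectic basis was introduced to circumvent. Instead I would start from the representation (\ref{eq:9.9}) of $\Omega^\bbbc_{(m)}$ in terms of the action-angle variables, which has already been derived, and show that in those variables closedness is essentially a formal consequence of $\delta^2=0$.

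First I would observe that, for each fixed real $\lambda$, the quantities $\eta(\lambda)$, $\kappa(t,\lambda)$ and, for each $k$, $\eta_k^\pm$, $\kappa_k^\pm$ are functionals on $\mathcal{M}^\bbbc$ — they are built from the scattering coefficients $a^\pm$, $b^\pm$ of the potential, see (\ref{eq:ro-tau}), and their variational derivatives are well defined by the Wronskian relations (\ref{eq:wr.21}), (\ref{eq:wr.22}), (\ref{eq:dAlam}). Hence $\delta\eta(\lambda)$, $\delta\kappa(t,\lambda)$, $\delta\eta_k^\pm$, $\delta\kappa_k^\pm$ are $1$-forms on the phase space and $\delta$, being the exterior derivative, satisfies $\delta^2=0$. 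Then each integrand in (\ref{eq:9.9}) is manifestly exact: $\lambda^m\,\delta\kappa(t,\lambda)\wedge\delta\eta(\lambda)=\delta\bigl(\lambda^m\kappa(t,\lambda)\,\delta\eta(\lambda)\bigr)$ (since $\lambda$ is a parameter, $\delta\lambda^m=0$), and similarly $\delta\kappa_k^\pm\wedge\delta(\eta_k^\pm)^{m+1}=\delta\bigl(\kappa_k^\pm\,\delta(\eta_k^\pm)^{m+1}\bigr)$. Equivalently, $\Omega^\bbbc_{(m)}=\delta\Theta_{(m)}$ with the Liouville-type $1$-form $\Theta_{(m)}=i\int_{-\infty}^{\infty}d\lambda\,\lambda^m\kappa(t,\lambda)\,\delta\eta(\lambda)+c_{0,m}\sum_{k=1}^{N}\bigl(\kappa_k^+\,\delta(\eta_k^+)^{m+1}+(-1)^m\kappa_k^-\,\delta(\eta_k^-)^{m+1}\bigr)$, so that $\delta\Omega^\bbbc_{(m)}=\delta^2\Theta_{(m)}=0$. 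The finite sum over $k$ causes no difficulty.

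The one point that genuinely requires care — and which I expect to be the main, though mild, obstacle — is the legitimacy of treating $\Theta_{(m)}$ as a bona fide $1$-form and of letting $\delta$ pass under the $\lambda$-integral, i.e. differentiation under the integral sign in the infinite-dimensional setting. For $m\geq 0$ this is supplied by condition {\bf C1}: as recorded in Remark \ref{rem:7.4}, $\kappa(t,\lambda)$ and $\eta(\lambda)$ are then Schwartz-type functions of $\lambda$, so $\lambda^m\kappa\,\delta\eta$ and the $2$-form obtained by applying $\delta$ to it are absolutely integrable in $\lambda$, and the usual dominated-convergence argument applies. For $m<0$ the only extra concern is the behaviour of the integrand at $\lambda=0$, which is exactly what the hypothesis of Remark \ref{rem:7.5} excludes: under that condition $\lim_{\lambda\to 0}\lambda^m\,\delta\kappa(t,\lambda)\wedge\delta\eta(\lambda)$ exists, so $\lambda^m\kappa\,\delta\eta$ stays integrable near the origin and the identical computation yields $\delta\Omega^\bbbc_{(m)}=0$. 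Collecting the two cases completes the proof.
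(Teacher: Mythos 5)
Your proposal is correct and follows essentially the same route as the paper: both start from the action--angle representation (\ref{eq:9.9}) of $\Omega^\bbbc_{(m)}$, justify interchanging $\delta$ with the $\lambda$-integration via the decay conditions ({\bf C1} for $m\geq 0$, Remark \ref{rem:7.5} for $m<0$), and conclude from $\delta^2=0$. Your explicit exhibition of the primitive $\Theta_{(m)}$ is a slightly more detailed phrasing of the paper's one-line observation that $\delta(\delta g)\equiv 0$, but it is the same argument.
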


\begin{proof}
Indeed, the condition in proposition \ref{prop:7.4} is such that
the integral in the right hand side of (\ref{eq:9.9}) is well
defined so we can interchange the integration with the operation
of taking the external differential $\delta  $. Therefore we have:
\begin{multline}\label{eq:prop-7.4}
\delta \Omega^\bbbc_{(m)}  = i \int_{-\infty }^{\infty } d\lambda \, \lambda ^m \delta \left(\delta \kappa (t,\lambda) \wedge \delta \eta(\lambda )\right) \\
+ c_{0,m} \sum_{k=1}^{N}  \delta \left(  \delta \kappa _k^+ \wedge \delta (\eta _k^+)^{m+1} + (-1)^m
   \delta \kappa _k^- \wedge \delta (\eta _k^-)^{m+1}  \right) ,
\end{multline}
where we used the simple fact that $\delta (\delta g(\lambda))\equiv 0 $ for any $g(\lambda ) $.
\end{proof}

\begin{corollary}\label{cor:7-4}
Direct consequence of Proposition \ref{prop:7.4} is that the
Poisson brackets $\{\, \cdot , \,\cdot \}_{(m)}^\bbbc $ satisfy Jacobi identity.
\index{Jacobi identity}

\end{corollary}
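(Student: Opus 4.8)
The plan is to read the assertion as the infinite-dimensional incarnation of the classical fact that a closed $2$-form induces a Poisson bracket satisfying the Jacobi identity, with Proposition \ref{prop:7.4} delivering the closedness. The first step is to make the pairing between the hierarchy of brackets and the hierarchy of symplectic forms explicit. Inserting the expansion (\ref{eq:grad-F}) of $\nabla_q F$, and the analogous one for $\nabla_q G$, into the definition (\ref{eq:hPB}), and using that the elements $\bP(x,\lambda)$, $\bQ(x,\lambda)$, $\bP_k^\pm$, $\bQ_k^\pm$ of the symplectic basis are eigenfunctions of $\Lambda$ (eq. (\ref{eq:LaLa2})), one arrives at the closed expression (\ref{eq:F-G-PBm}) for $\{F,G\}_{(m)}^\bbbc$ in terms of the data $\eta(\lambda),\kappa(\lambda),\eta_k^\pm,\kappa_k^\pm$. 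Set side by side with the formula (\ref{eq:9.9}) for $\Omega_{(m)}^\bbbc$ in the very same variables, this exhibits $\{\cdot,\cdot\}_{(m)}^\bbbc$ as (a nonzero constant times) the Poisson bracket canonically associated with one of the closed symplectic forms of Proposition \ref{prop:7.4} --- in the conventions used above, $\Omega_{(-m)}^\bbbc$, although the exact index bookkeeping is immaterial since Proposition \ref{prop:7.4} secures $\delta\Omega_{(m)}^\bbbc=0$ for the whole hierarchy. That the scattering-data quantities $\eta(\lambda),\kappa(\lambda),\eta_k^\pm,\kappa_k^\pm$ genuinely form a chart on $\mathcal{M}^\bbbc$ is exactly what Proposition \ref{pro:V.1} and Lemma \ref{lem:1} provide.

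With this dictionary available, the shortest path is a direct verification from (\ref{eq:F-G-PBm}). Since the bracket is a bi-derivation --- the Leibnitz rule (\ref{eq:hPB-Leib}) --- the Jacobiator $\{F,\{G,H\}\}_{(m)}^\bbbc+\{G,\{H,F\}\}_{(m)}^\bbbc+\{H,\{F,G\}\}_{(m)}^\bbbc$ is itself a derivation in each of its three arguments and hence is determined by its values on the coordinate functionals $\eta(\lambda)$, $\kappa(\lambda)$, $\eta_k^\pm$, $\kappa_k^\pm$. Formula (\ref{eq:F-G-PBm}) shows that the bracket of two coordinate functionals vanishes unless both belong to the continuous family $\{\eta(\lambda),\kappa(\mu)\}$, where the structure functions $\lambda^m\delta(\lambda-\mu)$ are constant on $\mathcal{M}^\bbbc$, or both belong to one and the same discrete mode $(k,\pm)$, whose block is two-dimensional. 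If all three arguments are continuous, every iterated bracket of coordinate functionals is zero; if all three sit in one mode, the totally antisymmetric Jacobiator has no room on a two-dimensional block; in the remaining mixed cases the only iterated brackets that can survive are functions of the single coordinate $\eta_k^\pm$ (for instance $(\lambda_k^\pm)^m=(\pm i\eta_k^\pm/2)^m$), and such functions bracket trivially with every coordinate functional lying outside the mode $(k,\pm)$. In every case the Jacobiator vanishes on coordinate functionals, hence on the whole class of functionals cut out by Condition {\bf C3}. Conceptually this is just the standard identity ``Jacobiator $=\pm\,\delta\Omega_{(-m)}^\bbbc(X_F,X_G,X_H)$'' evaluated on the Hamiltonian vector fields $X_F,X_G,X_H$, so that Proposition \ref{prop:7.4} gives the Jacobi identity in one stroke.

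The substantive effort will not lie in either of the above arguments but in the functional-analytic scaffolding underneath them: that $\eta(\lambda),\kappa(\lambda),\eta_k^\pm,\kappa_k^\pm$ really do constitute a bona fide (Fr\'echet) chart on $\mathcal{M}^\bbbc$ --- smoothness and invertibility of the direct and inverse scattering maps, a fact already relied upon throughout Sections \ref{sec:l5-2}--\ref{ssec:l6-8} --- that the Hamiltonian vector fields $X_F$ are well defined for the admissible functionals, and that $\delta$ may be exchanged with $\int d\lambda$ in (\ref{eq:9.9}), which is precisely the point checked in the proof of Proposition \ref{prop:7.4} under Condition {\bf C1} (and, for negative $m$, under the extra restriction of Remark \ref{rem:7.5}; the direct verification above needs only {\bf C1} and {\bf C3}). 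Granted these facts, no further computation is required and the corollary follows.
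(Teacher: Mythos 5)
Your proposal is correct and follows the same route the paper intends: the corollary is asserted there without further argument, as the standard consequence of the closedness of the symplectic forms established in Proposition \ref{prop:7.4} once everything is transported to the canonical chart $\eta(\lambda),\kappa(\lambda),\eta_k^\pm,\kappa_k^\pm$. Your explicit verification of the Jacobiator on coordinate functionals via (\ref{eq:F-G-PBm}) --- together with the correct observation that $\{\,\cdot\,,\,\cdot\,\}_{(m)}^\bbbc$ is the bracket inverse to $\Omega_{(-m)}^\bbbc$, so that the direct check under {\bf C1} and {\bf C3} cleanly sidesteps the $m<0$ caveat of Remark \ref{rem:7.5} --- merely supplies detail that the paper leaves implicit.
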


Therefore we have shown that each generic NLEE allows a hierarchy of Hamiltonian formulations:
\begin{equation}\label{eq:19.16mb}\begin{split}
{\partial q_+  \over \partial t } &= \{ H_{(p)}^\bbbc, q_+(x)\}^\bbbc_{(-p)} , \\
 -{dq_-  \over dt } &= -\{ H_{(p)}^\bbbc, q_-(x)\}^\bbbc _{(-p)},
 \end{split}\end{equation}
where
\begin{equation}\label{eq:H_p}\begin{split}
H_{(p)}^\mathbb{C}  = \sum_{k>0}^{} 4f_k C_{k+p+1}, \qquad  \nabla_q H_{(p)}^\mathbb{C}
= -2\Lambda^p f(\Lambda) q(x,t),
\end{split}\end{equation}
see eqs. (\ref{eq:19.25}), (\ref{eq:19.26}).

One can relate the existence of the hierarchy of Hamiltonian structures  to the simple fact that the generic NLEE:
\begin{eqnarray}\label{eq:9.10b}
\Lambda ^m \left( i \sigma _3 \pd{q},{t} + 2 f(\Lambda ) q(x,t) \right) = 0,
\end{eqnarray}
are equivalent to the NLEE (\ref{eq:19.21}), (\ref{eq:19.25}). In terms of the
`action--angle' variables
\begin{equation}\label{eq:9.12}\begin{aligned}
 i \lambda ^m \eta_t &= 0, &\qquad  \lambda ^m( i \kappa _t - 2 f(\lambda)) &=0, \\
 i (\lambda_k^\pm) ^m \eta^\pm_{k} &= 0,  &\qquad  (\lambda_k^\pm)^m( i \kappa^\pm _{k,t} - 2 f(\lambda_k^\pm) )& = 0.
\end{aligned}\end{equation}

For $m=0 $ we recover the equations from section 4.

\section{Local and nonlocal involutions of the  Zakharov--Shabat system}\label{ssec:l6-3.3}

\subsection{The (local) involution $q_-(x,t)=(q_+)^*(x,t) =u(x,t)$ }\label{ssec:10.1'}

It is well known that one can impose on the generic Zakharov-Shabat system $L$ additional constraints.
These are achieved with the help of Cartan-like involutions. In this way one can derive different real
Hamiltonian forms of the generic NLEE, see \cite{GKMV1,GKMV2,GKMV3}.

We will outline this procedure for the best known involutions which allowed Zakharov and Shabat to solve
the NLS equation.

The involution $q_+(x,t)=(q_-(x,t))^* =u(x,t)$ is a consequence of  the following symmetry of $L$ and,
consequently on $U(x,t,\lambda)$:
\begin{equation}\label{eq:U*}\begin{split}
U^*(x,t,\lambda^*)  =- \sigma^{-1} U(x,t,\lambda) \sigma, \qquad \sigma = \left(\begin{array}{cc} 0 & 1 \\ -1 & 0
 \end{array}\right).
\end{split}\end{equation}
As a consequence of this symmetry we obtain constraints on the FAS and on the scattering matrix as follows:
\begin{equation}\label{eq:FasT1}\begin{aligned}
(\chi^+(x,t,\lambda^*))^* & = \sigma^{-1} \chi^-(x,t,\lambda) \sigma , \\
 T^*(\lambda^*,t)  &= \sigma^{-1} T(\lambda,t) \sigma,
\end{aligned}\end{equation}
or in components:
\begin{equation}\label{eq:apm1}\begin{aligned}
a^-(\lambda) &= (a^+(\lambda^*))^*, &\qquad b^-(\lambda,t) &= (b^+(\lambda^*,t))^*.
\end{aligned}\end{equation}
Thus if $\lambda_k^+$ is a zero of $a^+(\lambda)$ then $\lambda_k^{+,*}$ is a zero of $a^-(\lambda)$. In other words,
the discrete eigenvalues of $L$ satisfying this reduction come in complex conjugate pairs $\lambda_k^{-}=\lambda_k^{+,*}$.

More specifically, for the action-angle variables we get:
\begin{subequations}\label{eq:AA*1}
\begin{equation}\label{eq:AA*1a}\begin{aligned}
\eta(\lambda) &=\eta^*(\lambda)= - \frac{1}{\pi}\ln |a^+(\lambda)|^2 , \\
\kappa(\lambda,t) &= -\kappa^*(\lambda,t)  =i \arg b^+(\lambda) ,
\end{aligned}\end{equation}
\begin{equation}\label{eq:AA*1b}\begin{aligned}
  \eta_k^+ & =(\eta_k^-)^* =-2i \lambda_k^+, &\quad b_k^- &= b_k^{+,*}, \\
 \kappa_k^+ &= -(\kappa_k^-)^* = \ln |b_k^+| +i \arg b^+_k, &\quad \dot{a}_k^- &= (\dot{a}_k^+)^*,
\end{aligned}\end{equation}
\end{subequations}
where $ \lambda \in \mathbb{R} $ and $k=1,\dots, N$.
Thus we conclude that $\Omega_{(0)}$ becomes purely real and has the form:
\begin{multline}\label{eq:10.17}
\Omega_{(0)} =\int_{-\infty }^{\infty} d\lambda\,\delta \eta(\lambda )\wedge \delta \arg b^+(t,\lambda)  \\
+ 1\sum_{k=1}^{N} \left( \delta \eta _k^+ \wedge \delta \kappa^+_k (t) - \delta \eta _{k}^{+,*} \wedge \delta \kappa_k^{+,*}(t)\right).
\end{multline}

As regards the integrals of motion, they also become real valued:
\begin{multline}\label{eq:10.19}
C_p = - {1 \over 2 } \int_{-\infty }^{\infty } d\lambda \, \lambda ^{p-1} \eta(\lambda )\\
 + \frac{ i^{p+1}}{2^p p}  \sum_{k=1}^{N} \left(  (\eta _k^+)^p - (-\eta _k^{+,*})^p \right)
\end{multline}
and so does the Hamiltonian $H^{(0)} $
\begin{multline}\label{eq:H-red1}
H_{(0)}= 4 \sum_{p}^{} f_pC_{p+1} = -2\int_{-\infty }^{\infty } d\mu \, f(\mu ) \eta(\mu ) \\
-8\sum_{k=1}^{N} \Im (F_k^+),
\end{multline}
where $f(\lambda ) $ is the dispersion law and $F(\lambda ) $ and
$F_k^+ $ are introduced in (\ref{eq:H0a}).

Similar effects hold true also for  the family of symplectic forms of the hierarchy
$\Omega_{(m)} $:
\begin{multline}\label{eq:10.18}
\Omega_{(m)} = \int_{-\infty }^{\infty } d\lambda \, \lambda ^m \delta \eta(\lambda ) \wedge \delta \arg b^+(t,\lambda)  \\
+ c_{0,m} \sum_{k=1}^{N} \left( \delta (\eta_k^+)^{m+1} \wedge \delta \kappa_k^+ - (-1)^m
\delta (\eta_k^{+,*})^{m+1} \wedge \delta \kappa_k^{+,*} \right).
\end{multline}
The hierarchy of Hamiltonians  is provided by:
\begin{multline}\label{eq:Hh-red1}
H_{(m)} = 4 \sum_{p}^{} f_pC_{p+m+1} \\
 = -2\int_{-\infty }^{\infty } d\mu \, \mu ^m f(\mu ) \eta(\mu ) + 4i \sum_{k=1}^{N} (F_{(m),k}^{+} -F_{(m),k}^{-}) ,
\end{multline}
where
\begin{equation}\label{eq:Fkmp}
F^{(m)}(\lambda) = \int^\lambda d\lambda'\; f(\lambda')
\lambda^{\prime,m}, \quad  F_{(m),k}^{\pm} =  F_{(m)}\left( \frac{\pm i\eta_k^\pm}{2} \right). \quad
\end{equation}

Thus we see that the overall effect of the reduction is to
decrease `twice' the number of dynamical variables both on the
continuous and discrete spectrum. Now two of the three types of
action variables: $\arg
b^+(t,\lambda) $ and $\arg b_k^+ $ are real and take values in the
interval $[0,2\pi] $; the third type $\ln |b_k^+| $ is also real,
but may take arbitrary values.

The reduction imposes  also restrictions on the dispersion law of
the NLEE. In other words the reduction
(\ref{eq:U*}) admits only dispersion laws whose expansion
coefficients are real:
\begin{equation}\label{eq:f-red1a}
f(\lambda ) = \sum_{p}^{} f_p\lambda ^p, \qquad f_p=f_p^*.
\end{equation}
Some of the most important examples of NLEE obtained by this
reduction ($u=q_+(x,t) =q_-^*(x,t)$) are the NLS eq., the complex mKdV eq.  and a combination
of both NLS and complex mKdV:
\begin{equation}\label{eq:NLEE-red11}\begin{aligned}
& \mbox{NLS}  &\; & iu_t +u_{xx} + 2|u|^2u(x,t) =0, \\
& \mbox{cmKdV:}  &\;  & u_t +u_{xxx} + 6|u|^2u_x(x,t) =0 , \\
& \mbox{NLS-cmKdV:} &\;  &iu_t +u_{xx} + 2|u|^2u(x,t) \\
& &\; &\qquad  + ic_0 (u_{xxx} + 6|u^2|u_x)=0.
\end{aligned}\end{equation}
Their dispersion laws  are given by:
\begin{equation}\label{eq:disp-red1}\begin{split}
f_{\rm NLS}(\lambda ) &= -\lambda ^2, \qquad f_{\rm cmKdV}(\lambda ) = -4 \lambda ^3,  \\
f_{\rm NLS-cmKdV}(\lambda ) &= -\lambda ^2 - 4c_0\lambda ^3.
\end{split}\end{equation}

\subsection{The nonlocal involution A: $q_-(-x,t)=(q_+)^*(x,t) =u(x,t)$ }\label{ssec:10.1}

The involution stated in the title of this Subsection is one of the simplest nonlocal involution compatible
with our Lax pair. If we assume that $q_-(x,t) =   q_+(-x,t)^*$ then one can check that
$U(x,t,\lambda) =q(x,t)-\lambda \sigma_3$ satisfies:
\begin{equation}\label{eq:Ue*1}\begin{split}
U(x,t,\lambda) = \sigma_1 \left( U(-x,t,-\lambda^*)\right)^* \sigma_1 ^{-1},
\qquad \sigma_1  = \left(\begin{array}{cc} 0 &  1 \\ 1 & 0  \end{array}\right).
\end{split}\end{equation}

In other words the involution on the potential $q(x,t)$ can be extended to an involutive automorphism
of the affine Lie algebra in which $U(x,t,\lambda)$ and $V(x,t,\lambda)$ take values. Obviously this
involutive automorphism extends to all fundamental solutions of $L$. For example, the Jost solutions
must satisfy \cite{AblMus}:
\begin{equation}\label{eq:Jo*1}\begin{split}
\psi^-(x,t,\lambda) &= \sigma_1 ^{-1} \left( \phi^- (-x,t,-\lambda^*)\right)^*, \\
\psi^+(x,t,\lambda) &= \sigma_1  \left( \phi^+ (-x,t,-\lambda^*)\right)^*.
\end{split}\end{equation}
This relation can be written in compact form:
\begin{equation}\label{eq:Jo*2}\begin{split}
\psi(x,t,\lambda) = \sigma_1  \left( \phi (-x,t,-\lambda^*)\right)^* \sigma^{-1} .
\end{split}\end{equation}
Taking into account eq. (\ref{eq:AKNS2})  we derive the following constraint for the scattering matrix:
\begin{equation}\label{eq:T*0}\begin{split}
T(\lambda,t) = \sigma_1  \left(  \widehat{T}(-\lambda^*,t) \right)^* \sigma_1 ^{-1};
\end{split}\end{equation}
similarly from eq. (\ref{eq:Fas0}) for the FAS we find:
\begin{equation}\label{eq:chiR1}\begin{split}
\chi^+(x,t,\lambda)  &= \sigma_1  \left( \chi^+ (-x,t,-\lambda^*)\right)^* \sigma_1 ^{-1}  , \\
\chi^-(x,t,\lambda)  &= \sigma_1  \left( \chi^- (-x,t,-\lambda^*)\right)^* \sigma_1 ^{-1}  .
\end{split}\end{equation}

The involution imposes a condition also on the dispersion law of the NLEE; in our case this is:
\begin{equation}\label{eq:f*}\begin{split}
f(\lambda) &= (f(-\lambda^*))^*, \quad \mbox{i.e.} \quad f(\lambda) = \sum_{k>0}^{}  f_{2k} \lambda^{2k}  ,
\end{split}\end{equation}
which means that $f(\lambda)$ must be an even polynomial of $\lambda$ with real coefficients. In particular, this nonlocal involution
A) is compatible with the NLS equation, because its dispersion law is $f_{\rm NLS}(\lambda)=-2\lambda^2$ but
can not be applied to the mKdV eq. since its dispersion law is $f_{\rm mKdV}(\lambda)=-4\lambda^3$.
In what follows we will need eq. (\ref{eq:T*0}) in components:
\begin{equation}\label{eq:ab*1}\begin{split}
a^+(\lambda) &= (a^+(-\lambda^*))^*, \qquad a^-(\lambda) = (a^-(-\lambda^*))^*, \\
b^-(t,\lambda) &=   (b^+(t,-\lambda^*))^*.
\end{split}\end{equation}

The scattering data on the discrete spectrum is also subject to constraints. For example,
the constraints on $a^\pm (\lambda)$ (\ref{eq:ab*1}) mean that if $\lambda_k^+$ is a zero of
$a^+(\lambda)$ (resp. if $\lambda_k^-$ is a zero of $a^-(\lambda)$) then $-(\lambda_k^+)^*$
is also a zero of $a^+(\lambda)$ (resp. $-(\lambda_k^-)^*$ is also a zero of $a^-(\lambda)$).
In other words, the discrete eigenvalues of $L$ must be symmetrically situated around the
imaginary axis in the complex $\lambda$-plane.

\begin{corollary}\label{cor:4}
The above statement means that $L$ must have two different types of discrete eigenvalues:
a) purely imaginary ones
\begin{equation}\label{eq:ins}\begin{split}
\lambda_s^+ = - (\lambda_s^+)^*= in_s^+, \qquad \lambda_s^- =- (\lambda_s^-)^*= -in_s^-   ;
 \end{split}\end{equation}
 with $n_s^\pm$ real and positive and $ s=1,\dots N_1$;
b) pairs of complex  conjugated ones
\begin{equation}\label{eq:irpm}\begin{aligned}
\lambda_r^+ &=p_r^+ + is_r^+, &\quad  \lambda_{r+N_2}^+& =- \lambda_r^{+,*}=-p_r^+ + is_r^+ , \\
\lambda_r^- &=p_r^- - is_r^-, &\quad \lambda_{r+N_2}^- &=- \lambda_r^{-,*}=-p_r^- - is_r^- ,
\end{aligned}\end{equation}
where $ r=N_1+1,\dots ,N_1+N_2$,  $p_r^\pm  >0$ and $s_r^\pm >0$.
In what follows without restrictions we assume that $n_s^\pm $ for $s=1,\dots , N_1$ and $p_r$, $s_r$ for
$r=N_1+1,\dots, N_2$ are all real and positive.
Then the number of  discrete eigenvalues of $L$ will be $N= 2N_1+4N_2$. In particular,
the two different types of eigenvalues mean that the nonlocal NLS, like the sine-Gordon equation
 must have two different types  of soliton solutions.

\end{corollary}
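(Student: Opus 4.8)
The plan is to read off the whole classification straight from the reduction constraint (\ref{eq:ab*1}) on $a^\pm(\lambda)$ together with condition {\bf C2}. First I would introduce the anti-holomorphic map $\nu(\lambda)=-\lambda^*$, the reflection of the $\lambda$-plane about the imaginary axis: it is an involution ($\nu^2=\id$), it preserves each half-plane $\bbbc_\pm$, and its fixed-point set is exactly the imaginary axis. The first identity in (\ref{eq:ab*1}), $a^+(\lambda)=(a^+(-\lambda^*))^*$, shows at once that if $\lambda_0\in\bbbc_+$ is a zero of $a^+$, then evaluating the identity at $\lambda=-\lambda_0^*$ gives $a^+(-\lambda_0^*)=(a^+(\lambda_0))^*=0$; hence $\nu$ maps the zero set of $a^+$ in $\bbbc_+$ (finite, by {\bf C2}) into itself. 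The same computation with the second identity of (\ref{eq:ab*1}) shows that $\nu$ also permutes the zeros of $a^-$ in $\bbbc_-$.

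The second step is purely combinatorial: an involution of a finite set splits it into fixed points and two-element orbits. For the zeros of $a^+$ in $\bbbc_+$ the $\nu$-fixed ones are precisely those lying on the open positive imaginary semi-axis; writing them as $\lambda_s^+=in_s^+$ with $n_s^+>0$, $s=1,\dots,N_1$, gives the eigenvalues of type a). Each two-element orbit is a pair $\{\lambda,-\lambda^*\}$ with $\Re\lambda\neq0$; choosing which member to call $\lambda_r^+=p_r^++is_r^+$ (so $p_r^+>0$, $s_r^+>0$) produces its partner $\lambda_{r+N_2}^+=-(\lambda_r^+)^*$, and with $r=N_1+1,\dots,N_1+N_2$ these give the $\bbbc_+$ part of the eigenvalues of type b). Thus $a^+$ has $N_1+2N_2$ zeros in $\bbbc_+$. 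Running the identical argument for $a^-$ on the negative imaginary semi-axis and on its off-axis orbits yields $\lambda_s^-=-in_s^-$ and the pairs $\lambda_r^-$, $\lambda_{r+N_2}^-=-(\lambda_r^-)^*$; at this point one should note that Remark \ref{rem:C2} ($N_+=N_-$) forces only $N_1^++2N_2^+=N_1^-+2N_2^-$, and the corollary as stated treats the (generic) case $N_1^+=N_1^-=N_1$, $N_2^+=N_2^-=N_2$. Counting the zeros of $a^+$ in $\bbbc_+$ together with those of $a^-$ in $\bbbc_-$ then gives $N=2(N_1+2N_2)=2N_1+4N_2$ discrete eigenvalues. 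Condition {\bf C2} (finitely many zeros, all simple, none on $\bbbr$) guarantees the enumeration is complete and counts nothing twice.

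For the concluding assertion that $L$ then carries two genuinely distinct soliton families, the point is that the two orbit types enter the reconstruction of $q(x,t)$ differently: a purely imaginary pair $(\lambda_s^+,\lambda_s^-)=(in_s^+,-in_s^-)$ is $\nu$-self-conjugate, so the corresponding B\"acklund/dressing factor is built on a single eigenvalue in each half-plane, whereas a quadruplet $\{\lambda_r^+,-(\lambda_r^+)^*,\lambda_r^-,-(\lambda_r^-)^*\}$ forces a dressing factor carrying two $\bbbc_+$ and two $\bbbc_-$ eigenvalues locked together by the reduction; the resulting one-soliton profiles differ qualitatively, just as the kink and the breather do for sine-Gordon. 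I expect the only step needing real care to be this last clause: the spectral bookkeeping above is an elementary consequence of (\ref{eq:ab*1}) and {\bf C2}, but making ``two types of solitons'' precise means running the inverse-scattering (or dressing) construction with the involution imposed, which I would carry out separately, or simply invoke by analogy with the sine-Gordon case, rather than fold into the proof of the corollary.
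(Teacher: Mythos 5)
Your proof is correct and follows essentially the same route as the paper: the reduction constraint (\ref{eq:ab*1}) makes the finite zero sets of $a^\pm$ invariant under $\lambda\mapsto-\lambda^*$, and decomposing these sets into fixed points (purely imaginary zeros) and two-element orbits yields the classification and the count $N=2N_1+4N_2$, with the "two soliton types" clause left, as in the paper, at the level of a structural observation rather than a full dressing construction. Your explicit remark that Remark~\ref{rem:C2} only forces $N_1^++2N_2^+=N_1^-+2N_2^-$, so that the corollary tacitly assumes $N_1^+=N_1^-$ and $N_2^+=N_2^-$, is a fair point the paper leaves implicit, but it does not change the argument.
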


\begin{remark}\label{rem:5}
The simplest one-soliton solution for the nonlocal NLS was calculated in \cite{AblMus} and shown to
be a singular function which does not vanish for $x\to\pm \infty$, see also \cite{Val}. The soliton
in \cite{AblMus} however was derived assuming that $a^+(\lambda)$ has only one simple zero
which is purely imaginary, thus violating condition {\bf C2}. Our hypothesis is that one may be able
to construct regular soliton solutions if we take configurations of the zeroes  $\lambda_k^\pm$
satisfying condition {\bf C2}. In particular this means, that the simplest regular soliton solution will
be a two-soliton one.
\end{remark}

The discrete spectrum of $L$ satisfying the non-local reduction consists of all
the zeroes of $a^\pm (\lambda)$ which can be of two sorts,  see Corollary  \ref{cor:4} above.

\begin{lemma}\label{lem:2}
The functions  $a^\pm(\lambda)$ satisfy the dispersion relations
\begin{multline}\label{eq:CalA2}
 \mathcal{A}(\lambda) = \frac{1}{2\pi i} \int_{-\infty}^{\infty} \frac{d\mu}{\mu - \lambda} \ln \left( a^+a^-(\mu) \right) \\
 + \sum_{s=1}^{N_1} \ln \frac{ \lambda -in_s^+}{\lambda +in_s^-}  +\sum_{r=N_1+1}^{N_1+N_2}
 \ln \frac{(\lambda - \lambda_r^+)(\lambda +\lambda_r^{+,*}) }{(\lambda - \lambda_r^-)(\lambda +\lambda_r^{-,*})}.
\end{multline}
where
\begin{equation}\label{eq:calA3}\begin{aligned}
 \mathcal{A}(\lambda) = \begin{cases} \ln a^+(\lambda) & \mbox{for} \quad \lambda \in \mathbb{C}_+ \\
 \frac{1}{2} \ln (a^+(\lambda)/a^-(\lambda)) & \mbox{for} \quad \lambda \in \mathbb{R} \\
 - \ln a^-(\lambda) & \mbox{for} \quad \lambda \in \mathbb{C}_- .
 \end{cases}
\end{aligned}\end{equation}
\end{lemma}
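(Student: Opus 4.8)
The plan is to reproduce, almost verbatim, the contour-integral argument that produced the dispersion relation (\ref{eq:Ja2}) / (\ref{eq:calA'}); the only genuinely new ingredient is the richer pattern of zeros of $a^\pm(\lambda)$ imposed by the non-local involution, which has already been catalogued in Corollary~\ref{cor:4}. First I would record that pattern explicitly: $a^+(\lambda)$ has the simple zeros $in_s^+$ ($s=1,\dots,N_1$) together with the pairs $\lambda_r^+,\,-\lambda_r^{+,*}$ ($r=N_1+1,\dots,N_1+N_2$), all lying in $\mathbb{C}_+$, while $a^-(\lambda)$ has the simple zeros $-in_s^-$ together with the pairs $\lambda_r^-,\,-\lambda_r^{-,*}$, all lying in $\mathbb{C}_-$; by condition {\bf C2} (see Remark~\ref{rem:C2}) the total numbers of zeros of $a^+$ and of $a^-$ coincide.

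In place of the auxiliary functions (\ref{eq:ati}) I would introduce $\tilde a^\pm(\lambda)=a^\pm(\lambda)\,\mathcal{B}^\pm(\lambda)$ with the Blaschke-type factor
\[
\mathcal{B}^-(\lambda)=\prod_{s}\frac{\lambda-in_s^+}{\lambda+in_s^-}\;\prod_{r}\frac{(\lambda-\lambda_r^+)(\lambda+\lambda_r^{+,*})}{(\lambda-\lambda_r^-)(\lambda+\lambda_r^{-,*})},
\]
and $\mathcal{B}^+(\lambda)=\big(\mathcal{B}^-(\lambda)\big)^{-1}$. Each $\mathcal{B}^\pm$ is a quotient of two monic polynomials of the same degree, so $\tilde a^\pm(\lambda)\to1$ as $\lambda\to\infty$ by (\ref{eq:apmlim}); $\mathcal{B}^+$ cancels exactly the zeros of $a^+$ lying in $\mathbb{C}_+$ and introduces poles only in $\mathbb{C}_-$, so that $\tilde a^+$ is analytic and zero-free in $\mathbb{C}_+$ and $\ln\tilde a^+$ is single-valued there; symmetrically $\tilde a^-$ is analytic and zero-free in $\mathbb{C}_-$. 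Finally $\mathcal{B}^+\mathcal{B}^-\equiv1$, so $\tilde a^+\tilde a^-=a^+a^-$ identically, in particular on the real axis.

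I would then evaluate the integral (\ref{eq:Ja}) with these new $\tilde a^\pm$. For $\lambda\in\mathbb{C}_+$, Cauchy's residue theorem gives $\mathcal{J}_a(\lambda)=\ln\tilde a^+(\lambda)$ exactly as in (\ref{eq:Ja1}), the $\gamma_-$ contribution vanishing because $\ln\tilde a^-$ is analytic in $\mathbb{C}_-$ and $\lambda\notin\mathbb{C}_-$. Splitting the contour integral into its real-axis part and the two infinite semicircles, the semicircles drop out (again because $\tilde a^\pm\to1$), leaving $\mathcal{J}_a(\lambda)=\frac{1}{2\pi i}\int_{-\infty}^{\infty}\frac{d\mu}{\mu-\lambda}\ln\big(\tilde a^+\tilde a^-(\mu)\big)=\frac{1}{2\pi i}\int_{-\infty}^{\infty}\frac{d\mu}{\mu-\lambda}\ln\big(a^+a^-(\mu)\big)$. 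Since $\ln a^+(\lambda)=\ln\tilde a^+(\lambda)-\ln\mathcal{B}^+(\lambda)=\mathcal{J}_a(\lambda)+\ln\mathcal{B}^-(\lambda)$ and $\ln\mathcal{B}^-(\lambda)$ is precisely the pair of discrete sums appearing in (\ref{eq:CalA2}), this yields (\ref{eq:CalA2}) for $\lambda\in\mathbb{C}_+$. Repeating the computation for $\lambda\in\mathbb{R}$ and for $\lambda\in\mathbb{C}_-$, where Cauchy's theorem gives $\mathcal{J}_a(\lambda)=\frac12\ln\big(a^+(\lambda)/a^-(\lambda)\big)$ and $-\ln a^-(\lambda)$ respectively, and assembling the three cases into $\mathcal{A}(\lambda)$ according to (\ref{eq:calA3}), completes the proof.

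The computation is essentially routine; the one step that demands care is the construction of $\mathcal{B}^\pm$, where one must at the same time (i) cancel all zeros of $a^+$ (resp.\ $a^-$) inside $\mathbb{C}_+$ (resp.\ $\mathbb{C}_-$) using the particular symmetric configuration of Corollary~\ref{cor:4} — keeping each conjugate pair together in the factors $(\lambda-\lambda_r^{\pm})(\lambda+\lambda_r^{\pm,*})$, while a purely imaginary eigenvalue, being its own image under $\lambda\mapsto-\lambda^*$, contributes a single factor; (ii) keep $\mathcal{B}^\pm\to1$ at infinity, which is possible precisely because the zero counts of $a^+$ and $a^-$ agree by {\bf C2}, so that the infinite semicircles still vanish by (\ref{eq:apmlim}); and (iii) arrange $\mathcal{B}^+\mathcal{B}^-\equiv1$, so that the continuous-spectrum integrand stays $\ln(a^+a^-)$ with no stray rational factors left over. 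Once $\mathcal{B}^\pm$ is set up in this way, everything else is a line-by-line transcription of the derivation of (\ref{eq:Ja2}).
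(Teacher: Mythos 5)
Your proposal is correct and follows essentially the same route as the paper: you redefine $\tilde{a}^\pm(\lambda)$ with exactly the Blaschke-type factors the paper uses in eq.~(\ref{eq:ta2pm}) (your $\mathcal{B}^+=(\mathcal{B}^-)^{-1}$ reproduces them verbatim), and then rerun the contour-integral evaluation of $\mathcal{J}_a(\lambda)$ from the generic case. If anything, your write-up is more explicit than the paper's sketch about why $\mathcal{B}^\pm\to1$ at infinity and why $\mathcal{B}^+\mathcal{B}^-\equiv1$ keeps the real-axis integrand equal to $\ln(a^+a^-)$.
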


\begin{proof}

The proof goes in analogy with one of the generic cases in Subsection IV.3.
 Only now we need a different definition of $\tilde{a}^\pm (\lambda)$ as follows:
\begin{equation}\label{eq:ta2pm}\begin{split}
\tilde{a}^+ (\lambda) = a^+(\lambda) \prod_{s=1}^{N_1} \frac{\lambda +in_s^-}{\lambda -in_s^+}
 \prod_{r=1}^{N_2} \frac{(\lambda -\lambda_r^-)(\lambda +\lambda_r^{-,*}) }{(\lambda -\lambda_r^+)(\lambda +\lambda_r^{+,*})}, \\
\tilde{a}^- (\lambda) = a^-(\lambda) \prod_{s=1}^{N_1} \frac{\lambda -in_s^+}{\lambda +in_s^-}
 \prod_{r=1}^{N_2} \frac{(\lambda -\lambda_r^+)(\lambda +\lambda_r^{+,*}) }{(\lambda -\lambda_r^-)(\lambda +\lambda_r^{-,*})},
\end{split}\end{equation}
where $\tilde{a}^\pm$ are analytic functions for $\lambda\in \mathbb{C}_\pm$ and have no zeroes in their regions of analyticity.
We can again evaluate the integral $\mathcal{J}_a(\lambda)$ as in (\ref{eq:Ja}) and derive dispersion relations for the
analytic functions $a^\pm(\lambda)$. Due to the different reduction we get a new result given by (\ref{eq:CalA2}) above.
\end{proof}

Taking into account that
\begin{equation}\label{eq:apam}\begin{split}
a^+a^-(\mu) = (1+\rho^+\rho^-(\mu))^{-1} =(1+\tau^+\tau^-(\mu))^{-1}
\end{split}\end{equation}
 one easily concludes that the analytic functions
$a^+a^-(\mu)$ can be recovered easily from each of the minimal sets of scattering data (\ref{eq:MinSet}); again we have to
take into account the different structure of the set of discrete eigenvalues.

We will need also the behavior of $a^\pm (\lambda)$ in the neighborhood of the discrete eigenvalues:
\begin{equation}\label{eq:apm3}\begin{split}
a^\pm (\lambda)  \mathop{\simeq}\limits_{\lambda \simeq \lambda_s^\pm } (\lambda \mp in_s^\pm) \dot{a}^\pm_s
&+ \frac{1}{2}  (\lambda \mp in_s^\pm)^2 \ddot{a}^\pm_s \\
&\qquad + \mathcal{O}\left( (\lambda \mp in_s^\pm)^3\right), \\
a^\pm  (\lambda)  \mathop{\simeq}\limits_{\lambda \simeq \lambda_r^\pm } (\lambda - \lambda_r^\pm ) \dot{a}^\pm _r
&+ \frac{1}{2}  (\lambda - \lambda_r^\pm )^2 \ddot{a}^\pm _r\\
&\qquad  + \mathcal{O}\left( (\lambda - \lambda_r^\pm )^3\right), \\
a^\pm  (\lambda)  \mathop{\simeq}\limits_{\lambda \simeq -\lambda_r^{\pm ,*}} (\lambda + \lambda_r^{\pm ,*}) \dot{a}^\pm _{r+N_2}
&+ \frac{1}{2}  (\lambda + \lambda_r^{\pm ,*})^2 \ddot{a}^\pm _{r+N_2}
\\ &\qquad + \mathcal{O}\left( (\lambda + \lambda_r^{\pm ,*})^3\right) .
\end{split}\end{equation}
The reduction conditions (\ref{eq:ab*1}) impose constraints on the coefficients of eq. (\ref{eq:apm3}) which read:
\begin{equation}\label{eq:adotk}\begin{aligned}
\dot{a}_s^\pm &= - \dot{a}_s^{\pm , *}, &\quad \ddot{a}_s^\pm &=  \ddot{a}_s^{\pm , *}, \\
\dot{a}_r^+ &= - \dot{a}_{r+N_2}^{+ , *}, &\quad \ddot{a}_r^+ &=  \dot{a}_{r+N_2}^{+, *}, \\
\dot{a}_r^- &= - \dot{a}_{r+N_2}^{- , *}, &\quad \ddot{a}_r^+  &=  \ddot{a}_{r+N_2}^{- , *}.
\end{aligned}\end{equation}

The constants $\dot{a}_k^\pm$ and $\ddot{a}_k^\pm$ are not enough to characterize the discrete spectrum.
Indeed, the FAS $\chi^\pm (x,\lambda)$ become degenerate for $\lambda =\lambda_k^\pm$, which means that
if $\psi_k^\pm(x,t)= \psi^\pm(x,t,\lambda_k^\pm)$ and  $\phi_k^\pm(x,t)= \phi^\pm(x,t,\lambda_k^\pm)$ then
\begin{equation}\label{eq:xxx}\begin{split}
\phi_s^\pm (x,t) = \pm b_s^\pm(t) \psi^\pm_s(x,t), \\  \phi_r^\pm (x,t) &= \pm b_r^\pm (t)\psi^\pm_r(x,t),
\end{split}\end{equation}
and, as we shall see below, the constants $b_k^\pm(t)$ determine the angle variables.

The easiest way to find their properties   is to assume that the potential $q(x,t)$ is on a finite support, see Remark \ref{rem:1}. Then all the
matrix elements of the scattering matrix $a^\pm(\lambda)$ and  $b^\pm(\lambda)$ allow analytic extension
for any $\lambda \in \mathbb{C}$.
Consider now the  functions $b^\pm (\lambda,t)$ in the vicinity of $\lambda_k^\pm$:
\begin{equation}\label{eq:bpmk}\begin{split}
b^\pm(\lambda,t) \mathop{\simeq}\limits_{\lambda \simeq \lambda_k^\pm }  b^\pm_k(t)  +\mathcal{O}\left(   (\lambda -\lambda_k^\pm) \right), \\
b^\pm(\lambda,t) \mathop{\simeq}\limits_{\lambda \simeq \lambda_k^\mp }  \widetilde{b}^\pm_k(t) + \mathcal{O}\left(   (\lambda -\lambda_k^\mp) \right),
\end{split}\end{equation}
i.e., we have denoted by $b^\pm_k = b^\pm(\lambda_k^\pm)$ and $\widetilde{b}^\pm_k = b^\pm(\lambda_k^\mp)$.
In addition the relation $b^-(\lambda,t) = (b^+(-\lambda^*,t))^*$ gives:
\begin{equation}\label{eq:bpmk1}\begin{aligned}
 b^- (\lambda_s^+,t) &= (b^+ (-\lambda_s^{+,*},t))^*, \\  b^+ (\lambda_s^-,t) &= (b^+ (-\lambda_s^{-,*},t))^*,  \\
 b^- (\lambda_r^+,t) &= (b^+ (-\lambda_r^{+,*},t))^*, \\  b^+ (\lambda_r^-,t) &= (b^+ (-\lambda_r^{-,*},t))^*,
\end{aligned}\end{equation}
or with the notations just introduced:
\begin{equation}\label{eq:bpmk2}\begin{aligned}
b^\pm_s (t) &= \widetilde{b}^{\mp,*}_s(t), \\ b^\pm_r(t) &= \widetilde{b}^{\mp,*}_{r+N_2}(t), \qquad
\widetilde{b}^\pm_r(t) = b^{\mp,*}_{r+N_2}(t),
\end{aligned}\end{equation}
where $s=1,\dots, N_1$ and $r=N_1+1,\dots, N_1+N_2$.

Now we can consider  the fact that:
\begin{equation}\label{eq:det1}\begin{split}
\det T(\lambda,t) \equiv a^+(\lambda) a^-(\lambda) + b^+(\lambda,t) b^-(\lambda,t) =1,
\end{split}\end{equation}
which for finite support potentials holds true for all $\lambda$.
Putting $\lambda =\lambda_k^\pm$ into the left hand side of eq. (\ref{eq:det1}) we immediately find that:
\begin{equation}\label{eq:Bpmk}\begin{split}
b_k^+(t)\widetilde{b}^-_k (t)= 1, \qquad b_k^-(t)\widetilde{b}^+_k(t) = 1,
\end{split}\end{equation}
or
\begin{equation}\label{eq:Bpmk2}\begin{aligned}
 b^\pm_s(t) &= \frac{1}{b_s^{\pm ,*}(t)}. &\qquad \widetilde{b}^\pm_s(t) &= \frac{1}{\widetilde{b}_s^{\pm ,*}(t)},  \\
 b^\pm_r(t) &= \frac{1}{b_{r+N_2}^{\pm ,*}(t)}, &\qquad  \widetilde{b}^\pm_r (t)&= \frac{1}{\widetilde{b}_{r+N_2}^{\pm ,*}(t)}.
\end{aligned}\end{equation}

Now we have to take the limit when the support of the potential goes to infinity. In this case the functions $b^\pm(\lambda)$
cannot be extended off the real axis and  $b_k^\pm$ and $\widetilde{b}^\pm_k$ are just some constants. However the relations
(\ref{eq:bpmk2}), (\ref{eq:Bpmk}) and (\ref{eq:Bpmk2}) still remain valid.

Let us now analyze the effect of the nonlocal involution A) on the trace identities, see Subsection IV.3.
Our first remark is that the involution will not change the formal expressions of the integrals of motion $C_k$ in terms
of $q_\pm (x,t)$. Indeed, we just have to replace in them $q_+(x,t)$ by $u(x,t) $ and $q_-(x,t)$ by $u^*(-x,t) $.
However, the expressions for $C_k$ in terms of the scattering data (or the action variables) will change substantially in view of
Lemma \ref{lem:2}. Skipping the details we obtain.
\begin{equation}\label{eq:}\begin{split}
\mathcal{A}(\lambda) = i \sum_{p=1}^{\infty} \frac{C_p}{\lambda^p},
\end{split}\end{equation}
where
\begin{multline}\label{eq:C-2m}
C_{2m}\\
 = - i \int_{0}^{\infty} d\mu \; \mu^{2m-1} \eta_1(\mu) + \frac{i (-1)^m}{4^m 2m}
\sideset{}{^\pm }\sum_{s=1}^{N_1} \left(\pm (\eta_s^\pm)^{2m}  \right) \\
+\frac{i (-1)^m}{4^m 2m} \sideset{}{^\pm }\sum_{s=1}^{N_1} \left(\pm (\eta_r^\pm)^{2m}  \pm (\eta_r^{\pm,*})^{2m} \right) ,
\end{multline}
\begin{multline}\label{eq:C-2m-1}
C_{2m-1}\\
 = -  \int_{0}^{\infty} d\mu \; \mu^{2m-2} \eta_0(\mu) + \frac{ (-1)^m 2}{4^m (2m-1)} \sideset{}{^\pm }\sum_{s=1}^{N_1}  (\eta_s^\pm)^{2m-1}   \\
+\frac{ (-1)^m 2}{4^m (2m-1)} \sideset{}{^\pm }\sum_{r=1}^{N_1} \left( (\eta_r^\pm)^{2m-1}  +(\eta_r^{\pm,*})^{2m-1} \right) .
\end{multline}
Note that all `even' conserved quantities $C_{2m} $ are purely imaginary, while all `odd' ones $C_{2m-1}$ take real values.

\begin{theorem}\label{thm:1AA}
The set of the action-angle variables for the nonlocal NLS in the case of involution A) is given by:

\begin{subequations}\label{eq:AA3}
\begin{equation}\label{eq:AA3a}\begin{aligned}
\kappa (\lambda,t) &= \frac{1}{2} \ln \frac{b^+(\lambda,t)}{b^-(\lambda,t)} = -\kappa^*(-\lambda,t), \\
\eta(\lambda)& =-\frac{1}{\pi} \ln \left( a^+a^-(\lambda)\right) = \eta^*(-\lambda),
\end{aligned}\end{equation}
\begin{equation}\label{eq:AA3b}\begin{aligned}
\eta_s^\pm & = \mp 2i \lambda_s^\pm = 2n_s^\pm , &\quad \kappa_s^\pm (t)&=\pm i \arg b_s^\pm (t),
\end{aligned}\end{equation}
\begin{equation}\label{eq:AA3c}\begin{aligned}
\eta_r^\pm & = \mp 2i \lambda_r^\pm = \mp 2ip_r^\pm  +2s_r^\pm , \\
 \kappa_r^\pm (t) &=\pm \ln |b_r^\pm (t) | \pm i \arg b_r^\pm (t), \\
\eta_{r+N_2}^\pm & = \pm 2i \lambda_r^{\pm,*} = \pm 2ip_r^\pm  +2s_r^\pm , \\
 \kappa_{r+N_2}^\pm (t) &=\mp \ln |b_r^\pm (t)| \pm i \arg b_r^\pm(t) =-\kappa_r^{\pm,*}(t) ,
\end{aligned}\end{equation}
\end{subequations}
where $\lambda\in \mathbb{R}$ runs over the continuous spectrum of $L$, $s=1,\dots, N_1$ and $r=N_1+1, \dots, N_1+N_2$.

\end{theorem}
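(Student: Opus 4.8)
The plan is to obtain Theorem~\ref{thm:1AA} by pushing the generic action--angle variables through the nonlocal involution A), in exact parallel with the treatment of the local involution in Subsection~\ref{ssec:10.1'}. The starting point is that $\{\eta(\lambda),\kappa(\lambda,t),\eta_k^\pm,\kappa_k^\pm\}$ are already established (properties \textbf{AA1}--\textbf{AA3} and the computation (\ref{eq:9.4})) to be canonically conjugate action--angle variables for the generic NLEE, with $\Omega_{(0)}^\bbbc$ in canonical form and $H^\bbbc$ a function of the $\eta$'s alone. It therefore suffices to: (i)~translate the reduction conditions (\ref{eq:ab*1}) and the discrete--spectrum structure of Corollary~\ref{cor:4} into symmetry relations among these variables; and (ii)~check that the surviving, independent data still parametrize a symplectic manifold on which $\Omega_{(0)}$ is a nondegenerate canonical form, so that they genuinely deserve to be called action--angle coordinates.

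For the continuous spectrum I would restrict (\ref{eq:ab*1}) to real $\lambda$, obtaining $a^\pm(\lambda)=(a^\pm(-\lambda))^*$ and $b^-(\lambda,t)=(b^+(-\lambda,t))^*$. Substituting these into the defining formulas $\eta(\lambda)=-\tfrac1\pi\ln(a^+a^-(\lambda))$ and $\kappa(\lambda,t)=\tfrac12\ln(b^+(\lambda,t)/b^-(\lambda,t))$ gives at once $\eta(\lambda)=\eta^*(-\lambda)$ and $\kappa(\lambda,t)=-\kappa^*(-\lambda,t)$, which is (\ref{eq:AA3a}). On the discrete spectrum I would handle the two species of eigenvalues of Corollary~\ref{cor:4} separately. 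For the purely imaginary points $\lambda_s^\pm=\pm i n_s^\pm$ one has $\eta_s^\pm=\mp2i\lambda_s^\pm=2n_s^\pm\in\mathbb{R}$, while evaluating $\det T=1$ at $\lambda=\lambda_s^\pm$ together with (\ref{eq:bpmk2}) forces $|b_s^\pm|=1$, so that $\kappa_s^\pm=\pm\ln b_s^\pm=\pm i\arg b_s^\pm$, i.e. (\ref{eq:AA3b}). For the complex quadruplets $\lambda_r^\pm$ and $\lambda_{r+N_2}^\pm=-\lambda_r^{\pm,*}$ I would insert the parametrization $\lambda_r^\pm=\pm p_r^\pm+is_r^\pm$ into $\eta_k^\pm=\mp2i\lambda_k^\pm$ and use the relations $b_{r+N_2}^\pm=1/b_r^{\pm,*}$ from (\ref{eq:Bpmk2}) to get $\kappa_{r+N_2}^\pm=-\kappa_r^{\pm,*}$, which yields (\ref{eq:AA3c}). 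Thus the independent data on the discrete spectrum reduce to $\{n_s^\pm,\arg b_s^\pm\}$ and $\{\lambda_r^\pm,b_r^\pm\}$ for $r=N_1+1,\dots,N_1+N_2$, everything else being fixed by the involution.

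To complete the argument I would verify the three structural features that make these variables action--angle variables. First, the Hamiltonian: Lemma~\ref{lem:2} and the expressions (\ref{eq:C-2m})--(\ref{eq:C-2m-1}) already write every $C_p$, hence $H$ via (\ref{eq:19.25}), in terms of $\eta(\lambda)$ and the discrete $\eta$'s only, so property \textbf{AA2} survives the reduction. Second, the time evolution: Theorem~\ref{thm:2} gives $\dot\eta=\dot\eta_k^\pm=0$ and $i\dot\kappa=2f(\lambda)$, $i\dot\kappa_k^\pm=2f(\lambda_k^\pm)$, and the evenness of the admissible dispersion laws (\ref{eq:f*}) is precisely what makes this flow compatible with the symmetry $\kappa(\lambda)=-\kappa^*(-\lambda)$, so the actions stay constant and the angles move linearly. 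Third — and this is the step I expect to be the \emph{main obstacle} — I would substitute the reduced expansion of $\sigma_3\delta q$ into $\Omega_{(0)}^\bbbc$ as in (\ref{eq:9.4}) and show that the contributions of the mirror members of each quadruplet, and of the self--conjugate imaginary eigenvalues, collapse consistently onto the half--sums appearing in (\ref{eq:AA3}), so that $\Omega_{(0)}$ remains a nondegenerate canonical two--form on the constrained phase space. The delicate bookkeeping here is to confirm that the folding about the imaginary axis introduces no degeneracy and that exactly one representative per orbit is retained as an independent coordinate; once this is in place the theorem follows.
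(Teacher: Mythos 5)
Your proposal is correct and follows essentially the same route as the paper: the paper likewise derives the reduced variables from the involution constraints on $a^\pm$, $b^\pm$ and the $\det T=1$ relations, then inserts the symplectic-basis expansion of $\sigma_3\delta q$ into $\Omega_{(0)}$ to exhibit the folded canonical form, and concludes by noting that $H$ depends only on the action-type variables, that the flow is linear in the angles, and that the symplectic basis is complete. The step you flag as the main obstacle (the collapse of the quadruplet and self-conjugate contributions onto half-sums over $[0,\infty)$ and the surviving discrete representatives) is precisely the computation the paper carries out in its displayed formula for $\Omega_{(0)}$.
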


\begin{proof}
Let us now insert these expressions into the formula for $\Omega_0$ (\ref{eq:9.4}). In what follows we will split $\kappa(\lambda)$
and $\eta(\lambda)$ into real and imaginary parts:
\begin{equation}\label{eq:xxx1}\begin{split}
\kappa(\lambda ,t) & = \kappa_0(\lambda ,t)+i \kappa_1(\lambda ,t), \qquad \eta(\lambda) = \eta_0(\lambda) + i \eta_1(\lambda) ,
\end{split}\end{equation}
with
\begin{equation}\label{eq:xxx2}\begin{aligned}
 \kappa_0(\lambda ,t) &= \frac{1}{2} \ln \left| \frac{b^+(\lambda,t)}{b^-(\lambda,t)}\right| , \\
 \kappa_1(\lambda ,t) &= \frac{1}{2} \left( \arg (b^+(\lambda,t) - \arg (b^-(\lambda,t)\right), \\
 \eta_0(\lambda) &= -\frac{1}{\pi} \ln \left| a^+a^-(\lambda) \right| , \\
 \eta_1(\lambda) &=- \frac{1}{\pi} \left( \arg (a^+(\lambda,t) + \arg (a^-(\lambda,t)\right) .
\end{aligned}\end{equation}

 For the expansion of $\sigma_3 \delta q(x)$ we obtain:
\begin{multline}\label{eq:dqsb}
\sigma_3 \delta q(x) = i \int_{-\infty}^{\infty} d\lambda \; \left( \mathcal{Q}(x,\lambda) \delta \eta(\lambda) -  \mathcal{P}(x,\lambda) \delta \kappa(\lambda) \right) \\
+ i  \sideset{}{^\pm}\sum_{r=N_1+1}^{N_1+N_2} \left( \mathcal{Q}_r^\pm (x) \delta \eta_r^\pm -  \mathcal{P}_r^\pm (x) \delta \kappa_r^\pm \right. \\
\left. + \mathcal{Q}_{r+N_2}^\pm (x) \delta \eta_{r+N_2}^\pm -  \mathcal{P}_{r+N_2}^\pm (x) \delta \kappa_{r+N_2}^\pm \right)  \\
+ i  \sideset{}{^\pm}\sum_{s=1}^{N_1} \left( \mathcal{Q}_s^\pm (x) \delta \eta_s^\pm -  \mathcal{P}_s^\pm (x) \delta \kappa_s^\pm \right) .
\end{multline}

In order to express the canonical symplectic form  $\Omega_0$ in terms of the action-angle variables
it is enough to take the skew-scalar product of $\frac{i}{2} \sigma_3 \delta q$ with both sides of eq. (\ref{eq:dqsb}) and take into account the second line of
Table \ref{tab:2}. The result is
\begin{multline}\label{eq:Ome0}
\Omega_{(0)} \equiv \frac{i}{2} \biglb \sigma_3 \delta q \wedgecomma \sigma_3\delta q\bigrb \\
= i \int_{-\infty}^{\infty} d\lambda\; \delta\eta(\lambda) \wedge \delta \kappa(\lambda) + i \sideset{}{^\pm}\sum_{s=1}^{N_1}
\delta\eta_s^\pm \wedge  \delta\kappa_s^\pm \\
+ i \sideset{}{^\pm}\sum_{r=N_1+1}^{N_1+N_2} \left( \delta\eta_r^\pm \wedge  \delta\kappa_r^\pm
+\delta\eta_{r+N_1}^\pm \wedge  \delta\kappa_{r+N_1}^\pm \right) \\
= -2 \int_{0}^{\infty} d\lambda\; \left(\delta\eta_0(\lambda) \wedge \delta \kappa_1(\lambda) +\delta\eta_1(\lambda) \wedge \delta \kappa_0(\lambda)\right) \\
- \sideset{}{^\pm}\sum_{s=1}^{N_1} \delta\eta_s^\pm \wedge  \delta\arg b_s^\pm \\
+ i \sideset{}{^\pm}\sum_{r=N_1+1}^{N_1+N_2} \left( \delta\eta_r^\pm \wedge  \delta\kappa_r^\pm
- \delta\eta_{r}^{\pm*} \wedge  \delta \kappa_{r}^{\pm,*} \right) ,
\end{multline}
where we made use of eqs. (\ref{eq:AA3}).

Now the proof follows from the following easy observations:\\
i) the equation:
\begin{equation}\label{eq:OmedH}\begin{split}
\Omega_{(0)} (i \sigma_3 q_t, \cdot) = \delta H_{(0)}
\end{split}\end{equation}
where $H_{(0)}= -2C_3$ reproduces the time-dependence of the scattering data, see eq. (\ref{eq:7.12c});\\
ii) The symplectic form $\Omega_{(0)}$ is canonical with respect to the set of action-angle variables
$\{ \eta(\lambda), \kappa(\lambda)\}$; \\
$H_{\rm nlNLS} $ depends only on the action-type variables $\eta(\lambda)$. \\
iv) The set of symplectic basis is complete.

These facts can be viewed  as natural generalization of the notion of AA variables to the
infinite-dimensional case.
\end{proof}

The `even' hierarchy of symplectic forms is:
\begin{multline}\label{eq:OmeE}
\Omega_{(2m)} \equiv \frac{i}{2} \biglb \sigma_3 \delta q \wedgecomma \Lambda^{2m} \sigma_3\delta q\bigrb \\
= 2 \int_{0}^{\infty} d\lambda\;  \lambda^{2m} \left(\delta\eta_0(\lambda) \wedge \delta \kappa_1(\lambda) +\delta\eta_1(\lambda) \wedge \delta \kappa_0(\lambda) \right)  \\
+ c_{1,m}\sideset{}{^\pm} \sum_{s=1}^{N_1} \left(\pm \delta (\eta_s^\pm)^{2m+1} \wedge  \delta\arg b_s^\pm  \right) \\
- i c_{1,m}  \sideset{}{^\pm}\sum_{r=N_1+1}^{N_1+N_2} \left( \delta (\eta_r^\pm)^{2m+1} \wedge  \delta\kappa_r^\pm
- \delta (\eta_{r}^{\pm*})^{2m+1} \wedge  \delta \kappa_{r}^{\pm,*} \right),
\end{multline}
where $c_{1,m} = (-1)^m 4^{-m} (2m+1)^{-1}$.

We end with the examples of the simplest NLEE satisfying the nonlocal reduction.
Their dispersion laws  are given by even functions of $\lambda$ with real coefficients.
\begin{eqnarray*}\label{eq:disp-red2}
& f_{\rm nl NLS}(\lambda ) = -\lambda ^2, \qquad f_{\rm nl NLS4}(\lambda ) = -4 \lambda ^4.
\end{eqnarray*}

\subsection{The nonlocal involution B: $q_-(-x,-t)=(q_+)^*(x,t) =v(x,t)$ }\label{ssec:10.2}

As mentioned above, the involution A) is compatible with the NLS equation and also with all
higher NLS equations whose dispersion law $f(\lambda)$ is an even function of $\lambda$.
However the involution A) is not compatible with the cmKdV equation
\begin{equation}\label{eq:cmKdV}\begin{split}
\frac{\partial v}{ \partial t} + \frac{\partial ^3 v}{ \partial x^3 } + 6 v(x,t) v^*(-x,-t) \frac{\partial v}{ \partial x }=0
\end{split}\end{equation}
 and all higher cmKdV-like equations whose dispersion law $f(\lambda)$ is an odd function of $\lambda$. These equations allow
reduction B)
\begin{equation}\label{eq:B}\begin{split}
q_-(-x,-t)=(q_+)^*(x,t) =v(x,t) ,
\end{split}\end{equation}
which requires simultaneous change of the sign of both $x$ and $t$. This involution leads to:
\begin{equation}\label{eq:Ue*2}\begin{split}
U(x,t,\lambda) = \sigma_1 \left( U(-x,-t,-\lambda^*)\right)^* \sigma_1 ^{-1},
\quad \sigma_1  = \left(\begin{array}{cc} 0 &  1 \\ 1 & 0  \end{array}\right).
\end{split}\end{equation}

Again  the involution on the potential $q(x,t)$ can be extended to an involutive automorphism
of the affine Lie algebra in which $U(x,t,\lambda)$ and $V(x,t,\lambda)$ are taking values.
All fundamental solutions of $L$ are also restricted by it; e.g., the Jost solutions must satisfy \cite{AblMus,AblMus2}:
\begin{equation}\label{eq:Jo*3}\begin{split}
\psi^-(x,t,\lambda) &= \sigma_1 ^{-1} \left( \phi^- (-x,-t,-\lambda^*)\right)^*, \\
\psi^+(x,t,\lambda) &= \sigma_1  \left( \phi^+ (-x,-t,-\lambda^*)\right)^* ,
\end{split}\end{equation}
or in compact form:
\begin{equation}\label{eq:Jo*4}\begin{split}
\psi(x,t,\lambda) = \sigma_1  \left( \phi (-x,-t,-\lambda^*)\right)^* \sigma_1^{-1} .
\end{split}\end{equation}
From eq. (\ref{eq:AKNS2})  we derive the following constraint for the scattering matrix:
\begin{equation}\label{eq:T*1}\begin{split}
T(\lambda,t) = \sigma_1  \left(  \widehat{T}(-\lambda^*,-t) \right)^* \sigma_1 ^{-1};
\end{split}\end{equation}
similarly from eq. (\ref{eq:Fas0}) for the FAS we find:
\begin{equation}\label{eq:chiR1'}\begin{split}
\chi^+(x,t,\lambda)  &= \sigma_1  \left( \chi^+ (-x,-t,-\lambda^*)\right)^* \sigma_1 ^{-1}  , \\
\chi^-(x,t,\lambda)  &= \sigma_1  \left( \chi^- (-x,-t,-\lambda^*)\right)^* \sigma_1 ^{-1}  .
\end{split}\end{equation}

Under this involution  the dispersion law of the NLEE must satisfy:
\begin{equation}\label{eq:f*m}\begin{split}
f(\lambda) &= -(f(-\lambda^*))^*, \quad \mbox{i.e.} \quad f(\lambda) = \sum_{k>0}^{}  f_{2k+1} \lambda^{2k+1}  ,
\end{split}\end{equation}
which means that $f(\lambda)$ must be an odd polynomial of $\lambda$ with real coefficients. In particular, the involution
B) is compatible with the mKdV eq., because its dispersion law is $f_{\rm mKdV}(\lambda)=-4\lambda^3$.
The eq. (\ref{eq:T*1}) in components reads:
\begin{equation}\label{eq:ab*1B}\begin{split}
a^+(\lambda) &= (a^+(-\lambda^*))^*, \qquad a^-(\lambda) = (a^-(-\lambda^*))^*, \\
b^-(t,\lambda) &=   (b^+(-t,-\lambda^*))^*.
\end{split}\end{equation}

Note that  for  $a^+(\lambda)$ and $a^-(\lambda)$ the constraints are the same as for the involution A) -- indeed
$a^\pm (\lambda)$ are time-independent. That is why many of the considerations in the previous Subsection are
directly applicable also for involution B). In particular this involves Corollary \ref{cor:4}, Remark \ref{rem:5} and
Lemma \ref{lem:2}. Of course the restrictions on $b^\pm_k(t)$ and $\tilde{ b}^\pm_k(t)$ must be modified into:

\begin{equation}\label{eq:bpmk2B}\begin{aligned}
 b^- (\lambda_s^+,t) &= (b^+ (-\lambda_s^{+,*},-t))^*, \\  b^+ (\lambda_s^-,t) &= (b^+ (-\lambda_s^{-,*},-t))^*,  \\
 b^- (\lambda_r^+,t) &= (b^+ (-\lambda_r^{+,*},-t))^*, \\  b^+ (\lambda_r^-,t) &= (b^+ (-\lambda_r^{-,*},-t))^*,
\end{aligned}\end{equation}
or with the above notations:
\begin{equation}\label{eq:bpmk2'B}\begin{aligned}
b^\pm_s (t) &= \widetilde{b}^{\mp,*}_s(-t), \\ b^\pm_r(t) &= \widetilde{b}^{\mp,*}_{r+N_2}(-t), \qquad
\widetilde{b}^\pm_r(t) = b^{\mp,*}_{r+N_2}(-t),
\end{aligned}\end{equation}
where $s=1,\dots, N_1$ and $r=N_1+1,\dots, N_1+N_2$.
In addition, from the fact that $\det T(\lambda,t)=1$ we get: that:
\begin{equation}\label{eq:Bpmk'}\begin{split}
b_k^+(t)\widetilde{b}^-_k (t)= 1, \qquad b_k^-(t)\widetilde{b}^+_k(t) = 1,
\end{split}\end{equation}
and
\begin{equation}\label{eq:Bpmk2B}\begin{aligned}
 b^\pm_s(t) &= \frac{1}{b_s^{\pm ,*}(-t)}. &\qquad \widetilde{b}^\pm_s(t) &= \frac{1}{\widetilde{b}_s^{\pm ,*}(-t)},  \\
 b^\pm_r(t) &= \frac{1}{b_{r+N_2}^{\pm ,*}(-t)}, &\qquad  \widetilde{b}^\pm_r (t)&= \frac{1}{\widetilde{b}_{r+N_2}^{\pm ,*}(-t)}.
\end{aligned}\end{equation}

\begin{theorem}\label{thm:4}
In the case of involution B) the set of action-angle variables for the nonlocal cmKdV-like equations is given by:
\begin{subequations}\label{eq:AA4}
\begin{equation}\label{eq:AA4a}\begin{aligned}
\kappa (\lambda,t) &= \frac{1}{2} \ln \frac{b^+(\lambda,t)}{b^-(\lambda,t)} = -\kappa^*(-\lambda,-t), \\
\eta(\lambda)& =-\frac{1}{\pi} \ln \left( a^+a^-(\lambda)\right) = \eta^*(-\lambda),
\end{aligned}\end{equation}
\begin{equation}\label{eq:AA4b}\begin{aligned}
\eta_s^\pm & = \mp 2i \lambda_s^\pm = 2n_s^\pm , &\quad \kappa_s^\pm (t) &=\pm i \arg b_s^\pm (t),
\end{aligned}\end{equation}
\begin{equation}\label{eq:AA4c}\begin{aligned}
\eta_r^\pm & = \mp 2i \lambda_r^\pm = \mp 2ip_r^\pm  +2s_r^\pm , \\
 \kappa_r^\pm (t) &=\pm \ln |b_r^\pm (t)| \pm i \arg b_r^\pm (t), \\
\eta_{r+N_2}^\pm & = \pm 2i \lambda_r^{\pm,*} = \pm 2ip_r^\pm  +2s_r^\pm , \\
 \kappa_{r+N_2}^\pm (t) &=\mp \ln |b_r^\pm (t)| \pm i \arg b_r^\pm(t) =-\kappa_r^{\pm,*}(-t) ,
\end{aligned}\end{equation}
\end{subequations}
where $\lambda\in \mathbb{R}$ runs over the continuous spectrum of $L$, $s=1,\dots, N_1$ and $r=N_1+1, \dots, N_1+N_2$.

\end{theorem}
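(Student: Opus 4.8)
The proof will run parallel to that of Theorem~\ref{thm:1AA} for involution A), with the changes forced by the simultaneous reversal $x\mapsto -x$, $t\mapsto -t$. First I would start from the generic action--angle variables $\eta(\lambda)$, $\kappa(\lambda,t)$, $\eta_k^\pm$, $\kappa_k^\pm(t)$ of the generic NLEE, which are canonical for $\Omega_{(0)}^\bbbc$ by eq.~(\ref{eq:9.4}), and impose on them the reduction constraints of involution B). Since $a^\pm(\lambda)$ are time independent, the constraints (\ref{eq:ab*1B}) on them coincide with those of involution A), so Corollary~\ref{cor:4} and Lemma~\ref{lem:2} carry over unchanged: the discrete spectrum splits into $N_1$ purely imaginary points $\lambda_s^\pm=\pm i n_s^\pm$ and $N_2$ quadruples $\{\lambda_r^\pm,-\lambda_r^{\pm,*}\}$. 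From (\ref{eq:ab*1B}) one reads $\eta(\lambda)=\eta^*(-\lambda)$ on the real axis together with $\eta_s^\pm=2n_s^\pm$, $\eta_r^\pm=\mp 2ip_r^\pm+2s_r^\pm$, $\eta_{r+N_2}^\pm=\pm 2ip_r^\pm+2s_r^\pm$; from the reduction relations (\ref{eq:bpmk2'B}), (\ref{eq:Bpmk2B}) for $b^\pm(\lambda,t)$ and $b_k^\pm(t)$ --- which now tie values at $t$ to values at $-t$ --- one obtains $\kappa(\lambda,t)=-\kappa^*(-\lambda,-t)$ and $\kappa_{r+N_2}^\pm(t)=-\kappa_r^{\pm,*}(-t)$, giving exactly the list (\ref{eq:AA4}).

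Second, I would check that $\Omega_{(0)}$ is canonical in these variables. Inserting the generic expansion (\ref{eq:4.10e}) of $\sigma_3\delta q(x)$ over the symplectic basis, with the discrete sum split into the $\lambda_s^\pm$ part and the $\{\lambda_r^\pm,-\lambda_r^{\pm,*}\}$ part exactly as in (\ref{eq:dqsb}), then taking the skew-scalar product of $\tfrac{i}{2}\sigma_3\delta q$ with both sides and using the second line of Table~\ref{tab:2}, gives
\begin{multline*}
\Omega_{(0)} = i\int_{-\infty}^{\infty} d\lambda\;\delta\eta(\lambda)\wedge\delta\kappa(\lambda,t)
+ i\sideset{}{^\pm}\sum_{s=1}^{N_1}\delta\eta_s^\pm\wedge\delta\kappa_s^\pm \\
+ i\sideset{}{^\pm}\sum_{r=N_1+1}^{N_1+N_2}\bigl(\delta\eta_r^\pm\wedge\delta\kappa_r^\pm+\delta\eta_{r+N_2}^\pm\wedge\delta\kappa_{r+N_2}^\pm\bigr).
\end{multline*}
Using the symmetry relations in (\ref{eq:AA4}) this can be folded onto $\lambda\in(0,\infty)$ plus discrete terms, exactly as in (\ref{eq:Ome0}); completeness of the symplectic basis then guarantees nondegeneracy, so $\Omega_{(0)}$ is canonical in $\{\eta(\lambda),\kappa(\lambda,t)\}$ together with the discrete pairs.

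Third, I would verify compatibility with the flow. By (\ref{eq:f*m}) involution B) admits only odd dispersion laws with real coefficients, and Theorem~\ref{thm:2} gives $i\eta_t=0$, $i\kappa_t=2f(\lambda)$; on $\lambda\in\bbbr$ one has
\[
\partial_t\bigl[\kappa(\lambda,t)+\kappa^*(-\lambda,-t)\bigr]= -2if(\lambda)-2i\bigl(f(-\lambda)\bigr)^* = 0,
\]
since $(f(-\lambda))^*=-f(\lambda)$ for such $f$, so the evolution preserves the reduction constraint $\kappa(\lambda,t)=-\kappa^*(-\lambda,-t)$. Reading the reduced Hamiltonian $H=\sum_p 4f_pC_{p+1}$ off the trace identities in the form (\ref{eq:C-2m-1}) --- only the odd $C_p$ survive and they depend only on the $\eta$'s --- and checking (cf.~eq.~(\ref{eq:OmedH})) that $\Omega_{(0)}(i\sigma_3 q_t,\cdot)=\delta H$ reproduces the evolution (\ref{eq:7.12c}), one concludes that the variables listed in (\ref{eq:AA4}) are genuine action--angle variables for the nonlocal cmKdV-like equations.

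The main obstacle is the bookkeeping forced by $t\mapsto -t$: unlike involution A), here the reality conditions relate scattering data at $t$ to data at $-t$, so one has to verify at once both that the canonical $2$-form is compatible with the involution and that the Hamiltonian flow --- which advances $\kappa$ linearly in $t$ --- preserves the constraint, the latter being precisely why only odd real dispersion laws are admissible. One must also keep track of the genuinely complex angle variables $\kappa_r^\pm$ attached to the off-axis quadruples, checking that they pair as $\kappa_{r+N_2}^\pm(t)=-\kappa_r^{\pm,*}(-t)$ so that $\Omega_{(0)}$ stays well defined. Everything else is a direct transcription of the proof of Theorem~\ref{thm:1AA}.
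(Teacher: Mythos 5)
Your proposal is correct and follows essentially the same route as the paper: impose the involution-B constraints (\ref{eq:ab*1B})--(\ref{eq:Bpmk2B}) on the generic canonical variables, note that the conditions on $a^\pm(\lambda)$ --- and hence Corollary~\ref{cor:4} and Lemma~\ref{lem:2} --- are unchanged from involution A), recompute $\Omega_{(0)}$ via the symplectic-basis expansion and Table~\ref{tab:2}, and conclude from the four observations (canonicity, completeness, $H$ depending only on the $\eta$'s, and $\Omega_{(0)}(i\sigma_3 q_t,\cdot)=\delta H_{(0)}$); your explicit check that the flow preserves $\kappa(\lambda,t)=-\kappa^*(-\lambda,-t)$ for odd real dispersion laws is a welcome detail the paper leaves implicit. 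One small slip: for an odd dispersion law $f(\lambda)=\sum_k f_{2k+1}\lambda^{2k+1}$ the Hamiltonian $H=\sum_k 4f_kC_{k+1}$ involves the \emph{even}-index integrals $C_{2k+2}$ (the paper takes $H_{(0)}=16iC_4$ for the cmKdV), not the odd $C_{2m-1}$ of eq.~(\ref{eq:C-2m-1}); this does not affect the argument since every $C_p$ depends only on the action variables.
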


\begin{proof}
The proof is similar to the one of Theorem \ref{thm:1AA}. The only difference is that now we have to
apply involution B) on the symplectic forms $\Omega_{(m)}$ and on the scattering data. The
expressions for the action variables are $t$-independent, so they are the same as in (\ref{eq:xxx2}). 
Again we will split $\kappa(\lambda)$ and $\eta(\lambda)$ into real and imaginary parts:
\begin{equation}\label{eq:xxx2B}\begin{split}
\kappa(\lambda) & = \kappa_0(\lambda)+i \kappa_1(\lambda), \qquad \eta(\lambda) = \eta_0(\lambda) + i \eta_1(\lambda) ,
\end{split}\end{equation}
with
\begin{equation}\label{eq:xxx3}\begin{aligned}
 \kappa_0(\lambda) &= \frac{1}{2} \ln \left| \frac{b^+(\lambda,t)}{b^-(\lambda,t)}\right|  = \frac{1}{2} \ln \left| \frac{b^+(\lambda,t)}{b^+(-\lambda,-t)}\right|, \\
 \kappa_1(\lambda) &= \frac{1}{2} \left( \arg (b^+(\lambda,t) - \arg (b^-(\lambda,t)\right), \\
 &= \frac{1}{2} \left( \arg (b^+(\lambda,t) + \arg (b^+(-\lambda^*,-t)\right).
\end{aligned}\end{equation}

Similarly as above we  express the canonical symplectic form  $\Omega_0$ in terms of the action-angle variables as follows:
\begin{multline}\label{eq:Ome0B}
\Omega_{(0)} \equiv \frac{i}{2} \biglb \sigma_3 \delta q \wedgecomma \sigma_3\delta q\bigrb \\
= -2 \int_{0}^{\infty} d\lambda\; \left(\delta\eta_0(\lambda) \wedge \delta \kappa_1(\lambda) +\delta\eta_1(\lambda) \wedge \delta \kappa_0(\lambda)\right) \\
- \sideset{}{^\pm}\sum_{s=1}^{N_1} \delta\eta_s^\pm \wedge  \delta\arg b_s^\pm \\
+ i \sideset{}{^\pm}\sum_{r=N_1+1}^{N_1+N_2} \left( \delta\eta_r^\pm \wedge  \delta\kappa_r^\pm
- \delta\eta_{r}^{\pm*} \wedge  \delta \kappa_{r}^{\pm,*} \right) ,
\end{multline}
where we made use of eqs. (\ref{eq:AA3}).

Now the complete integrability follows from the  easy observations:\\
i) the equation:
\begin{equation}\label{eq:OmedH2}\begin{split}
\Omega_{(0)} (i \sigma_3 q_t, \cdot) = \delta H_{(0)}
\end{split}\end{equation}
where $H_{(0)}= 16 iC_4$ coincides with the cmKdV; \\
ii) The symplectic form $\Omega_{(0)}$ is canonical with respect to the set of action-angle variables
$\{ \eta(\lambda), \kappa(\lambda)\}$; \\
$H_{(0)} $ depends only on the action-type variables $\eta(\lambda)$. \\
iv) The set of symplectic basis is complete.

\end{proof}
We end by expressing also the `odd' hierarchy of symplectic forms in terms of the action-angle variables:
\begin{multline}\label{eq:OmeE2}
\Omega_{(2m-1)} \equiv \frac{i}{2} \biglb \sigma_3 \delta q \wedgecomma \Lambda^{2m-1} \sigma_3\delta q\bigrb \\
= 2i \int_{0}^{\infty} d\lambda\;  \lambda^{2m-1} \left(\delta\kappa_0(\lambda) \wedge \delta \eta_0(\lambda) -\delta\kappa_1(\lambda) \wedge \delta \eta_1(\lambda) \right)  \\
+ i c_{m}\sideset{}{^\pm} \sum_{s=1}^{N_1} \left(\pm \delta (\arg b_s^\pm) \wedge  \delta (\eta_s^\pm )^{2m} \right) \\
+ 2i c_{m}  \sideset{}{^\pm}\sum_{r=N_1+1}^{N_1+N_2} \left( \delta (\ln |b_r^\pm|) \wedge  \delta y_{1,m}^\pm
+ \delta (\arg b_r^\pm ) \wedge  \delta y_{0,m}^\pm  \right),
\end{multline}
where
\[ c_{1,m} =  \frac{(-1)^m }{ 4^{m} m}, \qquad y_{0,m}^\pm +i y_{1,m}^\pm = (2 s_r^\pm \mp 2ip_r^\pm)^{2m}.\]

\section{Discussion and conclusions}

We have proved that the nonlocal equations from the hierarchy of the nonlocal NLS equations
are infinite-dimensional, completely integrable systems. The proof is based on the so-called completeness
relation for the squared solutions of the Lax operator and the symplectic basis, which allows one to map
directly the variation of the potential $\sigma_3\delta q(x)$ into the variations of the action-angle variables.

The spectral properties of the Lax operator (see Corollary \ref{cor:4} and Remark \ref{rem:5})  allow us to conclude that the hierarchy of non-local
equations has two types of soliton solutions, see \cite{AblMus2,Val}. Their properties and interactions will be analyzed elsewhere.

The corresponding action-angle variables are parametrized by the scattering data of $L$. More specifically
we have complex-valued action variables $\eta(\lambda)$ and $\kappa(\lambda)$
related to the continuous spectrum of $L$ (i.e., for $\lambda\in \mathbb{R}$) and two finite sets of action-angle
variables: $\eta_s^\pm$ and $\kappa_s^\pm$ related to the solitons of first type ($s=1,\dots, N_1$) and
$\eta_r^\pm$, $\eta_r^{\pm,*}$ and $\kappa_r^\pm$, $\kappa_r^{\pm,*}$ related to the solitons of second type ($r=N_1+1,\dots, N_1+N_2$).

These action-angle variables are universal in the sense that they render canonical each of the symplectic forms in the hierarchy 
$\Omega_{(m)}$. The action-angle variables here are understood in the sense that the action variables are $t$-independent,
while the angle variables depend linearly on $t$. The requirement that the angle variables are in the interval $[0, 2\pi]$
cannot be ensured for all $\kappa(\lambda)$ and $\kappa_k^\pm$. 

\section*{Acknowledgements}
This work was supported in part by the U.S. Department of Energy. We thank an anonymous referee for
careful reading of the manuscript and for useful suggestions.

\end{document}